\newif\ifarxiv
\newif\ifjournal
\numberwithin{equation}{section}
\def\eqref#1{(\ref{#1})}
\numberwithin{equation}{section}
\definecolor{labelkey}{rgb}{0,0,.75}
\definecolor{MyGreen}{rgb}{0,.6,.2}
\definecolor{MyDarkBlue}{rgb}{.1,.1,.75}
\newcommand{\customlabel}[2]{%
\protected@write \@auxout {}{\string \newlabel {#1}{{#2}{}{}}}}
\date{June 4, 2014}
\DeclareMathOperator{\ck}{\bf L}
\DeclareMathOperator{\Lap}{\Delta}
\let\tr\relax
\DeclareMathOperator{\tr}{\rm tr}
\DeclareMathOperator{\Lie}{\mathcal{L}}
\def\ip<#1,#2>{\left<#1,#2\right>}
\renewcommand{\div}{\mathop{\rm div}\nolimits}
\newcommand{\ra}{\rightarrow}
\let\Im\relax
\DeclareMathOperator{\Im}{\rm Im}
\let\ovl\overline
\newcommand{\calL}{\mathcal{L}}
\newcommand{\calN}{\mathcal{N}}
\newcommand{\calM}{\mathcal{M}}
\newcommand{\calD}{\mathcal{D}}
\newcommand{\calC}{\mathcal{C}}
\newcommand{\calX}{\mathcal{X}}
\newcommand{\calY}{\mathcal{Y}}
\newcommand{\bfsigma}{\boldsymbol{\sigma}}
\begin{document}
\title{The Conformal Method and the Conformal Thin-Sandwich Method Are the Same}
\author{David Maxwell}

\ifjournal
\newtheorem{theorem}{Theorem}[section]
\newtheorem{proposition}[theorem]{Proposition}
\newtheorem{definition}{Definition}
\newtheorem{lemma}[theorem]{Lemma}
\else
\newtheorem{theorem}{Theorem}[section]
\newtheorem{conjecture}[theorem]{Conjecture}
\newtheorem{problem}{Problem}
\newtheorem{definition}{Definition}
\newtheorem{proposition}[theorem]{Proposition}
\newtheorem{corollary}[theorem]{Corollary}
\newtheorem{lemma}[theorem]{Lemma}
\fi

\maketitle

\begin{abstract}
The conformal method developed in the 1970s and
the more recent Lagrangian and Hamiltonian conformal thin-sandwich methods
are techniques for finding solutions of the Einstein constraint equations.
We show that they are manifestations of a single conformal method: there is a 
straightforward way to convert back and forth between the parameters 
for these methods so that the corresponding solutions of the Einstein
constraint equations agree.  The unifying idea is the need
to clearly distinguish tangent and cotangent vectors to the space of
conformal classes on a manifold, and we introduce a vocabulary
for working with these objects without reference to a particular
representative background metric. As a consequence of these conceptual
advantages, we demonstrate how to strengthen previous near-CMC existence
and non-existence theorems for the original conformal method to include
metrics with scalar curvatures that change sign.
\end{abstract}

\section{Introduction}

Initial data for the (vacuum) Cauchy problem in general relativity consist of
a Riemannian manifold $(M^n, g_{ab})$ and a symmetric tensor $K_{ab}$
that will become the induced metric and second fundamental form
of an embedding of $M^n$ into a Ricci flat Lorentzian spacetime determined
from the initial data $(g_{ab},K_{ab})$. The Gauss and Codazzi relations (along
with the fact that the ambient spacetime is Ricci flat) impose the following
compatibility conditions on the Cauchy data:
\ifjournal
\numparts
\begin{eqnarray}\label{eq:constraints}
R_{g} - |K|_{g}^2 + (\tr_g K)^2 = 0 &\qquad\text{\small[Hamiltonian constraint]}\label{eq:hamiltonian}\\
\div_{g} K = d\tau&\qquad\text{\small[momentum constraint]}\label{eq:momentum}
\end{eqnarray}
\endnumparts
\else
\begin{subequations}\label{eq:constraints}
\begin{alignat}{2}
R_{g} - |K|_{g}^2 + (\tr_g K)^2 &= 0 &\qquad&\text{\small[Hamiltonian constraint]}\label{eq:hamiltonian}\\
\div_{g} K &= d\tau&\qquad&\text{\small[momentum constraint]}\label{eq:momentum}
\end{alignat}
\end{subequations}\fi
where $\tau = g^{ab}K_{ab}$ is the mean curvature.  
Choquet-Bruhat showed \cite{FouresBruhat:1952wg} that there exists a solution of the Cauchy problem 
if and only if the initial data satisfy the Einstein constraint equations 
\ifjournal\eqref{eq:hamiltonian}-\eqref{eq:momentum}\else\eqref{eq:constraints}\fi,
so finding solutions of the constraint equations is a fundamental problem in general relativity.

In 1944, Lichnerowicz \cite{Lichnerowicz:1944} initiated an approach for finding solutions of the constraint
equations, the so-called conformal method.  Extensions of this method (due to York and his collaborators, 
as described below)
are now the principal techniques used to construct solutions from scratch. For these methods, one starts with a 
Riemannian metric $g_{ab}$ 
and attempts to construct a solution $(\ovl g_{ab}, \ovl K_{ab})$ of the constraints where
$\ovl g_{ab}$ is conformally related to $g_{ab}$ via some conformal factor to be determined as part of the solution.
The mean curvature $\tau = \ovl g^{ab} \ovl K_{ab}$ is also freely specified so
\begin{equation}
\ovl K_{ab} = \ovl A_{ab} + \frac{2}{n}\tau\,\ovl g_{ab}
\end{equation}
where $\ovl A_{ab}$ is a trace-free tensor that is determined, as part of the solution procedure, 
from $g_{ab}$, $\tau$, and other auxiliary data specific to the particular conformal method.

Lichnerowicz's original conformal method constructed solutions with $\tau \equiv 0$.
In the early '70s York proposed an extension of the method that
allows one to specify $\tau = \tau_0$ for an arbitrary constant $\tau_0$ \cite{York:1973fl}, and
with \'O Murchadha described a further extension to arbitrary mean curvatures
\cite{OMurchadha:1974bf}.
In this paper, we refer to the prescription from \cite{York:1973fl}
as the CMC conformal method, and its extension in \cite{OMurchadha:1974bf} as 
the 1974 conformal method. The data for this method consist of a metric $g_{ab}$, 
a symmetric trace-free tensor $\sigma_{ab}$ satisfying $\nabla^a \sigma_{ab}=0$, 
and a mean curvature $\tau$.

Somewhat later York proposed a Lagrangian conformal thin-sandwich (CTS) approach \cite{YorkJr:1999jo},
and subsequently with Pfeiffer described a Hamiltonian formulation of the CTS method 
\cite{Pfeiffer:2003ka}.\footnote{We distinguish here between the CTS method and the so-called \textit{extended} 
CTS method described at the end of \cite{Pfeiffer:2003ka}, which is a nontrivial
modification of the CTS method and has unsatisfactory uniqueness properties \cite{Pfeiffer:2005iz}.  
We do not treat the extended CTS method.}
The Lagrangian method starts from initial data $(g_{ab},u_{ab},\tau, N)$ where $u_{ab}$ is an arbitrary 
trace-free symmetric tensor and $N$ is a positive function related to a parameter (the lapse) that appears in 
the $n$+1 formulation of the Cauchy problem.  The CTS method has the virtue that given
conformal data $(g_{ab}, u_{ab}, \tau, N)$ and a second conformally related metric $\tilde g_{ab}$,
there is a way to conformally transform the remaining data to form
$(\tilde g_{ab}, \tilde u_{ab}, \tau, \tilde N)$ such that the set of solutions of the constraints
associated with the original data and the transformed data are the same.  The property is 
known as conformal covariance (or sometimes conformal invariance), and is 
a property that is shared with the CMC conformal
method but that is apparently absent for the 1974 conformal method.  The ability
to select a background metric within the conformal class satisfying some desired property 
(e.g. a metric with a scalar curvature that has constant sign) is a powerful tool, and
this has occasionally lead to theorems that are stronger when using the CTS approach 
versus the 1974 conformal method.

The purpose of this note is to clarify the relationship between the CTS methods and the 
1974 conformal method: they are the same.
Specifically, there is a way to translate,
in a straightforward and essentially unique way, between 1974 conformal 
data $(g_{ab}, \sigma_{ab}, \tau)$ and 
CTS data $(\hat g_{ab}, \hat u_{ab}, \tau, \hat N)$ 
such that the corresponding solutions of the constraint equations are the same.
The significance of this result arises from the fact that 
if $\tau$ is not constant, then it is generally unknown
how many solutions are associated with 1974 conformal data 
$(g_{ab}, \sigma_{ab}, \tau)$.  One hopes that there is exactly one, except perhaps
for some well defined set of data where there is none.  But from
\cite{Maxwell:2011if} we have examples 
showing that there can be more than one solution, and evidence
that the set for which there is no solution may be difficult to describe.
Since the 1974 and CTS methods are the same, these deficiencies apply equally
to both methods.  Or, from a more positive perspective,
we see that any result that can be proved for one method can be translated into
an equivalent theorem about the other method.
Questions of which data yield no solutions, or exactly one solution, or multiple solutions of the constraints
can be formulated using whichever method is convenient.  Moreover, since the CTS method
is conformally covariant, \textit{so is} the 1974 conformal method when handled correctly.

A remark that the methods are identical (along with a sketch of the equivalence)
was made by the author in \cite{Maxwell:2011if}, which explored conformal parameterizations
of certain far-from CMC data using the CTS framework, and which asserted that the results 
of that paper translate to results for the 1974 conformal method.  Aside from the remark in 
\cite{Maxwell:2011if}, the fact that the methods are the same does
not seem to appear in the literature (although \cite{Pfeiffer:2003ka} comes
very close, but stops short and seems
to have the perspective that the methods are different).
Since the mathematics literature in recent years has seen progress toward understanding 
the 1974 conformal method for non-CMC conformal data 
(e.g. \cite{Holst:2009ce} \cite{Maxwell:2009co} 
\cite{Dahl:2012ef}), and since physicists tend to use the CTS method, 
it seems useful to have a guide 
for how to translate results between the methods.
  Moreover, there 
are instances in both mathematics
and physics publications where the methods are asserted to be different,
or where results are proved for the 1974 method that are weaker than analogous 
results for the CTS method.  The following examples illustrate  how improvements
can be realized by taking advantage of the equivalence.
\begin{itemize}
\item In \cite{Isenberg:1996fi} and
\cite{Allen:2008ef} the 1974 approach is used to generate near-CMC solutions of the 
constraint equations, but under the restriction that the scalar curvature
of the metric has constant sign.  These theorems admit generalizations
to the CTS method, and we will see in Section \ref{sec:apps} that
these can be used to establish similar near-CMC 
results for the 1974 method for metrics with an arbitrary scalar curvature.
\item Reference \cite{Isenberg:2004jd} contains nonexistence theorems for the conformal methods
for certain non-CMC data.
Theorem 2 (framed in the 1974 conformal framework) is weaker than Theorem 3 (which uses
the CTS framework) and it is asserted that the gap is related to the lack 
of conformal covariance of the 1974 method. However, the 1974 method is covariant and
we show in Section \ref{sec:apps} that Theorem 2 can be improved to be just as strong as Theorem 3.
\item In \cite{Dahl:2012ef}, Theorem 1.7 proves a variation of the nonexistence result of
\cite{Isenberg:2004jd} Theorem 2 using the 1974 conformal method approach; 
it can similarly be strengthened by taking advantage
of the equivalence of the 1974 and CTS methods.\footnote{We note in press
that the recent preprint \cite{Ahn:2014} provides an alternative proof that also strengthens \cite{Dahl:2012ef}
Theorem 1.7.}
\item The numerical relativity text \cite{Baumgarte:2010vs}
presents the 1974 conformal method (called there the conformal
transverse traceless decomposition) and the CTS method
as different techniques, with a different number of 
specifiable parameters for each method.  We show here how
to translate back and forth between the parameters of the two methods; 
knowing that the parameterizations are the same gives insight into both methods.
\end{itemize}

A secondary goal of this paper is to formulate the parameters of
the various conformal methods in terms of objects at the level of the set $\calC$ 
of conformal classes rather than the set $\calM$ of metrics on $M$.  I.e., 
we work with conformal classes,
tangent and cotangent vectors to $\calC$, and conformal Killing operators defined
in terms of conformal classes rather than representative metrics.  Doing so
can be thought of as a coordinate-free approach to understanding the parameters.
Motivated by diffeomorphism invariance of the Einstein equations, 
we also give a clear interpretation of these parameters as objects associated
with $\calC/\calD_0$, where $\calD_0$ is the connected component of the identity
of the diffeomorphism group.

While a coordinate-free perspective is implicitly present in some of
the physics literature, it clarifies matters to make it explicit.  
For example, it turns out that it is crucial to make a distinction between the 
tangent space $T_{\mathbf g} \calC$ and the cotangent space $T_{\mathbf g}^* \calC$ 
to $\calC$ at some conformal class $\mathbf{g}$.
Unlike the case for $\calM$, there is no natural way to identify tangent 
vectors as cotangent vectors, but there is a natural family of identifications.
In the CTS method, a choice from this family
is specified via the lapse.  In the 1974 method the choice
is specified less explicitly, and this is perhaps the reason why
it is not obvious at first glance that the 1974 and CTS methods are the same.
Most variations of the conformal method use the metric parameter
$g_{ab}$ simply to determine the conformal class $[g_{ab}]$
of the solution metric. For the 1974 conformal method, however,
the choice of $g_{ab}$ specifies both $[g_{ab}]$ and a
choice of identification of $T_{\mathbf{g}} \calC$ with
$T_{\mathbf{g}}^* \calC$.  Changing the 
representative metric in the 1974 approach is equivalent to changing the
lapse in the CTS approach, and working with the parameters in a coordinate-free way 
helps make this relationship clear.

Although the coordinate-free formulation provides insight, it also introduces
an extra layer of abstraction.  Readers who are already familiar with the conformal
methods, and who wish to skip over this abstraction, can jump to the end of Section \ref{sec:same}
where there are concise recipes, in familiar tensorial terms, for how to convert parameters
between the various conformal methods.  These recipes describe mechanically \textit{how} the methods are the same;
we hope that the coordinate-free approach taken elsewhere in the paper
illuminates \textit{why} the methods are the same.  

The remainder of the paper proceeds as follows.  
In Section \ref{sec:conformal} we establish coordinate-free language for describing
conformal objects, and Sections \ref{sec:CMC} through \ref{sec:CTSH} formulate
each of the various conformal methods in terms of this language.  In Section
\ref{sec:same} we establish the equivalence of all these methods, and
Section \ref{sec:apps} uses this equivalence in some applications.

\subsection{Notation}

Throughout we assume that $M^n$ is a compact, connected, oriented $n$-manifold with $n\ge 3$.
The set of smooth functions on $M$ is $C^\infty(M)$, and the positive
smooth functions are denoted by $C^\infty_+(M)$.
Given a bundle $E$ over $M$, we write $C^\infty(M,E)$ for smooth sections
of the bundle.  The bundle of symmetric $(0,2)$-tensors is $S_2 M$,
and $\calM$ is the set of smooth metrics on $M$ (i.e., the open
set of positive-definite elements of $C^\infty(M,S_2 M)$).  The
bundle of conformal classes of smooth metrics is $\calC$. All
objects in this paper are smooth.

We use a modified form of abstract index notation. 
Indices are used for tensorial objects
to clarify the number and
type of arguments, to help with contraction operations, and so forth, 
but are not associated with
the components of the tensor with respect to some specific coordinate
system. Whenever indices might clutter notation (e.g. when
the tensor is used as a subscript) we freely drop the indices.
So for a metric $g_{ab}$, the name of the metric is $g$ and
the indices are a helpful decoration to be used when they do not
get in the way.

The Levi-Civita connection of a metric $g_{ab}$ is $\nabla$ or $\tensor[^{g}]{\nabla}{}$ as needed,
and its (positively-oriented) volume form is $dV_g$.
Given a metric $g_{ab}$ and a function $\phi\in C^\infty_+(M)$,
we can form a conformally related metric 
\begin{equation}\label{eq:confchange}
\tilde g_{ab} = \phi^{q-2} g_{ab}
\end{equation} where $q$ is the 
dimensional constant
\begin{equation}\label{eq:q}
q = \frac{2n}{n-2}.
\end{equation}
All conformal transformations in this paper will have the form
\eqref{eq:confchange} since the scalar curvature $R_{\tilde g}$
of $\tilde g$ then has the simple form
\begin{equation}\label{eq:scalchange}
R_{\tilde g} = \phi^{1-q} (-2\kappa q \Lap_g \phi + R_g \phi),
\end{equation}
where $\Lap_g$ is the Laplacian of $g$ and
$\kappa$ is the dimensional constant
\begin{equation}\label{eq:kappa}
\kappa = \frac{n-1}{n}.
\end{equation}
Although the appearance of $\kappa$ in equation \eqref{eq:scalchange}
is somewhat awkward, it appears naturally throughout the 
equations connected to the conformal method, so we introduce this notation now.

The conformal class of $g_{ab}$ is $[g_{ab}]$.  When
we do not want to emphasize some particular representative
of the conformal class we use bold face instead: $\mathbf g$
denotes a conformal class as well. A boldface $\mathbf{g}$ 
and a plain $g$ are unrelated names, so an equation such as
$[g_{ab}]=\mathbf{g}$ is a nontrivial statement. Tangent and cotangent 
vectors to $\calC$ will be written with boldface as well.

Starting from a metric $g_{ab}$ and other conformal data
the conformal methods attempt to find a solution of the constraints
with a metric conformally related to $g_{ab}$.  We use overbars to 
denote conformally transforming objects that
satisfy the constraint equations, so $\ovl g_{ab}$ is the physical
solution metric.

\section{Conformal Objects}\label{sec:conformal}

Our goal here is to express the objects that appear in the various
conformal methods intrinsically with respect to a conformal class
rather than with respect to a representative metric. The set $\calC$ of smooth 
conformal classes can be shown to be a Fr\'echet manifold,
which provides a natural definition of tangent and cotangent vectors
at some conformal class $\mathbf{g}$.
To avoid this machinery, however, we take a more prosaic approach 
and define tangent and (certain) cotangent vectors to $\calC$ 
at $\mathbf g$
as tensorial objects that transform in a certain way when changing from 
one representative of $\mathbf{g}$ to another.  This is analogous to 
the old-fashioned approach of defining a manifold's tangent and cotangent vectors
as objects that transform in a certain way under coordinate changes.
We also give a related description of tangent and 
cotangent vectors to $\calC/\calD_0$ where $\calD_0$ is the
connected component of the identity of the diffeomorphism group.

\subsection{Conformal Tangent Vectors}\label{subsec:tangent}
Let $g^{0}_{ab}$ be a metric and let $g_{ab}(t)$ be a smooth path
with $g_{ab}(0)=g^0_{ab}$.  It is easy to see that if $g_{ab}(t)$ remains
in the conformal class $[g^0_{ab}]$ then there is function
$\alpha\in C^\infty(M)$ such that
\begin{equation}
g'_{ab}(0) = \alpha g^0_{ab}.
\end{equation}
Moreover, every smooth function arises this way for
some path (e.g. $g_{ab}(t) = e^{\alpha t}g^0_{ab}$).  So we identify
\begin{equation}
\left\{\alpha g_{ab}^0: \alpha\in C^{\infty}(M)\right\}
\end{equation}
as the tangent space of the conformal class $\left[g_{ab}^0\right]$ at $g^0_{ab}$.
Given an arbitrary path starting at $g^0_{ab}$ we can 
uniquely decompose
\begin{equation}
g'_{ab}(0) = u_{ab} + \alpha g^0_{ab}
\end{equation}
where $u_{ab}$ is trace-free with respect to $g^0_{ab}$ and
$\alpha\in C^\infty(M)$. It is therefore natural
to identify the trace-free tensors $u_{ab}$ as the directions of travel through the set of conformal classes.
Given a smooth function $\beta(t)$, the paths $g_{ab}(t)$ and 
\begin{equation}
\tilde g_{ab}(t) = e^{\beta(t)} g_{ab}(t)
\end{equation}
descend to the same path in $\calC$. Since
$
\tilde g'_{ab}(0) = e^{\beta} u_{ab}  + (\beta' e^{\beta} +\alpha) g_{ab}^0
$
we therefore identify 
$u_{ab}$ at $g_{ab}$ 
and
$e^{\beta} u_{ab}$ at $e^\beta g_{ab}$
as representing the same tangent vector
to $\calC$ at $\left[g_{ab}^0\right]$.
\begin{definition}\label{def:tangent}
Let $\calX$ be the set of pairs $(g_{ab}, u_{ab})$
where $g_{ab}\in \calM$ and where $u_{ab}\in C^\infty(M,S_2 M)$
is trace-free with respect to $g_{ab}$.
A \textbf{conformal tangent vector} is an element of
$
\calX / \sim
$
where 
\begin{equation}
(\tilde g_{ab},\tilde u_{ab}) \sim (g_{ab},u_{ab})
\end{equation}
if there exists $\phi\in C_+^\infty(M)$ such that
\begin{equation}
\ifjournal\eqalign{
\tilde g_{ab} &= \phi^{q-2} g_{ab}\\
\tilde u_{ab} &= \phi^{q-2} u_{ab}.}
\else
\begin{aligned}
\tilde g_{ab} &= \phi^{q-2} g_{ab}\\
\tilde u_{ab} &= \phi^{q-2} u_{ab}.
\end{aligned}
\fi
\end{equation}
We use the following notation:
\begin{itemize}
\item $[ g_{ab},u_{ab}]$ is the conformal tangent vector
corresponding to $(g_{ab},u_{ab})$.
\item For each $\mathbf{g}\in \calC$, $T_\mathbf{g} \calC$ 
is the set of conformal tangent vectors $[g_{ab},u_{ab}]$ with
$g_{ab}\in \mathbf{g}$.
\item $T \calC = \cup_{\mathbf{g}\in \calC} T_{\mathbf{g}}\calC$.
\end{itemize}
More generally, if $g_{ab}$ is a metric and $S_{ab}$ is an arbitrary 
symmetric $(0,2)$-tensor field, we define
\begin{equation}
[g_{ab}, S_{ab}] = [g_{ab}, u_{ab}]
\end{equation}
where $u_{ab}$ is the trace-free part of $S_{ab}$ (as computed with respect to $g_{ab}$).
This should be thought of as the pushforward of the tangent vector $S_{ab}$ to the
space of metrics at $g_{ab}$ to an element of $T_{[g]}\calC$ under the natural projection.
\end{definition}

Suppose $\mathbf{g}$ is a conformal class and $\mathbf{u}\in T_{\mathbf{g}} \calC$.
Given a representative $g_{ab}\in \mathbf{g}$ it is clear that there is a unique
trace-free $u_{ab}\in C^\infty(M,S_2 M)$ with $[g_{ab}, u_{ab}]=\mathbf{u}$, which we will call
the \textbf{representative of $\mathbf u$ with respect to $g_{ab}$}.  We give
$T_{\mathbf{g}} \calC$ the topology of the subspace of $C^\infty(M,S_2 M)$ 
determined by this identification and note that the topology
is independent of the choice of representative $g_{ab}$.

\subsection{The Conformal Killing Operator}
The conformal tangent vectors that arise by flowing a
conformal class $\mathbf{g}$ through a path of diffeomorphisms can be described in terms of a map
$L_{\mathbf g}:C^\infty (M,TM)\ra T_{\mathbf g}\calC$ called the conformal Killing operator.

Let $\mathbf g$ be a conformal class with representative $g_{ab}$ and 
suppose $\Phi_t$ is a path of diffeomorphisms starting at the identity with infinitesimal 
generator $X^a$.
The path
\begin{equation}
h_{ab}(t) = \Phi_t^* g_{ab},
\end{equation}
satisfies
\begin{equation}
h_{ab}'(0) = \calL_{X} g_{ab},
\end{equation}
where $\calL_X g_{ab}$ is the Lie derivative of $g_{ab}$ with respect
to the vector field $X^a$.  We decompose $\calL_X g_{ab}$ into its trace and trace-free
parts with respect to $g_{ab}$ to obtain
\begin{equation}
\calL_X g_{ab} = (\ck_g X)_{ab} + \frac{2\div_g X}{n} g_{ab}
\end{equation}
where $\div_g X = \tensor[^g]{\nabla}{_a} X^a$ and where
\begin{equation}
(\ck_g X)_{ab} = \tensor[^g]{\nabla}{_a} X_b + \tensor[^g]{\nabla}{_b} X_a -\frac{2\div_g X}{n} g_{ab}
\end{equation}
is the usual conformal Killing operator. The conformal tangent vector
\begin{equation}
\mathbf{u} = [g_{ab}, (\ck_{g} X)_{ab}].
\end{equation}
does not depend on the choice of representative of $\mathbf{g}$.
Indeed, an easy computation shows that if 
$\tilde g_{ab}$ is another metric conformally related to
$g_{ab}$ via $\tilde g_{ab} = \phi^{q-2} g_{ab}$ then 
\begin{equation}
\ck_{\tilde g} X =  \phi^{q-2} \ck_{g} X
\end{equation}
and consequently
\begin{equation}
[\tilde g_{ab}, (\ck_{\tilde g} X)_{ab}]=[g_{ab}, (\ck_g X)_{ab}] = \mathbf{u}.
\end{equation}
We therefore obtain a well-defined conformal Killing operator
$\ck_{\mathbf{g}}: C^\infty(M,TM) \ra T_{\mathbf g}\calC$ 
given by
\begin{equation}
\ck_{\mathbf{g}} X =  [g_{ab},\ck_{g}X].
\end{equation}
for any representative $g_{ab}$ of $\mathbf{g}$.  Since the map 
$\ck_g :  C^\infty(M,TM) \ra C^\infty(M,S_2 M)$ is continuous, and 
since the projection $u_{ab} \ra [g_{ab}, u_{ab}]$ is continuous,
so is $\ck_{\mathbf{g}}$. 

The elements of the kernel of $\ck_{\mathbf{g}}$
are called \textbf{conformal Killing fields}.  Generically there are none \cite{Fischer:1977wm}.

\subsection{Conformal Cotangent Vectors}\label{subsec:cotangent}
The conformal method involves symmetric, trace-free, (0,2)-tensors $S_{ab}$ that obey
the conformal transformation law
\begin{equation}
\tilde A_{ab} = \phi^{-2} A_{ab}
\end{equation}
when $g_{ab}$ is transformed to $\tilde g_{ab} = \phi^{q-2}g_{ab}$.
Such objects are not tangent vectors to $\calC$
since the wrong power of $\phi$
appears in the transformation law; rather, these encode 
cotangent vectors as follows.

Given the pair $(g_{ab},A_{ab})$ we define a functional
on symmetric trace free tensors $u_{ab}$ via
\begin{equation}
F_{g,A}(u) = \int_M \ip<A,u>_g dV_g.
\end{equation}
If we conformally transform $u_{ab}$ as a tangent vector
\begin{equation}
\tilde u_{ab} = \phi^{q-2} u_{ab}
\end{equation}
and we transform 
\begin{equation}\label{eq:cotangent}
\tilde A_{ab} = \phi^{-2} A_{ab}
\end{equation}
then
\begin{equation}
\ip< \tilde A, \tilde u>_{\tilde g} = \phi^{2-q}\phi^{2-q}\ip<\phi^{-2} A, \phi^{q-2}u>_{g} = \phi^{-q}\ip<A,u>_g.
\end{equation}
At the same time, the volume form transforms as $dV_{\tilde g} = \phi^q dV_g$ and therefore
\ifjournal
\begin{eqnarray}
F_{\tilde g, \tilde A}(\tilde u) =
\int_{M} \ip<\tilde A,\tilde u>_{\tilde g} dV_{\tilde g} =\\
\qquad=  \int_{M} \phi^{-q}\ip<A,u>_g \phi^q dV_g = \int_M \ip<A,u>_g dV_g = F_{g,A}(u).
\end{eqnarray}
\else
\begin{equation}
F_{\tilde g, \tilde A}(\tilde u) =
\int_{M} \ip<\tilde A,\tilde u>_{\tilde g} dV_{\tilde g} = \int_{M} \phi^{-q}\ip<A,u>_g \phi^q dV_g
= \int_M \ip<A,u>_g dV_g = F_{g,A}(u).
\end{equation}
\fi
Thus we can associate with $(g_{ab},A_{ab})$ a well-defined functional on $T_{[g]}\calC$
when we transform $A_{ab}$ according to equation \eqref{eq:cotangent}.
\begin{definition}\label{def:cotangent}
Let $\calY$ be the set of pairs $(g_{ab}, A_{ab})$ where $g_{ab}\in\calM$ and
where $A_{ab}\in C^\infty(M,S_2 M)$ is trace-free with respect to $g_{ab}$.
A \textbf{(smooth) conformal cotangent vector} is an 
equivalence class of $\calY/\sim$ under the relation
\begin{equation}
(\tilde g_{ab},\tilde A_{ab}) \sim (g_{ab},A_{ab})
\end{equation}
if there is a smooth positive function $\phi$ on $M$ such that
\begin{equation}
\ifjournal\eqalign{
\tilde g_{ab} &= \phi^{q-2} g_{ab}\\
\tilde A_{ab} &= \phi^{-2} A_{ab}.	
}\else
\begin{aligned}
\tilde g_{ab} &= \phi^{q-2} g_{ab}\\
\tilde A_{ab} &= \phi^{-2} A_{ab}.
\end{aligned}\fi
\end{equation}
We use the following notation:
\begin{itemize}
\item $[ g_{ab},A_{ab}]^*$ is the conformal cotangent vector
corresponding to $(g_{ab},A_{ab})$.
\item For each $\mathbf{g}\in \calC$, $T^*_\mathbf{g} \calC$ 
is the set of conformal tangent vectors $[g_{ab},A_{ab}]^*$ with
$g_{ab}\in \mathbf{g}$.
\item $T^*\calC = \cup_{\mathbf{g}\in \calC} T^*_{\mathbf{g}}\calC$.
\end{itemize}

More generally, if $K_{ab}$ is an arbitrary element of $C^\infty(M, S_2 M)$ we define
\begin{equation}
[g_{ab}, K_{ab}]^* = [g_{ab}, A_{ab}]^*
\end{equation}
where $A_{ab}$ is the trace-free part of $K_{ab}$ (with respect to $g_{ab}$).

Given a smooth conformal tangent vector $\mathbf{A}\in T^*_{\mathbf{g}}\calC$ and a conformal tangent vector
$\mathbf{u}\in T_{\mathbf{g}}\calC$, we define
\begin{equation}
\ip<\mathbf{A},\mathbf{u}> = \int \ip<A,u>_g\;dV_g
\end{equation}
where $(g_{ab}, A_{ab})$ and $(g_{ab},u_{ab})$ are any representatives
of $\mathbf{A}$ and $\mathbf{u}$ with respect to the same background metric $g_{ab}\in\mathbf{g}$.
This linear map is evidently continuous, so we identify $T_{\mathbf{g}}^* \calC$ 
with a subspace of $(T_{\mathbf{g}}\calC)^*$.

\end{definition}
The containment $T^*_{\mathbf{g}}\calC \subseteq (T_{\mathbf{g}}\calC)^*$
is strict since the topological dual space contains more general distributions,
which motivates the modifier \textit{smooth} in the previous definition.
Given that we represent conformal tangent vectors using symmetric trace-free
$(0,2)$-tensor fields, it may be more natural to represent conformal cotangent vectors
using symmetric trace-free $(2,0)$-tensor fields.  To this end, we also define
\begin{equation}
[g_{ab}, A^{ab}]^* = [g_{ab}, g_{ac}g_{bd} A^{cd}]^*.
\end{equation}
It is easy to see that if $\tilde g_{ab} = \phi^{q-2} g_{ab}$, then
\begin{equation}
[\tilde g_{ab}, \tilde A^{ab}]^* = [g_{ab}, A^{ab}]^*
\end{equation}
if and only if $\tilde A^{ab} = \phi^{2-2q}A^{ab}$, which recovers another 
familiar transformation law for the conformal method.  Symmetric trace-free
tensors transforming according to $\tilde A_{ab}=\phi^{-2} A_{ab}$
or $\tilde A^{ab} = \phi^{2-2q} A^{ab}$ are both representations of
conformal cotangent vectors.

The distinction between tangent and cotangent vectors is important
because unlike the situation for the space $\calM$ of metrics,
we do not have a canonical identification of tangent and cotangent
vectors for $\calC$.
The tangent space of $\calM$ at a metric $g_{ab}$ is 
$T_g\calM = C^\infty(M,S_2 M)$ and is equipped with a
natural metric defined by
\begin{equation}
\ip<h,k> = \int_M \ip<h,k>_g dV_g.
\end{equation}
The metric provides a natural identification of 
$T_g\calM$ with a subspace of $(T_g\calM)^*$
by taking to $h_{ab}$ to $\ip<h,\cdot>$.
Unfortunately, this inner product does not descend
to an inner product on $T_{[g]}\calC$, and we
do not have a canonical way to identify $T_{[g]} \calC$ 
with a subspace of $(T_{[g]}\calC)^*$.
Instead,
we have a family of identifications depending on the choice of
a volume form on $M$.

\begin{proposition}\label{prop:TvsTstar}
Let $\omega$ be a smooth volume form on $M$ (i.e., a nonvanishing, 
positively-oriented section of $\Lambda^n M$).  There is a unique linear map
$k_{\omega}: T\calC \ra T^*\calC$ satisfying the following:
\begin{itemize}
\item For each $\mathbf{g}\in \calC$, $k_{\omega}:T_{\mathbf{g}}\calC \ra T_{\mathbf{g}}^*\calC$
is continuous and bijective.
\item If $\mathbf{u},\mathbf{v}\in T_{\mathbf{g}} \calC$, and if 
$u_{ab}$ and $v_{ab}$ are their representatives with respect to some
common background metric $g_{ab}\in\mathbf g$, then
\begin{equation}\label{eq:TTstartindent}
\ip<k_{\omega}(\mathbf{u}),\mathbf{v}> = \int_M \ip<u,v>_g \,\omega
\end{equation}
\item If $\mathbf{g}$ is a conformal class
and $\mathbf{u} \in T_{\mathbf g} \calC$
with representative $(g_{ab}, u_{ab})$ then
\begin{equation}\label{eq:ttstarlapse}
k_\omega(\mathbf{u}) = [g_{ab}, (\omega/dV_g)\, u_{ab}]^*.
\end{equation}

\end{itemize}
\end{proposition}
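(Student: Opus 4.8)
The plan is to verify that the formula \eqref{eq:ttstarlapse} makes sense (is independent of the representative $g_{ab}$), that the resulting map satisfies \eqref{eq:TTstartindent}, and finally that these two properties pin down $k_\omega$ uniquely. First I would check well-definedness of \eqref{eq:ttstarlapse}. Pick two representatives related by $\tilde g_{ab} = \phi^{q-2} g_{ab}$, so that the conformal tangent vector $\mathbf u$ has representatives $u_{ab}$ and $\tilde u_{ab} = \phi^{q-2} u_{ab}$. The volume forms satisfy $dV_{\tilde g} = \phi^q\, dV_g$, hence $\omega/dV_{\tilde g} = \phi^{-q}(\omega/dV_g)$. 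Therefore the candidate cotangent representative with respect to $\tilde g$ is $(\omega/dV_{\tilde g})\,\tilde u_{ab} = \phi^{-q}\phi^{q-2}(\omega/dV_g)u_{ab} = \phi^{-2}\big((\omega/dV_g)u_{ab}\big)$, which is exactly the cotangent transformation law from Definition \ref{def:cotangent}. So $[\tilde g_{ab}, (\omega/dV_{\tilde g})\tilde u_{ab}]^* = [g_{ab}, (\omega/dV_g) u_{ab}]^*$ and \eqref{eq:ttstarlapse} unambiguously defines a map $k_\omega : T\calC \to T^*\calC$ sending $T_{\mathbf g}\calC$ into $T^*_{\mathbf g}\calC$. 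Linearity on each fiber is immediate since the representative of $\mathbf u + \mathbf v$ with respect to a fixed $g_{ab}$ is $u_{ab} + v_{ab}$ and multiplication by the fixed function $\omega/dV_g$ is linear.

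Next I would verify the pairing formula \eqref{eq:TTstartindent}. Fix a common background $g_{ab}$ and let $u_{ab}, v_{ab}$ represent $\mathbf u, \mathbf v$. By \eqref{eq:ttstarlapse}, $k_\omega(\mathbf u)$ is represented with respect to this same $g_{ab}$ by $A_{ab} := (\omega/dV_g)u_{ab}$. By the definition of the pairing in Definition \ref{def:cotangent} (using the common background $g_{ab}$),
\begin{equation}
\ip<k_\omega(\mathbf u), \mathbf v> = \int_M \ip<A,v>_g\, dV_g = \int_M (\omega/dV_g)\ip<u,v>_g\, dV_g = \int_M \ip<u,v>_g\,\omega,
\end{equation}
which is \eqref{eq:TTstartindent}. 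Continuity of $k_\omega : T_{\mathbf g}\calC \to T^*_{\mathbf g}\calC$ follows because, after fixing a representative $g_{ab}$, both spaces are identified with subspaces of $C^\infty(M, S_2M)$ and $k_\omega$ is multiplication by the fixed smooth function $\omega/dV_g$; bijectivity follows because $\omega/dV_g \in C^\infty_+(M)$ is nowhere zero, so multiplication by $dV_g/\omega$ is a continuous inverse (and it preserves trace-freeness with respect to $g_{ab}$).

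Finally, uniqueness: suppose $k$ is any linear map $T\calC \to T^*\calC$ with $k(T_{\mathbf g}\calC) \subseteq T^*_{\mathbf g}\calC$ satisfying \eqref{eq:TTstartindent}. For $\mathbf u \in T_{\mathbf g}\calC$, the element $k(\mathbf u) - k_\omega(\mathbf u) \in T^*_{\mathbf g}\calC$ pairs to zero against every $\mathbf v \in T_{\mathbf g}\calC$. Fixing a representative $g_{ab}$ and writing the difference as $[g_{ab}, B_{ab}]^*$ with $B_{ab}$ trace-free, this says $\int_M \ip<B,v>_g\, dV_g = 0$ for all trace-free $v_{ab} \in C^\infty(M, S_2M)$; taking $v_{ab} = B_{ab}$ gives $\int_M |B|_g^2\, dV_g = 0$, hence $B_{ab} \equiv 0$, so $k(\mathbf u) = k_\omega(\mathbf u)$. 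The main obstacle, such as it is, is purely bookkeeping: keeping the powers of $\phi$ straight across the three transformation laws (tangent vectors gain $\phi^{q-2}$, the volume form gains $\phi^q$, cotangent vectors gain $\phi^{-2}$) so that they cancel correctly in the well-definedness step — everything else is a direct consequence of the definitions already established and the non-degeneracy of the pointwise inner product on trace-free symmetric tensors. I would also remark that \eqref{eq:ttstarlapse} is equivalent to the $(2,0)$-form $k_\omega(\mathbf u) = [g_{ab}, (dV_g/\omega)\, u^{ab}]^*$, which is occasionally the more convenient expression.
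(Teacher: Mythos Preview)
Your proof is correct and essentially the same as the paper's, just organized in reverse: the paper defines $k_\omega$ via the pairing formula \eqref{eq:TTstartindent}, checks it is well-defined, and then deduces the explicit formula \eqref{eq:ttstarlapse}, whereas you start from \eqref{eq:ttstarlapse}, check well-definedness, and deduce \eqref{eq:TTstartindent}. Either order works and the verifications are identical; your bijectivity and uniqueness arguments are also fine (the paper does injectivity via positivity of $\ip<k_\omega(\mathbf u),\mathbf u>$ and surjectivity by exhibiting the inverse, which amounts to the same thing).

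One slip in your closing remark: the $(2,0)$ form should read $k_\omega(\mathbf u) = [g_{ab}, (\omega/dV_g)\, u^{ab}]^*$, not $(dV_g/\omega)$. Raising indices on $A_{ab} = (\omega/dV_g)u_{ab}$ with $g^{ab}$ gives $A^{ab} = (\omega/dV_g)u^{ab}$, and one can check this transforms with the required power $\phi^{2-2q}$.
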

\begin{proof}
We define $k_\omega$ by equation \eqref{eq:TTstartindent} and need to show that
it is well-defined and has the stated properties.  To see that it is 
well-defined, suppose $g_{ab}$ and $\tilde g_{ab} = \phi^{q-2} g_{ab}$ are two
representatives of a conformal class $\mathbf{g}$ and suppose 
$\mathbf{u},\mathbf{v}\in T_{\mathbf{g}} \calC$.  Let $u_{ab}$, $\tilde u_{ab}$,
$v_{ab}$, and $\tilde v_{ab}$ be the representatives of $\mathbf{u}$ and $\mathbf{v}$
with respect to $g_{ab}$ and $\tilde g_{ab}$, so $\tilde u_{ab} = \phi^{q-2} u_{ab}$
and similarly with $\tilde v_{ab}$.  Then
\begin{equation}
\ip<\tilde u, \tilde v>_{\tilde g} = \phi^{4-2q}\ip< \tilde u, \tilde v>_g = 
\phi^{4-2q}\ip< \phi^{q-2} u, \phi^{q-2} v>_g = \ip<u,v>_g.
\end{equation}
Thus
\begin{equation}
\int_{M} \ip<\tilde u,\tilde v>_{\tilde g} \,\omega = \int_{M} \ip<u,v>_g \,\omega 
\end{equation}
and $k_\omega$ is well-defined.

It is clear that $k_\omega(\mathbf{u}) \in (T_{\mathbf g} \calC)^*$.  To see
that it belongs to $T^*_{\mathbf g} \calC$, pick a representative $g_{ab}\in \mathbf{g}$,
let 
\begin{equation}\label{eq:uToS}
\mathbf{A} = [g_{ab}, (\omega/dV_g)\, u_{ab}]^*.
\end{equation}
Then for any $\mathbf v\in T_{\mathbf{g}}\calC$
\begin{equation}
\ip<\mathbf A,\mathbf v> = \int_M \ip<(\omega/dV_g)\,u,v>_g dV_g = \int_M \ip<u,v>_g \omega = \ip<k_\omega(\mathbf{u}),\mathbf{v}>
\end{equation}
and hence $k_\omega(\mathbf{u}) = \mathbf{A}\in T^*_\mathbf{g}\calC$.

To see that $k_\omega$ is bijective as a map into $T^*_\mathbf{g}\calC$ we first note that
$\ip<k_\omega(\mathbf{u}),\mathbf{u}>>0$ unless $\mathbf{u}=0$ and $k_\omega$ is 
therefore injective.
Consider an arbitrary $\mathbf{A}\in T^*_\mathbf{g}\calC$ and
write $\mathbf{A} = [g_{ab}, A_{ab}]^*$.  The previous 
computation shows that $k_{\omega}([g_{ab}, (dV_g/\omega) A_{ab}]) = 
\mathbf{A}$ so $k_\omega$ is surjective as well.  The continuity of $k_\omega$
is a straightforward consequence of the fact that the right-hand side of
\eqref{eq:TTstartindent} defines a continuous map $C^0(M,S_2M)\ra(C^0(M,S_2M))^*$
and the continuity of the embeddings $C^\infty(M,S_2 M)\hookrightarrow C^0(M,S_2M)$ and
$(C^0(M,S_2M))^* \hookrightarrow (C^\infty(M,S_2 M))^*$.
\end{proof}

A metric $g_{ab}$ determines a volume form $dV_{g}$ and there is a one-to-one correspondence
between metrics and pairs $(\mathbf{g},\omega)$ of conformal classes and volume forms.  So
the choice of a volume form $\omega$ in Proposition \ref{prop:TvsTstar} can be thought of,
at least when working with $T_{\mathbf{g}}\calC$ for some fixed $\mathbf{g}$,
as a choice of representative metric within the conformal class.  

\subsection{The Divergence}
We have previously defined the conformal Killing operator associated with a 
conformal class $\mathbf{g}$, 
\begin{equation}
\ck_{\mathbf{g}} : C^\infty (M,TM) \ra T_{\mathbf{g}}\calC.
\end{equation}
This is a continuous linear map, and hence we obtain a continuous adjoint
\begin{equation}
\ck_{\mathbf g}^* : (T_{\mathbf g}\calC)^* \ra  (C^\infty(M,TM))^*
\end{equation}
given by
\begin{equation}
\ip< \ck_{\mathbf{g}}^* (F), X> = \ip<F, \ck_{\mathbf{g}} X>.
\end{equation}
We define the divergence
\begin{equation}
\div_{\mathbf g} = -\frac{1}{2} \ck_{\mathbf g}^*
\end{equation}

Note that if $\mathbf{A} = [g_{ab}, A_{ab}]^*$ is a smooth cotangent vector and $X^a$ is a smooth
vector field then, using 
the definition of $\ck_{[g]}$ and
integration by parts, we find
\begin{equation}\label{eq:div}
\ifjournal\eqalign{
\ip< \div_{[g]} \mathbf{A}, X>  &= -\frac{1}{2}\ip< \mathbf{A}, \ck_{[g]} X>\\
&= -\frac{1}{2} \ip< [g_{ab},A_{ab}]^*, [g_{ab}, \ck_g X]>\\
&= -\frac{1}{2} \int \ip<A,\ck_g X>_g dV_g\\
&= \int (\div_g A)_a X^a dV_g.	
}\else
\begin{aligned}
\ip< \div_{[g]} \mathbf{A}, X>  &= -\frac{1}{2}\ip< \mathbf{A}, \ck_{[g]} X>\\
&= -\frac{1}{2} \ip< [g_{ab},A_{ab}]^*, [g_{ab}, \ck_g X]>\\
&= -\frac{1}{2} \int \ip<A,\ck_g X>_g dV_g\\
&= \int (\div_g A)_a X^a dV_g.
\end{aligned}\fi
\end{equation}

\subsection{Quotients Modulo Flows}\label{subsec:tt}

The space of \textbf{conformal geometries}, sometimes called \textbf{conformal superspace},
is the quotient of $\calC$ obtained by identifying conformal classes
if there is a flow taking one to another.  We will write this quotient
symbolically as $\calC/\calD_0$ (here $\calD_0$ is the connected component of the identity
of the diffeomorphism group). Because the Einstein equations are diffeomorphism
invariant, the space $\calC/\calD_0$ is more fundamental than $\calC$, and it will
be important to work with tangent and cotangent vectors to this space.

Suppose we have a curve $\gamma$ 
of conformal classes obtained by a flow. Its tangent vector at $\mathbf{g}=\gamma(0)$
is then $\ck_{\mathbf g} X$ for some vector field $X^a$.
Since $\gamma$ descends to a stationary curve in $\calC/\calD_0$ the directions $\Im \ck_{\mathbf{g}}$
become null directions in $\calC/\calD_0$,  which motivates the following.
\begin{definition}\label{def:cgv} Let $\mathbf g\in \calC$.  The space
of \textbf{conformal geometric velocities} at the conformal geometry represented by
$\mathbf g$ is the quotient space 
\begin{equation}
T_{\mathbf g}\calC/ \Im \ck_{\mathbf g}.
\end{equation}
The conformal geometric velocity represented by a conformal tangent vector $\mathbf u$
is the subspace
\begin{equation}
[\mathbf u] = \mathbf u + \Im \ck_{\mathbf{g}}
\end{equation}
of $T_{\mathbf g} \calC$.
We write $T_{\mathbf g} (\calC/\calD_0)$ for the set of conformal velocities at the conformal
geometry represented by $\mathbf g$. 
\end{definition}

Note we are deliberately avoiding working with equivalence classes $[\mathbf g]$ of conformal
classes under flows, and that each representative of $[\mathbf g]$ gives a
representation $T_{\mathbf g} (\calC/\calD_0)$
of an object that would be written as $T_{[\mathbf g]}(\calC/\calD_0)$.

Every conformal tangent vector $\mathbf u \in T_{\mathbf g} \calC$ naturally determines
the conformal geometric velocity $\mathbf u + \Im \ck_{\mathbf g}$.  
Fixing a representative metric $g_{ab}$ of $\mathbf g$, the conformal geometric velocities
at $\mathbf g$ are naturally identified with the subspaces
\begin{equation}
u_{ab} + \Im \ck_{g}
\end{equation}
of $C^\infty(M,S_2 M)$
where $u_{ab}$ is trace-free with respect to $g_{ab}$.

Elements of the dual space $(T_{\mathbf g}\calC/ \Im \ck_{\mathbf g})^*$ can be represented as
elements of the subspace of $(T_{\mathbf g}\calC)^*$ that annihilate
 $\Im \ck_{\mathbf g}$.  Restricting our attention
to those elements that are also \textit{smooth} conformal cotangent vectors 
we have the following.
\begin{definition} Let $\mathbf g\in \calC$.  The space
of \textbf{conformal geometric momenta} at the conformal geometry represented by
$\mathbf g$ is  
the subspace of $T^*_{\mathbf g}\calC$ consisting of those elements
that vanish on $\Im \ck_{\mathbf g}$. We denote this subspace by
$T^*_\mathbf{g} (\calC/\calD_0)$.
\end{definition}

The subspace $T^*_\mathbf{g} (\calC/\calD_0)\subseteq T^*_\mathbf  g\calC$ of conformal
geometric momenta can be characterized in terms of the divergence $\div_\mathbf g$,
and this leads to the notion of a 
transverse traceless tensor.
A symmetric tensor $\sigma_{ab}$ is said to be 
\textbf{transverse traceless} (TT) with respect to a metric $g_{ab}$ if
it is traceless,  
\begin{equation}
g^{ab}\sigma_{ab}=0,
\end{equation}
and transverse,
\begin{equation}
\div_{g} \sigma = 0.
\end{equation}
Lichnerowicz observed \cite{Lichnerowicz:1944} that TT tensors
behave well with respect to conformal transformations:
if $\sigma_{ab}$ is TT with respect to
$g_{ab}$, then $\tilde\sigma_{ab} = \phi^{-2}\sigma_{ab}$
is TT with respect to $\tilde g_{ab} = \phi^{2-q} g_{ab}$.
From this conformal transformation law we identify
\begin{equation}
\bfsigma = [g_{ab}, \sigma_{ab}]^*
\end{equation}
as a smooth conformal cotangent vector.  Moreover, equation \eqref{eq:div}
implies that $\div_{[g]} \bfsigma = 0$.  The following easy lemma
shows that the TT tensors
represent the smooth conformal cotangent vectors that annihilate the image of the
conformal Killing operator; we omit the proof.

\begin{lemma}\label{lem:whatsaTT} For $\bfsigma\in T^*_{\mathbf{g}}\calC$ the following are equivalent.
\begin{enumerate}
\item $\div_{\mathbf{g}} \bfsigma = 0$.
\item For all smooth vector fields $X^a$
\begin{equation}
\ip< \bfsigma, \ck_{\mathbf{g}} X> = 0.
\end{equation}
\item For some $g_{ab}$ and $\sigma_{ab}$ with 
$\bfsigma=[g_{ab},\sigma_{ab}]^*$, $\sigma_{ab}$
is TT with respect to $g_{ab}$.
\item For all $g_{ab}$ and $\sigma_{ab}$ with 
$\bfsigma=[g_{ab},\sigma_{ab}]^*$, $\sigma_{ab}$
is TT with respect to $g_{ab}$.
\end{enumerate}
\end{lemma}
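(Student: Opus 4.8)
The plan is to establish all four equivalences by unwinding definitions: handle $(1)\Leftrightarrow(2)$ immediately from the definition of the divergence, and reduce the rest to a single computation of the pairing $\ip<\bfsigma,\ck_{\mathbf g}X>$ in a representative metric.

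First I would dispose of $(1)\Leftrightarrow(2)$. By the definition $\div_{\mathbf g}=-\tfrac12\ck_{\mathbf g}^*$ and of the adjoint, $\div_{\mathbf g}\bfsigma$ is the functional $X\mapsto-\tfrac12\ip<\bfsigma,\ck_{\mathbf g}X>$ on $C^\infty(M,TM)$; this functional vanishes identically precisely when $\ip<\bfsigma,\ck_{\mathbf g}X>=0$ for every smooth vector field $X^a$. Nothing else is needed here.

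Next, for the remaining equivalences, I would fix an arbitrary representative $g_{ab}\in\mathbf g$ and let $\sigma_{ab}$ be the unique trace-free representative of $\bfsigma$ with respect to $g_{ab}$ --- noting that $[g_{ab},\sigma_{ab}]^*$ depends only on the trace-free part of $\sigma_{ab}$, so the content of conditions $(3)$ and $(4)$ is really a statement about these trace-free representatives (adding a pure-trace piece is excluded by the tracelessness clause in the definition of TT). Repeating the computation in \eqref{eq:div} gives
\[
\ip<\bfsigma,\ck_{\mathbf g}X>=\int_M\ip<\sigma,\ck_g X>_g\,dV_g=-2\int_M(\div_g\sigma)_aX^a\,dV_g
\]
for every $X^a$. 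Hence $(2)$ is equivalent to $\int_M(\div_g\sigma)_aX^a\,dV_g=0$ for all smooth $X^a$, which, since $\div_g\sigma$ is a smooth one-form, forces $\div_g\sigma=0$ (a vector field bumped up near any point where $\div_g\sigma$ failed to vanish would produce a nonzero integral). As $\sigma_{ab}$ is trace-free by construction, $\div_g\sigma=0$ says exactly that $\sigma_{ab}$ is TT with respect to $g_{ab}$. Crucially, $g_{ab}$ here was an \emph{arbitrary} element of $\mathbf g$.

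The conclusion then follows cleanly: condition $(2)$ is intrinsic to $\bfsigma$ (it refers to no representative), and we have just shown it is equivalent to ``$\sigma_{ab}$ is TT with respect to $g_{ab}$'' for each choice of representative $g_{ab}\in\mathbf g$; therefore $(2)$, $(3)$, and $(4)$ are all equivalent. Alternatively, $(3)\Leftrightarrow(4)$ can be read off directly from Lichnerowicz's conformal-invariance observation recorded just before the lemma, since the trace-free representatives attached to two conformally related metrics differ by the factor preserving the TT property, while $(4)\Rightarrow(3)$ is trivial and $(3)\Rightarrow(2)$ uses the equivalence above. I do not anticipate a genuine obstacle anywhere; the only points deserving a sentence of care are the implicit reduction of $(3)$ and $(4)$ to trace-free representatives and the standard fundamental-lemma-of-the-calculus-of-variations step used above.
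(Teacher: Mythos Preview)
Your proof is correct. The paper actually omits the proof of this lemma entirely (stating it as an easy consequence of the definitions), so there is nothing to compare against; your argument via the definition of $\div_{\mathbf g}$ and the computation \eqref{eq:div} is exactly the routine verification the paper has in mind.
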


As a consequence of Lemma \ref{lem:whatsaTT} we have shown
\begin{equation}
T^*_{\mathbf g}(\calC/\calD_0) = \{\bfsigma\in T^*\calC: \div_{\mathbf g} \sigma = 0\}.
\end{equation}
So transverse traceless tensors are the representations, in terms of a background metric,
of conformal geometric momenta.

We have seen that a conformal tangent vector $\mathbf u$ naturally defines
a conformal geometric velocity $\mathbf u + \Im \ck_{\mathbf g} \in T_\mathbf g(\calC/\calD_0)$.
On the other hand, an arbitrary conformal cotangent vector $\mathbf{A}$ does \textbf{not} naturally determine
a conformal geometric momentum: this would require a choice of projection from
$T^*_\mathbf g\calC$  onto the subspace $T^*_\mathbf g(\calC/\calD_0)$. Our next goal is
to describe a family of such projections that are closely related to the maps $k_\omega$
from Proposition \ref{prop:TvsTstar}.  To begin, we recall the following result 
from \cite{York:1973fl}, which is a fundamental component of the 1974 conformal method.
\begin{proposition}[York Splitting]\label{prop:york-split} Let $g_{ab}$ be a smooth
Riemannian metric on $M$ and let $A_{ab}$ 
be a smooth, trace-free, symmetric (0,2)-tensor field.
Then there is a smooth TT tensor field
$\sigma_{ab}$ and a smooth vector field $X^a$ such
that
\begin{equation}\label{eq:yorksplit}
A_{ab} = \sigma_{ab} + (\ck_{g} X)_{ab}.
\end{equation}
This decomposition is unique up to the addition of a conformal Killing field to $X^a$.
\end{proposition}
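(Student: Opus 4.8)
The plan is to reduce the existence of the decomposition to the solvability of a single linear elliptic equation for $X^a$, namely $\div_g \ck_g X = \div_g A$, and then to read off $\sigma_{ab}$ as the remainder. First I would observe that the composition $\div_g\circ \ck_g : C^\infty(M,TM)\ra C^\infty(M,TM)$ is a second-order, formally self-adjoint linear differential operator: self-adjointness (up to sign) is immediate from the integration-by-parts identity $\int_M \ip<\ck_g Y, \ck_g X>_g \dV_g = -2\int_M \ip<\div_g \ck_g X, Y>_g \dV_g$ behind \eqref{eq:div}, and its principal symbol is, up to a positive constant, that of the Hodge Laplacian on one-forms, so it is elliptic. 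On the compact manifold $M$ an elliptic formally self-adjoint operator has finite-dimensional kernel, and the Fredholm alternative (applied on appropriate Sobolev completions, e.g.\ $H^{s+2}\ra H^s$) shows that $\div_g \ck_g X = f$ is solvable exactly when $f$ is $L^2$-orthogonal to $\ker(\div_g\ck_g)$. That kernel is precisely the space of conformal Killing fields: if $\div_g \ck_g Y = 0$, then pairing with $Y$ and integrating by parts forces $\int_M |\ck_g Y|_g^2 \dV_g = 0$, hence $\ck_g Y = 0$.

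Second, I would verify the solvability condition for $f = \div_g A$. For any conformal Killing field $Y^a$, equation \eqref{eq:div} (equivalently, direct integration by parts) gives $\int_M \ip<\div_g A, Y>_g \dV_g = -\tfrac12 \int_M \ip<A,\ck_g Y>_g \dV_g = 0$ since $\ck_g Y = 0$. Thus $\div_g A$ lies in the range of $\div_g\ck_g$, so a solution $X^a$ exists; a priori it lies in some Sobolev class, but elliptic regularity together with the smoothness of $A$ bootstraps $X$ to be smooth. Setting $\sigma_{ab} := A_{ab} - (\ck_g X)_{ab}$, we have $\tr_g\sigma = 0$ because $A_{ab}$ is trace-free and $(\ck_g X)_{ab}$ is trace-free by construction, and $\div_g\sigma = \div_g A - \div_g \ck_g X = 0$ by the choice of $X$; hence $\sigma_{ab}$ is TT and \eqref{eq:yorksplit} holds.

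For uniqueness, suppose $\sigma_{ab} + (\ck_g X)_{ab} = \sigma'_{ab} + (\ck_g X')_{ab}$. Then $(\ck_g(X-X'))_{ab} = \sigma'_{ab} - \sigma_{ab}$ is simultaneously in the image of $\ck_g$ and TT (a difference of TT tensors); applying $\div_g$ gives $\div_g\ck_g(X-X') = 0$, so by the kernel computation above $X-X'$ is a conformal Killing field and $\ck_g(X-X') = 0$, whence $\sigma' = \sigma$. Thus the decomposition is unique up to adding a conformal Killing field to $X^a$.

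The main obstacle is the elliptic-theory input: confirming that $\div_g\ck_g$ is elliptic and marshalling the Fredholm alternative plus elliptic regularity on the right function spaces; once that machinery is in place, the rest is a short integration-by-parts argument. If one prefers to avoid invoking the Fredholm alternative directly, an alternative route is variational: minimize $X \mapsto \int_M |A - \ck_g X|_g^2 \dV_g$ over an $H^1$-type space modulo conformal Killing fields, using a Korn-type coercivity estimate $\|\ck_g X\|_{L^2}^2 \ge c\,\|X\|_{H^1}^2$ on that quotient (with $c>0$) to obtain a minimizer whose Euler–Lagrange equation is exactly $\div_g\ck_g X = \div_g A$; smoothness and uniqueness then follow as above.
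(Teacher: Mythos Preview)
Your argument is correct and is the standard proof of York's decomposition. Note, however, that the paper does not actually prove this proposition: it is stated as a result recalled from \cite{York:1973fl}, with no proof given. So there is no ``paper's own proof'' to compare against; you have simply supplied what the paper omits. The approach you take---reduce to the elliptic equation $\div_g\ck_g X = \div_g A$, identify the kernel as conformal Killing fields via integration by parts, verify the Fredholm solvability condition, and read off $\sigma_{ab}$ as the TT remainder---is exactly the classical argument, and your uniqueness paragraph is clean.
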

Notice that the right-hand side of equation \eqref{eq:yorksplit} does not have a
natural interpretation as a conformal object: it is the sum of a representative 
$\sigma_{ab}$ of a conformal cotangent vector
with a representative $(\ck_g X)_{ab}$ of a conformal tangent vector.
Adding these together requires an identification of $T_{\mathbf{g}}\calC$ with  
$T^*_{\mathbf{g}}\calC$.  We can reformulate Proposition \ref{prop:york-split}, however,
in terms of conformal objects 
using the maps $k_\omega$ defined in Proposition \ref{prop:TvsTstar}.

\begin{proposition}\label{prop:yorksplitconf}
Let $\mathbf{g}\in\calC$ and let $\mathbf{A}\in T^*_\mathbf{g} \calC$.  Given a choice of a volume form $\omega$,
there is a conformal geometric momentum $\bfsigma$ and a vector field $X^a$ such that
\begin{equation}\label{eq:yorksplitconf}
\mathbf{A} = \bfsigma + k_\omega( \ck_\mathbf{g} X),
\end{equation}
where $k_\omega$ is the map defined in Proposition \ref{prop:TvsTstar}. 
The decomposition is unique up the the addition of a conformal Killing field to $X^a$.

Moreover, if $g_{ab}\in \mathbf{g}$ is the unique metric with $dV_g = \omega$, and
if $A_{ab}$ and $\sigma_{ab}$ are the representatives of 
$\mathbf{A}$ and $\bfsigma$ with respect to $g_{ab}$, then
\begin{equation}\label{eq:Asplitconf}
A_{ab} = \sigma_{ab} + (\ck_{g} X)_{ab}.
\end{equation}
\end{proposition}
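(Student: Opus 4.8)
The plan is to reduce the whole statement to the tensorial York splitting of Proposition \ref{prop:york-split} by working throughout with the single representative of $\mathbf{g}$ that is adapted to $\omega$. Recall from the remark following Proposition \ref{prop:TvsTstar} that there is a unique metric $g_{ab}\in\mathbf{g}$ with $dV_g=\omega$. First I would fix this $g_{ab}$, let $A_{ab}$ be the representative of $\mathbf{A}$ with respect to it, and apply Proposition \ref{prop:york-split} to the pair $(g_{ab},A_{ab})$. This produces a TT tensor $\sigma_{ab}$ and a vector field $X^a$ with $A_{ab}=\sigma_{ab}+(\ck_g X)_{ab}$, which is already equation \eqref{eq:Asplitconf}, so the ``moreover'' clause will come for free once the conformal-object identity is established.

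Next I would set $\bfsigma=[g_{ab},\sigma_{ab}]^*$ and check that it is a conformal geometric momentum: by Lemma \ref{lem:whatsaTT}, the fact that $\sigma_{ab}$ is TT with respect to $g_{ab}$ is equivalent to $\div_\mathbf{g}\bfsigma=0$, i.e. to $\bfsigma\in T^*_\mathbf{g}(\calC/\calD_0)$. The only remaining point is to identify the second summand of \eqref{eq:yorksplitconf}. Since $dV_g=\omega$, the ratio $\omega/dV_g$ is identically $1$, so equation \eqref{eq:ttstarlapse} gives $k_\omega(\ck_\mathbf{g}X)=k_\omega\big([g_{ab},(\ck_g X)_{ab}]\big)=[g_{ab},(\ck_g X)_{ab}]^*$. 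Using linearity of $[g_{ab},\,\cdot\,]^*$ in its second slot,
\[
\bfsigma+k_\omega(\ck_\mathbf{g}X)=[g_{ab},\sigma_{ab}+(\ck_g X)_{ab}]^*=[g_{ab},A_{ab}]^*=\mathbf{A},
\]
which is \eqref{eq:yorksplitconf}. Since $\sigma_{ab}$ and $A_{ab}$ are by construction the representatives with respect to the metric with $dV_g=\omega$, the ``moreover'' statement holds as well.

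For uniqueness, suppose $\mathbf{A}=\bfsigma'+k_\omega(\ck_\mathbf{g}X')$ is a second such decomposition, with $\bfsigma'$ a conformal geometric momentum. Passing to representatives with respect to the same $g_{ab}$ and again using $\omega/dV_g\equiv 1$ gives $A_{ab}=\sigma'_{ab}+(\ck_g X')_{ab}$, where the representative $\sigma'_{ab}$ of $\bfsigma'$ is TT with respect to $g_{ab}$ by Lemma \ref{lem:whatsaTT}. The uniqueness clause of Proposition \ref{prop:york-split} then forces $\sigma'_{ab}=\sigma_{ab}$ and $X'=X+Y$ for some conformal Killing field $Y^a$; hence $\bfsigma'=\bfsigma$ and $X'$ differs from $X$ only by an element of $\ker\ck_\mathbf{g}$.

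I do not expect a genuine obstacle: the mathematical content is entirely carried by Proposition \ref{prop:york-split}, and the only thing to be careful about is the bookkeeping that makes the conformal-object identity \eqref{eq:yorksplitconf} collapse onto the tensorial identity \eqref{eq:Asplitconf} — namely choosing the representative metric so that $\omega/dV_g=1$, which is precisely the condition under which $k_\omega$ acts as the identity between the matching representatives of a tangent vector and its image cotangent vector. It may also be worth remarking explicitly that the construction does not depend on the particular splitting furnished by Proposition \ref{prop:york-split}, since $g_{ab}$ is uniquely pinned down by the pair $(\mathbf{g},\omega)$.
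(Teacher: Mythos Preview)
Your proposal is correct and follows essentially the same approach as the paper: both arguments fix the unique representative $g_{ab}\in\mathbf{g}$ with $dV_g=\omega$, use equation \eqref{eq:ttstarlapse} with $\omega/dV_g\equiv 1$ to reduce $k_\omega(\ck_\mathbf{g}X)$ to $[g_{ab},(\ck_g X)_{ab}]^*$, and then invoke Proposition \ref{prop:york-split}. You are slightly more explicit than the paper in spelling out the uniqueness argument and in citing Lemma \ref{lem:whatsaTT} to verify that $\bfsigma$ is a conformal geometric momentum, but these are matters of exposition rather than substance.
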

\begin{proof}
Let $g_{ab}$ be the unique metric in $\mathbf{g}$ with $dV_g = \omega$.  
We wish to write equation \eqref{eq:yorksplitconf} in terms of representatives
with respect to $g_{ab}$.  

Proposition \ref{prop:TvsTstar} equation \eqref{eq:ttstarlapse}
implies that for any $[g_{ab},v_{ab}]\in T_{\mathbf{g}} \calC$,
\begin{equation}
k_\omega(\mathbf{v}) = [g_{ab}, v_{ab}]^*;
\end{equation}
this is the step where we use the specific choice of $g_{ab}$.
In particular, for a vector field $X^a$,
\begin{equation}
k_{\omega}(\ck_{\mathbf{g}} X) = [g_{ab}, (\ck_g X)_{ab}]^*.
\end{equation}
Hence equation \eqref{eq:yorksplitconf} is equivalent to finding
a TT tensor $\sigma_{ab}$ and vector field $X^a$ such that
\begin{equation}
[g_{ab},A_{ab}]^* = [g_{ab},\sigma_{ab}]^* + [g_{ab},(\ck_{g} X)_{ab}]^*,
\end{equation}
where $A_{ab}$ is the tensor field such that $\mathbf{A} = [g_{ab}, A_{ab}]^*$.
In other words, we wish to solve
\begin{equation}
A_{ab} = \sigma_{ab} + (\ck_{g} X)_{ab}.
\end{equation}
and the result now follows from Proposition \ref{prop:york-split}.
\end{proof}

For each choice of volume form $\omega$, Proposition
\ref{prop:yorksplitconf} determines a projection from the space 
of conformal cotangent vectors onto the subspace of conformal
geometric momenta.
\begin{definition}\label{def:proj}
Let $\omega$ be a volume form. For all $\mathbf g\in \calC$, the projection
 $P_\omega: T^*_{\mathbf g} \calC \ra T^*_{\mathbf g} (\calC/\calD_0)$
is defined by
\begin{equation}
P_\omega(\mathbf{A}) = \bfsigma
\end{equation}
where $\bfsigma$ is the unique conformal geometric momentum determined by equation \eqref{eq:yorksplitconf}.
\end{definition}

The maps $k_\omega$ each determine identifications of $T_\mathbf{g}\calC$ with 
$T^*_\mathbf{g}\calC$. Using the projections $P_\omega$ we can now define related
identifications $j_\omega:T_\mathbf{g} (\calC/\calD_0)\ra T_\mathbf{g}^* (\calC/\calD_0)$
that satisfy
\ifjournal
\begin{equation}\label{eq:ij}
\xymatrix{T_{\mathbf g} \calC \ar[d]_{\pi} & \ar[l]_{k_\omega^{-1}} T_{\mathbf g}^*\calC  \\
T_\mathbf g (\calC/\calD_0) & \ar[l]^{j_\omega^{-1}} T^*_\mathbf g (\calC/\calD_0)\ar[u]_{\pi^*}, }
\end{equation}
\else
\begin{equation}\label{eq:ij}
\begin{gathered}
\xymatrix{T_{\mathbf g} \calC \ar[d]_{\pi} & \ar[l]_{k_\omega^{-1}} T_{\mathbf g}^*\calC  \\
T_\mathbf g (\calC/\calD_0) & \ar[l]^{j_\omega^{-1}} T^*_\mathbf g (\calC/\calD_0)\ar[u]_{\pi^*}, }
\end{gathered}
\end{equation}\fi
where $\pi$ is the natural projection and $\pi^*$ is the natural embedding.
\begin{definition}\label{def:j}
Let $\omega$ be a volume form. For each $\mathbf g\in\calC$ we define 
$j_\omega: T_{\mathbf g} (\calC/\calD_0)\ra  T_{\mathbf g}^* (\calC/\calD_0)$
by
\begin{equation}\label{eq:jdef}
j_\omega( \mathbf{u} + \Im \ck_\mathbf g) = P_\omega(k_\omega(\mathbf u)).
\end{equation}
Its inverse is given by
\begin{equation}\label{eq:jinvdef}
j^{-1}_\omega(\bfsigma) = k^{-1}_\omega (\bfsigma) + \Im \ck_{\mathbf g}.
\end{equation}
\end{definition}
One needs to verify that $j_\omega$ is well-defined, but this is an easy consequence
of the uniqueness clause of Proposition \ref{prop:yorksplitconf}. Showing that $j_\omega^{-1}$ really
is the inverse of $j_\omega$ is also an easy exercise using Proposition \ref{prop:yorksplitconf}
and is left to the reader.  Note that the commutative diagram \eqref{eq:ij} is simply an alternative
expression of equation \eqref{eq:jinvdef}.

\section{The CMC Conformal Method}\label{sec:CMC}
Suppose $(\ovl g_{ab}, \ovl K_{ab})$ is a solution of the constraints such that 
$\ovl g^{ab}\ovl K_{ab}=\tau_0$ for some constant $\tau_0$; we say such
a solution is constant mean curvature or CMC. Letting $\ovl \sigma_{ab}$ be the 
trace-free part of $K_{ab}$ the momentum constraint \eqref{eq:momentum}
then reads
\begin{equation}
\div_{g} \ovl \sigma = 0
\end{equation}
and hence $\ovl\sigma_{ab}$ is TT.  So
\begin{equation}\label{eq:CMC-K-decomp}
\ovl K_{ab} = \ovl \sigma_{ab} + \frac{\tau_0}{n}\ovl g_{ab}
\end{equation}
for some unique TT tensor $\ovl \sigma_{ab}$ and constant $\tau_0$.
In this way, every 
CMC solution determines a unique conformal class $\mathbf{g}=[\ovl g_{ab}]$,
conformal geometric momentum 
$\bfsigma=[\ovl g_{ab}, \ovl \sigma_{ab}]^*\in T^*_{\mathbf{g}}(\calC/\calD_0)$,
and constant $\tau_0$. We refer to a triple
$(\mathbf{g}, \bfsigma, \tau_0)$ as \textbf{CMC conformal data}\footnote{
Since $\bfsigma$ determines $\mathbf{g}$ (every cotangent vector determines its
base point), this description of conformal data is mildly redundant. Nevertheless, it
is useful to have an explicit notation for the conformal class.}.

The CMC conformal method of \cite{York:1973fl}
seeks to reverse this process:
starting from 
CMC conformal data $(\mathbf{g},\bfsigma,\tau_0)$ we wish to construct a CMC solution
$(\ovl g_{ab}, \ovl K_{ab})$ of the constraints with 
\begin{equation}\label{eq:CMCconds}
\ifjournal\eqalign{
\strut[ \ovl g_{ab} ] &= \mathbf{g}\\
\strut[ \ovl g_{ab},  \ovl K_{ab}]^* &= \bfsigma\\
\ovl g^{ab}  \ovl K_{ab} &= \tau_0.	}
\else
\begin{aligned}
\strut[ \ovl g_{ab} ] &= \mathbf{g}\\
\strut[ \ovl g_{ab},  \ovl K_{ab}]^* &= \bfsigma\\
\ovl g^{ab}  \ovl K_{ab} &= \tau_0.
\end{aligned}\fi
\end{equation}
To solve this problem, let $g_{ab}$ be an arbitrary representative of $\mathbf{g}$
and let $\sigma_{ab}$ be the unique TT tensor such that 
\begin{equation}
[g_{ab}, \sigma_{ab}] = \bfsigma.
\end{equation}
If $(\ovl g_{ab},\ovl K_{ab})$ is a solution of the constraints satisfying \eqref{eq:CMCconds},
then there is a conformal factor $\phi$ such that $\ovl g_{ab} = \phi^{q-2}g_{ab}$
and such that $\ovl K_{ab}$ satisfies equation \eqref{eq:CMC-K-decomp} with 
$\ovl \sigma_{ab} = \phi^{-2}\sigma_{ab}$.  Writing the constraint
equations \eqref{eq:constraints} in terms of $g_{ab}$ and $\sigma_{ab}$,
we find that the momentum constraint is automatically satisfied and 
(using the scalar curvature transformation law \eqref{eq:scalchange})
the Hamiltonian constraint is equivalent to
\begin{equation}\label{eq:lich}
-2\kappa q\, \Lap_g \phi + R_g\phi -\left|\sigma\right|_{g}^2\phi^{-q-1} + \kappa \tau_0^2\phi^{q-1} = 0,
\ifjournal\quad\else\qquad\fi\text{\small[Lichnerowicz-York equation]}
\end{equation}
where $R_g$ is the scalar curvature of $g_{ab}$ and 
$\kappa$ and $q$ are the dimensional constants defined by 
equations \eqref{eq:q} and \eqref{eq:kappa}.  Thus we have established the following.


\begin{proposition}[The CMC Conformal Method]\label{prop:CMC}
Let $(\mathbf g, \bfsigma, \tau_0)$ be CMC conformal data.

Suppose $(g_{ab}, \sigma_{ab}, \tau_0)$ is an arbitrary
representative of the CMC conformal data (i.e., $[g_{ab}]=\mathbf{g}$
and $[g_{ab},\sigma_{ab}]^*=\bfsigma$)
and suppose 
$\phi$ is a positive function solving the Lichnerowicz-York equation \eqref{eq:lich}.
Then
\begin{equation}\label{eq:cmc-recipe}
\ifjournal\eqalign{
\ovl g_{ab} &= \phi^{q-2} g_{ab}\\
\ovl K_{ab} &= \phi^{-2}\sigma_{ab} + \frac{\tau_0}{n}\ovl{g}_{ab}	
}\else
\begin{aligned}
\ovl g_{ab} &= \phi^{q-2} g_{ab}\\
\ovl K_{ab} &= \phi^{-2}\sigma_{ab} + \frac{\tau_0}{n}\ovl{g}_{ab}
\end{aligned}\fi
\end{equation}
solve the constraint equations. Moreover,
\begin{equation}\label{eq:CMC-conds2}
\ifjournal\eqalign{
\strut[\ovl{g}_{ab}] &= \mathbf{g},\\
\strut[\ovl{g}_{ab}, \ovl K_{ab}]^* &= \bfsigma,\\
\ovl g^{ab} \ovl K_{ab} &= \tau_0.}
\else
\begin{aligned}
\strut[\ovl{g}_{ab}] &= \mathbf{g},\\
\strut[\ovl{g}_{ab}, \ovl K_{ab}]^* &= \bfsigma,\\
\ovl g^{ab} \ovl K_{ab} &= \tau_0.
\end{aligned}\fi
\end{equation}

Conversely, suppose $(\ovl g_{ab}, \ovl K_{ab})$ is a solution of the constraint
equations such that equations FOO\eqref{eq:CMC-conds2} are satisfied.
Let $(g_{ab}, \sigma_{ab}, \tau_0)$ be any representative of the CMC conformal data
and let $\phi$ be the unique conformal factor such that $\ovl g_{ab} = \phi^{q-2} g_{ab}$.  
Then $\phi$ solves the Lichnerowicz-York equation \eqref{eq:lich}.
\end{proposition}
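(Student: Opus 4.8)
The plan is to split the proof into the two implications asserted in the statement, and to observe that both are essentially a bookkeeping exercise once the conformal transformation laws have been assembled. For the forward direction, I start from a representative $(g_{ab},\sigma_{ab},\tau_0)$ of the CMC conformal data and a positive solution $\phi$ of the Lichnerowicz-York equation \eqref{eq:lich}, and define $\ovl g_{ab}$ and $\ovl K_{ab}$ by \eqref{eq:cmc-recipe}. I then verify the two constraint equations separately. The momentum constraint \eqref{eq:momentum} reduces, after writing everything with respect to $\ovl g_{ab}$, to $\div_{\ovl g}(\phi^{-2}\sigma) = 0$; since $\sigma_{ab}$ is TT with respect to $g_{ab}$ and $\tilde\sigma_{ab}=\phi^{-2}\sigma_{ab}$ is TT with respect to $\tilde g_{ab}=\phi^{q-2}g_{ab}$ (the conformal covariance of TT tensors recalled in Section \ref{subsec:tt}), this is automatic, and the trace term $\frac{\tau_0}{n}\ovl g_{ab}$ contributes only $\frac{1}{n}d\tau_0 = 0$ to $d\tau$ on the right-hand side because $\tau_0$ is constant. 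For the Hamiltonian constraint \eqref{eq:hamiltonian}, I substitute $R_{\ovl g}$ via the scalar-curvature law \eqref{eq:scalchange}, compute $|\ovl K|^2_{\ovl g}$ and $(\tr_{\ovl g}\ovl K)^2$ using $|\phi^{-2}\sigma|^2_{\ovl g} = \phi^{-2q}|\sigma|^2_g$, and check that the resulting equation, after multiplying through by $\phi^{q-1}$, is exactly \eqref{eq:lich}; this is the computation already carried out in the text leading up to the proposition, so I would simply cite it.

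Next I verify the three identities \eqref{eq:CMC-conds2}. The first, $[\ovl g_{ab}]=\mathbf g$, is immediate since $\ovl g_{ab}=\phi^{q-2}g_{ab}$ and $[g_{ab}]=\mathbf g$. For the third, $\ovl g^{ab}\ovl K_{ab}=\tau_0$: taking the $\ovl g$-trace of \eqref{eq:cmc-recipe} kills $\phi^{-2}\sigma_{ab}$ (it is $\ovl g$-trace-free) and leaves $\frac{\tau_0}{n}\ovl g^{ab}\ovl g_{ab}=\tau_0$. For the second, $[\ovl g_{ab},\ovl K_{ab}]^*=\bfsigma$: by Definition \ref{def:cotangent}, $[\ovl g_{ab},\ovl K_{ab}]^*$ equals $[\ovl g_{ab}, \ovl A_{ab}]^*$ where $\ovl A_{ab}$ is the $\ovl g$-trace-free part of $\ovl K_{ab}$, i.e. $\ovl A_{ab}=\phi^{-2}\sigma_{ab}$; then $(\ovl g_{ab},\ovl A_{ab})=(\phi^{q-2}g_{ab},\phi^{-2}\sigma_{ab})\sim(g_{ab},\sigma_{ab})$ under the cotangent equivalence relation, so $[\ovl g_{ab},\ovl K_{ab}]^*=[g_{ab},\sigma_{ab}]^*=\bfsigma$. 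One should also remark that these conclusions are independent of the choice of representative $(g_{ab},\sigma_{ab},\tau_0)$, which follows because different representatives are related by a conformal factor and the maps above respect that.

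For the converse, suppose $(\ovl g_{ab},\ovl K_{ab})$ solves the constraints and satisfies \eqref{eq:CMC-conds2}. Fix any representative $(g_{ab},\sigma_{ab},\tau_0)$ of the CMC conformal data and let $\phi$ be the unique positive function with $\ovl g_{ab}=\phi^{q-2}g_{ab}$ (unique because $[\ovl g_{ab}]=\mathbf g=[g_{ab}]$). From $\ovl g^{ab}\ovl K_{ab}=\tau_0$ constant and the discussion in Section \ref{sec:CMC}, the $\ovl g$-trace-free part $\ovl A_{ab}$ of $\ovl K_{ab}$ is TT with respect to $\ovl g_{ab}$, so $\ovl K_{ab}=\ovl A_{ab}+\frac{\tau_0}{n}\ovl g_{ab}$; and $[\ovl g_{ab},\ovl K_{ab}]^*=[\ovl g_{ab},\ovl A_{ab}]^*=\bfsigma=[g_{ab},\sigma_{ab}]^*$ forces, by the cotangent equivalence relation together with the uniqueness of the representative of a cotangent vector with respect to a fixed background metric, that $\ovl A_{ab}=\phi^{-2}\sigma_{ab}$. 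Thus $\ovl K_{ab}$ has exactly the form \eqref{eq:cmc-recipe}, the momentum constraint is automatically satisfied as before, and feeding $R_{\ovl g}$, $|\ovl K|^2_{\ovl g}$, $\tau_0$ into the Hamiltonian constraint and reversing the algebra of the forward direction shows $\phi$ solves \eqref{eq:lich}. The main obstacle, such as it is, is purely one of care rather than depth: one must keep straight the three distinct exponents of $\phi$ ($\phi^{q-2}$ for the metric, $\phi^{-2}$ for $\sigma$ as a cotangent representative, $\phi^{-2q}$ for $|\sigma|^2_g$), and one must be careful that "$[\ovl g_{ab},\ovl K_{ab}]^*=\bfsigma$" is used with the correct meaning from Definition \ref{def:cotangent} (it is a statement about trace-free parts, not about $\ovl K_{ab}$ itself), so that the identification $\ovl A_{ab}=\phi^{-2}\sigma_{ab}$ is genuinely forced and not merely consistent.
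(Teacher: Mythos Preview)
Your proposal is correct and follows essentially the same approach as the paper: the paper's proof is the discussion immediately preceding the proposition (ending with ``Thus we have established the following''), which derives the Lichnerowicz-York equation from the constraints under the ansatz \eqref{eq:cmc-recipe} and notes that the momentum constraint is automatic because $\sigma_{ab}$ is TT and $\tau_0$ is constant. Your write-up is somewhat more explicit than the paper's in verifying the three identities \eqref{eq:CMC-conds2} and in unwinding Definition~\ref{def:cotangent} to force $\ovl A_{ab}=\phi^{-2}\sigma_{ab}$ in the converse, but the underlying argument is the same.
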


As a consequence of Proposition \ref{prop:CMC}, the set of solutions of the constraints satisfying 
conditions \eqref{eq:CMC-conds2} is in one-to-one correspondence with the set of conformal factors $\phi$
solving the Lichnerowicz-York equation \eqref{eq:lich} as expressed with respect to any representative
of the CMC conformal data $(\mathbf{g},\bfsigma,\tau_0)$.  This independence with respect to
the choice of representative is known in the literature as \textbf{conformal covariance}.

\begin{proposition}\label{prop:CMCinv}
Suppose $(g_{ab}, \sigma_{ab}, \tau_0)$ and
$(\tilde g_{ab}, \tilde \sigma_{ab}, \tau_0)$ are two representatives of the same 
CMC conformal data, so $\tilde g_{ab} = \psi^{q-2} g_{ab}$ and $\tilde\sigma_{ab}=\psi^{-2}\sigma_{ab}$
for some conformal factor $\psi$.  Then $\phi$ solves the Lichnerowicz-York equation \eqref{eq:lich}
with respect to $(g_{ab}, \sigma_{ab}, \tau_0)$ if and only if $\psi^{-1}\phi$ solves
the Lichnerowicz-York equation \eqref{eq:lich} with respect to
$(\tilde g_{ab}, \tilde \sigma_{ab}, \tau_0)$, in which case the corresponding
solution $(\ovl g_{ab}, \ovl K_{ab})$ of the constraints in both cases is the same.
\end{proposition}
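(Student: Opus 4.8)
The plan is to verify the claimed correspondence of solutions by direct substitution into the Lichnerowicz-York equation, exploiting the conformal transformation laws already recorded in the excerpt. First I would write out the Lichnerowicz-York equation \eqref{eq:lich} with respect to $(g_{ab},\sigma_{ab},\tau_0)$ and substitute $\phi = \psi \tilde\phi$, aiming to show that this is equivalent to the Lichnerowicz-York equation with respect to $(\tilde g_{ab},\tilde\sigma_{ab},\tau_0)$ for the unknown $\tilde\phi$. The key ingredient is the scalar curvature transformation law \eqref{eq:scalchange}: since $\tilde g_{ab} = \psi^{q-2} g_{ab}$, we have $R_{\tilde g} = \psi^{1-q}(-2\kappa q\,\Lap_g \psi + R_g \psi)$, which lets us trade $R_g$ for $R_{\tilde g}$ at the cost of a term involving $\Lap_g \psi$.

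The main computation is to track the four terms of \eqref{eq:lich} under the substitution. I would observe that the combination $-2\kappa q\,\Lap_g\phi + R_g\phi$ behaves like (a density version of) the conformal Laplacian applied to $\phi$, so that replacing $g$ by $\tilde g$ and $\phi$ by $\psi^{-1}\phi$ multiplies it by an overall power of $\psi$; concretely, the conformal Laplacian identity gives $-2\kappa q\,\Lap_{\tilde g}\tilde\phi + R_{\tilde g}\tilde\phi = \psi^{1-q}\bigl(-2\kappa q\,\Lap_g\phi + R_g\phi\bigr)$ when $\tilde\phi = \psi^{-1}\phi$. For the remaining two terms I would use $|\tilde\sigma|_{\tilde g}^2 = \psi^{-2q}|\sigma|_g^2$ (from $\tilde\sigma_{ab} = \psi^{-2}\sigma_{ab}$ and $\tilde g^{ab}=\psi^{2-q}g^{ab}$, raising two indices and contracting) so that $|\tilde\sigma|_{\tilde g}^2 \tilde\phi^{-q-1} = \psi^{-2q-(-q-1)}|\sigma|_g^2 \phi^{-q-1} \cdot \psi^{?}$ — I would carefully bookkeep the powers of $\psi$ and confirm that $|\tilde\sigma|_{\tilde g}^2\,\tilde\phi^{-q-1}$ and $\kappa\tau_0^2\,\tilde\phi^{q-1}$ each pick up exactly the same factor $\psi^{1-q}$ as the linear part. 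Factoring out $\psi^{1-q}$ (which is nowhere zero) from the whole equation then shows that $\tilde\phi = \psi^{-1}\phi$ solves the tilded Lichnerowicz-York equation if and only if $\phi$ solves the original one.

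Once the equivalence of solutions is established, the final claim — that the resulting solution $(\ovl g_{ab},\ovl K_{ab})$ of the constraints is the same in both cases — follows from the recipe \eqref{eq:cmc-recipe} together with the fact that the pairs $(g_{ab},\sigma_{ab})$ and $(\tilde g_{ab},\tilde\sigma_{ab})$ represent the same conformal data. Indeed $\ovl g_{ab} = \tilde\phi^{q-2}\tilde g_{ab} = (\psi^{-1}\phi)^{q-2}\psi^{q-2}g_{ab} = \phi^{q-2}g_{ab}$, and similarly $\tilde\phi^{-2}\tilde\sigma_{ab} = (\psi^{-1}\phi)^{-2}\psi^{-2}\sigma_{ab} = \phi^{-2}\sigma_{ab}$, so both the metric and the trace-free part of $\ovl K_{ab}$ agree, and the trace part $\frac{\tau_0}{n}\ovl g_{ab}$ is unchanged since $\ovl g_{ab}$ is. Alternatively, this is immediate from Proposition \ref{prop:CMC}, since both representatives produce solutions satisfying the same conditions \eqref{eq:CMC-conds2}. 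The only real obstacle is the power-counting in the middle paragraph: one must be disciplined about how $\psi$ enters through $\tilde g^{ab}$ when raising indices to form $|\tilde\sigma|_{\tilde g}^2$, and through the substitution $\phi = \psi\tilde\phi$ in the nonlinear powers $\tilde\phi^{\pm q\mp 1}$; it is reassuring that the dimensional constant $q$ is exactly what makes all four terms scale by the common factor $\psi^{1-q}$.
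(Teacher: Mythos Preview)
Your argument is correct, but it takes a different route from the paper's proof. You establish the equivalence by direct substitution: writing $\phi=\psi\tilde\phi$, using the conformal covariance of the conformal Laplacian $-2\kappa q\,\Lap_g+R_g$, and checking that each of the four terms in \eqref{eq:lich} picks up the common factor $\psi^{1-q}$ (your power-counting is right: $|\tilde\sigma|_{\tilde g}^2=\psi^{-2q}|\sigma|_g^2$, so $|\tilde\sigma|_{\tilde g}^2\tilde\phi^{-q-1}=\psi^{1-q}|\sigma|_g^2\phi^{-q-1}$, and $\tilde\phi^{q-1}=\psi^{1-q}\phi^{q-1}$). The paper instead argues indirectly through Proposition~\ref{prop:CMC}: given that $\phi$ solves \eqref{eq:lich} for $(g,\sigma,\tau_0)$, the forward implication of Proposition~\ref{prop:CMC} produces a solution $(\ovl g,\ovl K)$ of the constraints satisfying the coordinate-free conditions \eqref{eq:CMC-conds2}; since $\ovl g_{ab}=(\phi/\psi)^{q-2}\tilde g_{ab}$, the reverse implication of Proposition~\ref{prop:CMC} then forces $\psi^{-1}\phi$ to solve \eqref{eq:lich} for $(\tilde g,\tilde\sigma,\tau_0)$. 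Your approach is more hands-on and makes the conformal scaling of each term explicit; the paper's approach is shorter and more conceptual, emphasizing that covariance is a formal consequence of the fact that both representatives describe the same coordinate-free problem \eqref{eq:CMC-conds2}. You in fact mention this alternative at the end of your last paragraph, so you already have both proofs in hand.
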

\begin{proof}
Suppose $\phi$ solves the Lichnerowicz-York equation with respect to 
$(g_{ab}, \sigma_{ab}, \tau_0)$ and let $(\ovl g_{ab},\ovl K_{ab})$ be
defined by equations \eqref{eq:cmc-recipe}.  The forward implication of Proposition \ref{prop:CMC}
implies $(\ovl g_{ab},\ovl K_{ab})$ is a solution of the constraints satisfying \eqref{eq:CMC-conds2}.

Since $\tilde g_{ab} = \psi^{q-2} g_{ab}$ and $\ovl g_{ab} = \phi^{q-2} g_{ab}$ it follows that
$\ovl g_{ab} = (\phi/\psi)^{q-2} \tilde g_{ab}$ and hence the reverse implication of Proposition
\ref{prop:CMC} implies that $\phi\psi^{-1}$ solves the Lichnerowicz-York equation \eqref{eq:lich}
with respect to  $(\tilde g_{ab}, \tilde \sigma_{ab}, \tau_0)$.  The solution of the constraints
generated by $\phi\psi^{-1}$ given by equation \eqref{eq:cmc-recipe} is
\begin{equation}
\ifjournal\eqalign{
\ovl{\tilde g}_{ab} &= (\phi/\psi)^{q-2} \tilde g_{ab} = \phi^{q-2} g_{ab} = \ovl g_{ab}\\
\ovl{\tilde K}_{ab} &= (\phi/\psi)^{-2} \tilde \sigma_{ab} + \frac{\tau_0}{n}\ovl{\tilde g}_{ab} 
= \phi^{-2}\sigma_{ab} + \frac{\tau_0} {n} \ovl g_{ab} = \ovl K_{ab}.	
}\else
\begin{aligned}
\ovl{\tilde g}_{ab} &= (\phi/\psi)^{q-2} \tilde g_{ab} = \phi^{q-2} g_{ab} = \ovl g_{ab}\\
\ovl{\tilde K}_{ab} &= (\phi/\psi)^{-2} \tilde \sigma_{ab} + \frac{\tau_0}{n}\ovl{\tilde g}_{ab} 
= \phi^{-2}\sigma_{ab} + \frac{\tau_0} {n} \ovl g_{ab} = \ovl K_{ab}.
\end{aligned}\fi
\end{equation}
\end{proof}

The celebrated property of the CMC conformal method is that given representative
CMC conformal data $(g_{ab},\sigma_{ab},\tau_0)$ there is (generically) 
exactly one solution of
the Lichnerowicz-York equation, so there is effectively a one-to-one correspondence between
CMC conformal data and CMC solutions of the constraints.  This result (accomplished over
many years by several authors including York and Choquet-Bruhat, and completed and 
summarized by Isenberg in \cite{Isenberg:1995bi}) can be expressed in terms of conformal
objects (independent of a choice of background metric) as follows.

\begin{theorem}[CMC Parameterization]\label{thm:CMC}
Let $(\mathbf{g},\bfsigma,\tau_0)$ be CMC conformal data.
Then there exists a unique solution $(\ovl g_{ab}, \ovl K_{ab})$
of the vacuum Einstein constraint equations satisfying conditions \eqref{eq:CMC-conds2}
except in the following cases:
\begin{itemize}
\item $\mathbf{g}$ is Yamabe positive and $\bfsigma=0$, in which case there is no solution,
\item $\mathbf{g}$ is Yamabe negative and $\tau_0=0$, in which case there is no solution,
\item $\mathbf{g}$ is Yamabe null and $\bfsigma=0$ or $\tau_0=0$,
in which case there is no solution (unless both are zero, in which case there is 
a one-parameter family of solutions consisting of solution metrics $\ovl g_{ab}$
all homothetically related to a single metric with vanishing scalar curvature
and with solution extrinsic curvatures $\ovl K_{ab}$ all vanishing identically).
\end{itemize}
\end{theorem}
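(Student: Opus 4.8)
The plan is to deduce the theorem from the reduction carried out in Proposition~\ref{prop:CMC} together with the conformal covariance established in Proposition~\ref{prop:CMCinv}. By those two results, solutions $(\ovl g_{ab},\ovl K_{ab})$ of the constraints satisfying \eqref{eq:CMC-conds2} are in bijection with positive solutions $\phi$ of the Lichnerowicz--York equation \eqref{eq:lich} written with respect to \emph{any} representative $(g_{ab},\sigma_{ab},\tau_0)$ of the data, and the set of such $\phi$ depends on the representative only through the harmless rescaling $\phi\mapsto\psi^{-1}\phi$. So it suffices to decide, for one convenient choice of representative, exactly when \eqref{eq:lich} has a positive solution and to check that it is unique when it exists.

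The next step is to observe that every condition appearing in the statement is a conformal invariant, and therefore a well-defined condition on the conformal data $(\mathbf g,\bfsigma,\tau_0)$. The Yamabe class of $\mathbf g$ is by construction invariant under conformal change; $\bfsigma=0$ in $T^*_{\mathbf g}(\calC/\calD_0)$ if and only if $\sigma_{ab}\equiv 0$ for some (equivalently, every) representative, by Lemma~\ref{lem:whatsaTT} and the transformation law $\tilde\sigma_{ab}=\phi^{-2}\sigma_{ab}$; and $\tau_0$ is just a real number carried along unchanged. Hence we may invoke the resolution of the Yamabe problem to pick the representative $g_{ab}\in\mathbf g$ so that $R_g$ is a constant whose sign agrees with the Yamabe class (positive, zero, or negative), and then count positive solutions of \eqref{eq:lich} for this $g_{ab}$.

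With such a $g_{ab}$ fixed, \eqref{eq:lich} is a semilinear elliptic equation with a constant-sign linear term, a nonpositive term $-|\sigma|_g^2\phi^{-q-1}$, and a nonnegative term $\kappa\tau_0^2\phi^{q-1}$, and its positive solutions are classified by the now-classical analysis (due to York, Choquet-Bruhat, and others, and summarized by Isenberg in \cite{Isenberg:1995bi}), which I would invoke directly. Concretely: in the Yamabe-positive, Yamabe-null generic, and Yamabe-negative-with-$\tau_0\neq 0$ cases one constructs constant (or, where $|\sigma|_g^2$ has zeros, conformal-Laplacian) sub- and super-solutions and concludes existence, with uniqueness following from a maximum-principle/monotonicity argument; non-existence in the Yamabe-positive $\bfsigma=0$ case, the Yamabe-negative $\tau_0=0$ case, and the Yamabe-null cases with exactly one of $\bfsigma,\tau_0$ vanishing is obtained by testing \eqref{eq:lich} against $\phi$ (or evaluating at an extremum of $\phi$) and integrating by parts; and in the fully degenerate Yamabe-null case $\bfsigma=0$, $\tau_0=0$ the scalar-flat representative turns \eqref{eq:lich} into $\Lap_g\phi=0$, whose positive solutions are exactly the constants, so the solution metrics $\ovl g_{ab}$ are all homothetic and the recipe \eqref{eq:cmc-recipe} gives $\ovl K_{ab}\equiv 0$, which is the stated one-parameter family.

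The genuinely new content here is organizational rather than analytic. The main point to verify with care is that the trichotomy in the statement is phrased entirely in terms of $(\mathbf g,\bfsigma,\tau_0)$ — which is exactly what the identifications of Section~\ref{sec:conformal} make possible — so that the classification is coordinate-free and independent of the Yamabe representative used to prove it; and the only mild subtlety in translating Isenberg's result is confirming, in the borderline Yamabe-null case, that passing to the scalar-flat representative leaves no residual freedom beyond the constant conformal factors, so the solution family is genuinely one-parameter and consists of homothetic metrics with vanishing $\ovl K_{ab}$.
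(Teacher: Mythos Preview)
Your proposal is correct and matches the paper's treatment: the paper does not prove this theorem but simply records it as a known result, attributing the classification to York, Choquet-Bruhat, and others as completed and summarized by Isenberg in \cite{Isenberg:1995bi}, after having set up the reduction to the Lichnerowicz--York equation via Propositions~\ref{prop:CMC} and~\ref{prop:CMCinv}. Your sketch of the underlying analysis (Yamabe representative, sub/super-solutions, maximum principle, integration-by-parts obstructions, and the harmonic-function argument in the degenerate Yamabe-null case) is accurate additional detail beyond what the paper provides, but the core approach---reduce to \eqref{eq:lich}, observe the conditions are conformally invariant, and cite Isenberg---is the same.
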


\section{The 1974 Conformal Method}\label{sec:1974}

Let $\omega$ be a fixed
volume form, and suppose $(\ovl g_{ab}, \ovl K_{ab})$ is a solution of the constraint
equations.  The solution and the choice of $\omega$ uniquely determine the following:
\ifjournal
\numparts
\begin{eqnarray}\customlabel{eq:1974params}{\arabic{section}.\arabic{eqnval}}
\mathbf{g}&=[\ovl g_{ab}], \\
\bfsigma &= P_{\omega}( [\ovl g_{ab}, \ovl K_{ab}]^*),\label{eq:1974sigma}\\
\tau &= \ovl g^{ab} \ovl K_{ab}.
\end{eqnarray}
\endnumparts
\else
\begin{subequations}\label{eq:1974params}
\begin{alignat}{2}
\mathbf{g}&=[\ovl g_{ab}], \\
\bfsigma &= P_{\omega}( [\ovl g_{ab}, \ovl K_{ab}]^*),\label{eq:1974sigma}\\
\tau &= \ovl g^{ab} \ovl K_{ab}.
\end{alignat}
\end{subequations}
\fi
where $[\ovl g_{ab}, \ovl K_{ab}]^*$ is defined at the end of Definition \ref{def:cotangent}
and the projection $P_\omega$ comes from Definition \ref{def:proj}.

We call a tuple $(\mathbf{g},\bfsigma,\tau,\omega)$
\textbf{1974 conformal data}.   
Although it is not usually presented this way, the 1974 conformal method
attempts to reverse this process: starting from conformal data
$(\mathbf{g},\bfsigma,\tau,\omega)$ 
we seek a solution $(\ovl g_{ab}, \ovl K_{ab})$
of the constraints satisfying conditions \eqref{eq:1974params}.  

Suppose $(\ovl g_{ab}, \ovl K_{ab})$ is a solution of the constraints
satisfying conditions \eqref{eq:1974params}
and let $\ovl A_{ab}$ be the trace-free part of $\ovl K_{ab}$, so
$K_{ab} = A_{ab} + (\tau/n) \ovl g_{ab}$.
Equation \eqref{eq:1974sigma} is equivalent to the existence of a vector field $W^a$
such that
\begin{equation}\label{eq:olAsplit}
[\ovl g_{ab}, \ovl A_{ab}]^* = \bfsigma + k_\omega(\ck_{\bf g} W).
\end{equation}
Let $g_{ab}$ be the unique element of $\mathbf{g}$ with
$dV_g = \omega$, and let $A_{ab}$ be the representative of $[\ovl g_{ab}, \ovl A_{ab}]^*$
with respect to $g_{ab}$ (i.e., $\ovl A_{ab} = \phi^{-2} A_{ab}$). 
From our specific choice of $g_{ab}$, Proposition
\ref{prop:yorksplitconf} implies that equation \eqref{eq:olAsplit}
is equivalent to
\begin{equation}
A_{ab} = \sigma_{ab} + (\ck_{g} W)_{ab}
\end{equation}
where $\sigma_{ab}$ is the representative of $\bfsigma$ with respect
to $g_{ab}$. We then have
\ifjournal
\numparts
\begin{eqnarray}
\customlabel{eq:1974decomp}{\arabic{section}.\arabic{eqnval}}
\ovl g_{ab} &= \phi^{q-2} g_{ab}\\
\ovl K_{ab} &= \phi^{-2}\left[\sigma_{ab } + (\ck_{g} W)_{ab}\right] + \frac{\tau}{n} \ovl g_{ab}.\label{eq:1974Kdecom}
\end{eqnarray}
\endnumparts
\else
\begin{subequations}\label{eq:1974decomp}
\begin{alignat}{2}
\ovl g_{ab} &= \phi^{q-2} g_{ab}\\
\ovl K_{ab} &= \phi^{-2}\left[\sigma_{ab } + (\ck_{g} W)_{ab}\right] + \frac{\tau}{n} \ovl g_{ab}.\label{eq:1974Kdecom}
\end{alignat}
\end{subequations}
\fi
The preceding discussion is reversible, so we have shown that equations \eqref{eq:1974params}
are equivalent to the existence a conformal factor $\phi$ and a vector field $W^a$ such
that conditions $\eqref{eq:1974decomp}$ hold, \textbf{so long as} $g_{ab}$ is the representative 
of $\mathbf{g}$ with $dV_g = \omega$.

Substituting equations \eqref{eq:1974decomp}
into the constraint equations \eqref{eq:constraints} we find that $(\ovl g_{ab},\ovl K_{ab})$ is a solution
of the constraints if and only if $\phi$ and $W$ satisfy
\ifjournal
\numparts
\begin{eqnarray}
\customlabel{eq:LCBY}{\arabic{section}.\arabic{eqnval}}
\hskip-13ex -2\kappa q\,\Lap \phi + R_g \phi -\left|\sigma+\ck_g W\right|_g^2\phi^{-q-1} + \kappa \tau^2\phi^{q-1} &= 0 &\quad\text{\small[1974 Hamiltonian constraint]}\label{eq:LCBYHamiltonian}\\
\div_g \ck W &= \kappa\phi^q d\tau.
 &\quad\text{\small[1974 momentum constraint]}\label{eq:LCBYmomentum}
\end{eqnarray}
\endnumparts
\else
\begin{subequations}\label{eq:LCBY}
\begin{alignat}{2}
-2\kappa q\,\Lap \phi + R_g \phi -\left|\sigma+\ck_g W\right|_g^2\phi^{-q-1} + \kappa \tau^2\phi^{q-1} &= 0 &\qquad&\text{\small[1974 Hamiltonian constraint]}\label{eq:LCBYHamiltonian}\\
\div_g \ck W &= \kappa\phi^q d\tau.
 &\qquad&\text{\small[1974 momentum constraint]}\label{eq:LCBYmomentum}
\end{alignat}
\end{subequations}
\fi
These equations, which first appeared in \cite{OMurchadha:1974bf}, will be called 
the \textbf{1974 conformally parameterized constraint equations}, though we note
that they have various other names in the literature, including the LCBY equations named after
Lichnerowicz, Choquet-Bruhat and York.  We have described how their solutions correspond to the 
solutions of the constraints solving conditions \eqref{eq:1974params}, and summarize this
discussion as follows.
\begin{proposition}[1974 Conformal Method]
Let $(\mathbf{g},\bfsigma,\tau,\omega)$ be 1974 conformal data.

Let $g_{ab}\in\mathbf{g}$ be the unique representative
with $dV_g=\omega$ and let $\sigma_{ab}$ be the representative
of $\bfsigma$ with respect to $g_{ab}$.

Suppose $\phi$ and  $W^a$ solve the 1974 conformally parameterized constraint equations \eqref{eq:LCBY}
with respect to $g_{ab}$ and $\sigma_{ab}$.
Then $(\ovl g_{ab}, \ovl K_{ab})$ defined by equations \eqref{eq:1974decomp} 
satisfy the constraint equations \eqref{eq:constraints} and satisfy
\ifjournal
\numparts
\begin{eqnarray}
\customlabel{eq:1974params2}{\arabic{section}.\arabic{eqnval}}
[\ovl g_{ab}]&=\mathbf{g}, \\
P_{\omega}( [\ovl g_{ab}, \ovl K_{ab}]^*)&=\bfsigma ,\label{eq:1974sigma2}\\
\ovl g^{ab} \ovl K_{ab} &= \tau.
\end{eqnarray}
\endnumparts
\else
\begin{subequations}\label{eq:1974params2}
\begin{alignat}{2}
[\ovl g_{ab}]&=\mathbf{g}, \\
P_{\omega}( [\ovl g_{ab}, \ovl K_{ab}]^*)&=\bfsigma ,\label{eq:1974sigma2}\\
\ovl g^{ab} \ovl K_{ab} &= \tau.
\end{alignat}
\end{subequations}
\fi

Conversely, suppose $(\ovl g_{ab}, \ovl K_{ab})$ is a solution of the constraints 
satisfying conditions \eqref{eq:1974params2}. 
Then there exists a conformal factor $\phi$ and vector field
$W^a$ (both unique up to addition of a conformal Killing field to $W^a$)
such that the decomposition \eqref{eq:1974decomp} holds and
the 1974 conformally parameterized constraint equations (with respect to $g_{ab}$ and $\sigma_{ab}$) are satisfied.
\end{proposition}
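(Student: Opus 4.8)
The plan is to unwind both directions of the equivalence by reducing everything to computations with respect to the \emph{specific} background metric $g_{ab}\in\mathbf{g}$ with $dV_g=\omega$, where the abstract maps $k_\omega$ and $P_\omega$ become simple tensorial operations. Most of the work has already been done in the discussion preceding the statement, so the proof is largely a matter of assembling it; the only genuinely new content is the substitution into the constraint equations yielding \eqref{eq:LCBY}.

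For the forward direction, I would start from a solution $\phi,W$ of \eqref{eq:LCBY} and define $\ovl g_{ab}$, $\ovl K_{ab}$ by \eqref{eq:1974decomp}. First I would verify these satisfy the constraints \eqref{eq:constraints}: this is the substitution computation. The momentum constraint \eqref{eq:momentum} requires $\div_{\ovl g}\ovl A = (n-1)/n\, d\tau = \kappa\, d\tau$ after accounting for the $\frac{\tau}{n}\ovl g_{ab}$ trace term (here $\ovl A_{ab}=\phi^{-2}(\sigma_{ab}+(\ck_g W)_{ab})$); using the conformal transformation law $\div_{\ovl g}(\phi^{-2}B) = \phi^{-q}\div_g B$ valid for trace-free symmetric $(0,2)$-tensors $B_{ab}$, together with $\div_g\sigma=0$, this reduces exactly to \eqref{eq:LCBYmomentum}. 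The Hamiltonian constraint \eqref{eq:hamiltonian}, rewritten via the scalar curvature law \eqref{eq:scalchange} and $|\ovl A|_{\ovl g}^2 = \phi^{-2q}|\sigma+\ck_g W|_g^2$ and $(\tr_{\ovl g}\ovl K)^2 = \tau^2$, reduces exactly to \eqref{eq:LCBYHamiltonian}. Then I would check conditions \eqref{eq:1974params2}: $[\ovl g_{ab}]=\mathbf g$ is immediate from $\ovl g_{ab}=\phi^{q-2}g_{ab}$; $\ovl g^{ab}\ovl K_{ab}=\tau$ is immediate since $\ovl A_{ab}$ is trace-free w.r.t.\ $\ovl g_{ab}$; and for \eqref{eq:1974sigma2} I would observe that by construction $A_{ab}:=\sigma_{ab}+(\ck_g W)_{ab}$ is the representative of $[\ovl g_{ab},\ovl A_{ab}]^*$ w.r.t.\ $g_{ab}$, so by Proposition~\ref{prop:yorksplitconf} applied with this $g_{ab}$ (whose volume form is $\omega$), the York splitting $A_{ab}=\sigma_{ab}+(\ck_g W)_{ab}$ is exactly the splitting defining $P_\omega$, whence $P_\omega([\ovl g_{ab},\ovl A_{ab}]^*)=\bfsigma=P_\omega([\ovl g_{ab},\ovl K_{ab}]^*)$ (the trace part is killed by $P_\omega$ since it acts on $[\ovl g_{ab},\ovl K_{ab}]^*=[\ovl g_{ab},\ovl A_{ab}]^*$).

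For the converse, I would reverse this: given $(\ovl g_{ab},\ovl K_{ab})$ satisfying \eqref{eq:1974params2}, the condition $[\ovl g_{ab}]=\mathbf g$ gives a unique $\phi$ with $\ovl g_{ab}=\phi^{q-2}g_{ab}$; condition \eqref{eq:1974sigma2} unpacks, via the definition of $P_\omega$ and Proposition~\ref{prop:yorksplitconf}, into the existence of $W^a$ (unique up to a conformal Killing field, by the uniqueness clause of that proposition) with $[\ovl g_{ab},\ovl A_{ab}]^*=\bfsigma+k_\omega(\ck_{\mathbf g}W)$, which in $g_{ab}$-representatives reads $A_{ab}=\sigma_{ab}+(\ck_g W)_{ab}$; combined with $\ovl g^{ab}\ovl K_{ab}=\tau$ this yields precisely the decomposition \eqref{eq:1974decomp}. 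Since the substitution computation is an identity, \eqref{eq:constraints} for $(\ovl g_{ab},\ovl K_{ab})$ is equivalent to \eqref{eq:LCBY} for $(\phi,W)$, so $\phi,W$ solve the 1974 parameterized equations.

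The one step deserving care — the main obstacle — is keeping the bookkeeping straight in the substitution: namely tracking the powers of $\phi$ through the conformal transformation laws for $\div$, $|\cdot|^2$, $R$, and $dV$ so that the constants $\kappa$ and $q$ land where \eqref{eq:LCBY} says they do (in particular the appearance of $\kappa$ in the momentum constraint, which comes from the trace term, and the $\phi^q$ multiplying $d\tau$, which comes from $\div_{\ovl g}=\phi^{-q}\div_g$ cancelled against a $\phi^q$ from moving $d\tau$). I would not grind through this in full but would state the relevant transformation identities and note that the rest is routine algebra. Everything else is a direct appeal to Proposition~\ref{prop:yorksplitconf} and Definitions~\ref{def:proj}--\ref{def:cotangent}.
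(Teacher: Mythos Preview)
Your proposal is correct and follows essentially the same approach as the paper: the paper presents the argument as a discussion preceding the proposition (showing that conditions \eqref{eq:1974params} are equivalent to the decomposition \eqref{eq:1974decomp} via Proposition~\ref{prop:yorksplitconf} with the specific metric $dV_g=\omega$, then substituting into the constraints to obtain \eqref{eq:LCBY}), and you have simply assembled and fleshed out that discussion, adding explicit detail on the conformal bookkeeping in the substitution step.
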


Each choice of volume form $\omega$ leads to an independent parameterization of the
set of solutions of the constraints in the sense that once $\omega$ is fixed,
every solution $(\ovl g_{ab}, \ovl K_{ab})$ is associated with exactly one tuple
$(\mathbf{g}, \bfsigma, \tau, \omega)$ via equations \eqref{eq:1974params2}.  
The reverse implication need not be true, however. 
As mentioned in the introduction, for 1974 conformal data where $\tau$
is not nearly constant it is generally unknown how many solutions of the constraints
are associated with this data.

In the usual presentation of the 1974 conformal method
the representative conformal data consist of a metric $g_{ab}$,
a TT tensor $\sigma_{ab}$, and a mean curvature $\tau$,
and we begin by writing down the corresponding 1974 conformally parameterized constraint equations.
The triple $(g_{ab}, \sigma_{ab}, \tau)$ appears to be analogous
to representative data for the CMC conformal method, but
representative data determines 
conformal data $(\mathbf{g}, \bfsigma, \tau, \omega)$
as follows:
\begin{equation}
\ifjournal\eqalign{
\mathbf{g} &= [ g_{ab}]\\
\bfsigma &= [g_{ab}, \sigma_{ab}]^*\\
\tau &= \tau\\
\omega &= dV_g.	
}\else
\begin{aligned}
\mathbf{g} &= [ g_{ab}]\\
\bfsigma &= [g_{ab}, \sigma_{ab}]^*\\
\tau &= \tau\\
\omega &= dV_g.
\end{aligned}\fi
\end{equation}
Note that compared to the CMC conformal method,
the choice of metric now plays two roles: it selects the conformal
class of the solution metric and the choice of volume form $\omega$ in
Proposition \ref{prop:yorksplitconf}.  In this second role, it determines
a choice of identification of $T_{\mathbf{g}}\calC$ with
$T^*_\mathbf{g}\calC$.
Although the choice of volume form $\omega$ and the choice of background
metric $g_{ab}$ used to write down the PDEs \eqref{eq:LCBY} are tightly connected 
in the 1974 conformal method, there is no particular reason why this needs to
be the case.  Indeed, there are good reasons to decouple these two roles.  Given an
arbitrary $\omega$, one might want to work with a metric different from the one for
which $dV_{g} = \omega$; it may be more expedient to work with a metric with, e.g.,
positive scalar curvature instead.  The problem of finding a solution of the 
constraint equations satisfying conditions \eqref{eq:1974params2} does not depend 
on the choice of a background metric.  But the 1974 conformally parameterized constraint equations themselves do
depend on the choice $dV_g=\omega$.  If we work with a different background metric,
these equations will change, and we will see that this is the connection between
the 1974 conformal method and the Hamiltonian formulation of the CTS method.

When expressed in terms of representative conformal data,
the 1974 method appears to lack conformal covariance. 
If we start with representative data 
$(g_{ab}, \sigma_{ab}, \tau)$ and conformally change to representative data
$(\phi^{q-2} g_{ab}, \phi^{-2}\sigma_{ab}, \tau)$, there is no
reason to expect that the corresponding solutions of the constraints will be the same.
In terms of conformal objects, this transformation is equivalent to
fixing $(\mathbf{g},\bfsigma,\tau)$ but changing the choice of $\omega$.
Each choice of $\omega$ gives a separate parameterization of the solutions
of the constraint equations, and the parameterizations can be different from each other.
Indeed, recent work \cite{Maxwell:2014b}
shows that the 1974 conformal method parameterizes flat Kasner data in fundamentally different ways depending
on the choice of volume form. Certain data $(\bf{g},{\bfsigma},\tau)$ generate one-parameter
families of solutions for some volume forms, but generate only a single
solution for others.  So the choice of $\omega$ is an important part 
of the parameterization.  However, the task defined by the 1974 conformal method,
$$
\text{Find a solution of the constraints satisfying conditions \eqref{eq:1974params2}}.
$$
can be expressed in terms of conformal objects and therefore is by necessity conformally
covariant; the issue is simply how to express the problem when using a representative
metric different from the one with $dV_g = \omega$.

Finally, we observe that although the choice of $\omega$ is important for the 1974 conformal
method, if we restrict to constant mean curvature data $\tau = \tau_0$, 
then the choice of $\omega$ is irrelevant. The 1974 
conformally parameterized momentum constraint  \eqref{eq:LCBYmomentum} reads
\begin{equation}
\div_g \ck W = 0
\end{equation}
which is solved exactly by conformal Killing fields (i.e., $(\ck W)_{ab} = 0$). So there is no
longer any ambiguity about adding tangent and cotangent vectors in the expression
\begin{equation}
A_{ab} = \sigma_{ab} + (\ck W)_{ab}
\end{equation}
from equation \eqref{eq:1974Kdecom} and the choice of volume form is no longer needed.
The 1974 conformally parameterized Hamiltonian
constraint  \eqref{eq:LCBYHamiltonian} is
\begin{equation}
-2\kappa q\,\Lap \phi + R_g \phi -\left|\sigma\right|_g^2\phi^{-q-1} + \kappa \tau^2_0\phi^{q-1} = 0,
\end{equation}
i.e., the Lichnerowicz-York equation \eqref{eq:lich}, so the 1974 method reduces to the CMC-conformal method.


\section{The Conformal Thin-Sandwich Method}\label{sec:CTS}

The thick-sandwich conjecture, in the vacuum setting, states that given two metrics
$\ovl g_{ab}^{0}$ and $\ovl g_{ab}^{1}$ on $M^n$ one can find a
globally hyperbolic 
Ricci-flat Lorentzian spacetime, unique up to diffeomorphism, 
and two disjoint 
spacelike hypersurfaces of the spacetime, such that the induced metrics
on the hypersurfaces are the given two metrics. As described
in \cite{Bartnik:1993jl}, there are reasons to doubt the validity of
this conjecture.  It was also shown in \cite{Bartnik:1993jl}
that an infinitesimal variation, known as the thin-sandwich conjecture, 
turns out to hold under limited circumstances.  The 
thin-sandwich conjecture asserts that given $\ovl g_{ab}$ and its 
Lie derivative $\dot{\ovl g}_{ab} = \calL_T g_{ab}$ with
respect to some (to be determined) future pointing time like vector field 
$T^a$ along the surface, there is a unique Ricci-flat spacetime containing
a slice satisfying the initial conditions. Writing $T= N\nu^a + X^a$ where 
$\nu^a$ is the future pointing unit normal to the surface and $X^a$ is a vector field
tangential to the surface (i.e., in terms of the lapse $N$ and shift $X^a$) we have
\begin{equation}\label{eq:dgdt}
\dot{\overline{g}}_{ab} = 2\ovl N\, \ovl K_{ab} + \calL_X \ovl g_{ab}.
\end{equation}
So the goal is to find $(\ovl g_{ab}, \ovl K_{ab})$ solving the constraints,
along with a choice of $\ovl N$ and $X^a$, such that $\ovl g_{ab}$ is 
the given metric and such that \eqref{eq:dgdt} holds.
Using a perturbative technique, the authors of \cite{Bartnik:1993jl} 
exhibited an open set of data $(\ovl g_{ab}, \dot{\ovl g}_{ab})$, with
additional restrictions on the scalar curvature of $\ovl g_{ab}$, for which
the conjecture is true.  They also conjecture, however, that the thin-sandwich conjecture
is not well-posed in general.

York's conformal thin-sandwich method \cite{YorkJr:1999jo} is based on
a conformal version of the thin-sandwich conjecture. Given a conformal class $\mathbf{g}$ and
a conformal tangent vector $\mathbf{u}\in T_{\mathbf{g}} \calC$, 
we wish to find a solution $(\ovl{g}_{ab},\ovl{K}_{ab})$ of 
the vacuum Einstein constraint equations along with a lapse $\ovl{N}$ and a shift $\ovl{X}^a$
such that
\ifjournal\begin{eqnarray}
[\ovl{g}_{ab}] &= \mathbf{g},\label{eq:cts1}\\
\left[\ovl{g}_{ab},\dot{\ovl{g}}_{ab}\right] &= \mathbf{u}, \label{eq:proj}
\end{eqnarray}
\else
\begin{align}
[\ovl{g}_{ab}] &= \mathbf{g},\label{eq:cts1}\\
\left[\ovl{g}_{ab},\dot{\ovl{g}}_{ab}\right] &= \mathbf{u}, \label{eq:proj}
\end{align}
\fi
where $\dot{\ovl{g}}_{ab}$ is defined by equation \eqref{eq:dgdt} and 
(as noted at the end of  Definition \ref{def:tangent})
$\left[\ovl{g}_{ab},\dot{\ovl{g}}_{ab}\right]$ should be thought of as the 
pushforward of the tangent vector $\dot{\ovl g}_{ab}$
to an element of $T_{\mathbf{g}} \calC$.  One hopes that specification of
$(\mathbf{g}, \mathbf{u})$,  along with information about the trace part of $\ovl K_{ab}$
and the coordinate freedom in $(\ovl N, X^a)$, results in a unique solution of the constraint equations.

Let $\mathbf{g}\in\calC$ and $\mathbf{u}\in T_{\mathbf g}\calC$ be given and
suppose $(\ovl g_{ab}, \ovl K_{ab}, \ovl N, X^a)$ satisfy equations 
\eqref{eq:cts1}, and \eqref{eq:proj} (with $\dot{\ovl g}_{ab}$
defined by equation \eqref{eq:dgdt}).  Let 
$\tau$ be the trace of $\ovl K_{ab}$ so
\begin{equation}
\ovl K_{ab} = \ovl A_{ab} + \frac{\tau}{n} \ovl g_{ab}
\end{equation}
for some unique trace-free tensor $\ovl A_{ab}$.
Decomposing equation \eqref{eq:dgdt} into its trace-free and trace parts we 
obtain
\begin{equation}
\dot{\overline{g}}_{ab}
= \left(2\ovl N\, \ovl A_{ab} + (\ck_{\ovl g} X)_{ab}\right) + \left(\ovl N\,\tau + \div_{\ovl g} X\right)\frac{2}{n} \ovl g_{ab},
\end{equation}
so equation \eqref{eq:proj} is equivalent to
\begin{equation}\label{eq:projcoord}
2 \ovl N\, \ovl A_{ab} + (\ck_{\ovl g} X)_{ab} = \ovl u_{ab}
\end{equation}
where $\ovl u_{ab}$ is the representative of $\mathbf u$ with respect to $\ovl g_{ab}$.
Equation \eqref{eq:projcoord} can be solved for $\ovl A_{ab}$ to obtain
\begin{equation}\label{eq:CTSA}
\ovl A_{ab} = \frac{1}{2\ovl N}\left[\ovl u_{ab} - (\ck_{\ovl g} X)_{ab}\right],
\end{equation}
and the constraint equations \eqref{eq:constraints}
can be written in terms of $\ovl A_{ab}$ and $\tau$ to obtain
\begin{equation}\label{eq:tfconstraints}
\ifjournal\eqalign{
R_{\ovl g} - \left|\ovl A\right|_{\ovl g}^2 + \kappa \tau ^2 &= 0\\
\div_{\ovl g} \ovl A &= \kappa d \tau.	
}\else
\begin{aligned}
R_{\ovl g} - \left|\ovl A\right|_{\ovl g}^2 + \kappa \tau ^2 &= 0\\
\div_{\ovl g} \ovl A & = \kappa d \tau.
\end{aligned}\fi
\end{equation}
Substituting equation\eqref{eq:CTSA} into equations \eqref{eq:tfconstraints} we have 
\ifjournal
\numparts
\begin{eqnarray}
\customlabel{eq:conftrans}{\arabic{section}.\arabic{eqnval}}
R_{\ovl g} - \left|\frac{1}{2\ovl N}\left[\ovl u - \ck_{\ovl g} X\right]\right|^2_{\ovl g} + \kappa \tau ^2 &= 0
\\
\div_{\ovl g} \left[ \frac{1}{2\ovl N}\left[\ovl u - \ck_{\ovl g} X\right]\right] & = \kappa d \tau.
\end{eqnarray}
\endnumparts
\else
\begin{subequations}
\label{eq:conftrans}
\begin{alignat}{2}
R_{\ovl g} - \left|\frac{1}{2\ovl N}\left[\ovl u - \ck_{\ovl g} X\right]\right|^2_{\ovl g} + \kappa \tau ^2 &= 0
\\
\div_{\ovl g} \left[ \frac{1}{2\ovl N}\left[\ovl u - \ck_{\ovl g} X\right]\right] & = \kappa d \tau.
\end{alignat}
\end{subequations}\fi

York's prescription for solving these equations can be described as follows. Pick
an arbitrary $g_{ab}\in \mathbf{g}$ and let $u_{ab}$ be the representative of $\mathbf{u}$
with respect to $g_{ab}$. 
The solution metric $\ovl g_{ab}$ is then related to to $g_{ab}$ via an as-yet unknown conformal factor 
$\phi$ via $\ovl g_{ab} = \phi^{q-2} g_{ab}$, and we set $\ovl u_{ab} = \phi^{q-2} u_{ab}$
so that $[\ovl g_{ab}, \ovl u_{ab}]=[g_{ab}, u_{ab}] = \mathbf{u}$.
The shift $X^a$ is the other unknown, and the remaining quantities $\ovl N$ and $\tau$
are treated as parameters. The mean curvature $\tau$ is specified directly,
but the lapse obeys a nontrivial conformal transformation law: $\ovl N = \phi^q N$, where $N$ is a given 
positive function.  

Rewriting equations \eqref{eq:conftrans} in terms of $\phi$, $X$, $g_{ab}$, $u_{ab}$, $\tau$,
and $N$, we obtain the CTS equations
\ifjournal
\numparts
\begin{eqnarray}
\customlabel{eq:CTS}{\arabic{section}.\arabic{eqnval}}
\hskip-14ex-2\kappa q\Delta \phi +R_g\phi- \left| \frac{1}{2N}(u-\ck_g X) \right|^2_g\phi^{-q-1} + \kappa \tau^2\phi^{q-1} = 0
&\hskip0.5ex\text{\small[CTS Hamiltonian constraint]}\\
-\div_g\left[ \frac{1}{2N}\ck_g X\right] = -\div_g\left[ \frac{1}{2N} u\right] + \kappa\phi^q\;d\tau
&\hskip0.5ex\text{\small[CTS momentum constraint]}
\end{eqnarray}
\endnumparts
\else
\begin{subequations}\label{eq:CTS}
\begin{alignat}{2}
-2\kappa q\Delta \phi +R_g\phi- \left| \frac{1}{2N}(u-\ck_g X) \right|^2_g\phi^{-q-1} + \kappa \tau^2\phi^{q-1} &= 0
&\quad\text{\small[CTS Hamiltonian constraint]}\\
-\div_g\left[ \frac{1}{2N}\ck_g X\right] &= -\div_g\left[ \frac{1}{2N} u\right] + \kappa\phi^q\;d\tau
&\quad\text{\small[CTS momentum constraint]}
\end{alignat}
\end{subequations}\fi
to be solved for $\phi$ and $X^a$.

The conformally transforming lapse is the key ingredient of the conformal thin-sandwich
method, and can be motivated by examining the term $\ovl u - \ck_{\ovl g} X$ appearing
in the momentum constraint of equation \eqref{eq:conftrans}.  This term represents a conformal tangent vector, e.g.
\begin{equation}\label{eq:divarg-tangent}
\ovl u_{ab} - (\ck_{\ovl g} X)_{ab} = \phi^{q-2}\left[u_{ab} - (\ck_{g} X)_{ab}\right].
\end{equation}
The divergence, however, naturally acts on conformal cotangent vectors, so we should have
\begin{equation}\label{eq:divarg-cotangent}
\frac{1}{2\ovl N}\left[\ovl u_{ab} - (\ck_{\ovl g} X)_{ab}\right] = 
\phi^{-2} \frac{1}{2 N}\left[u_{ab} - (\ck_{g} X)_{ab}\right].
\end{equation}
Comparing equations \eqref{eq:divarg-tangent} and \eqref{eq:divarg-cotangent}
we arrive at York's transformation law $\ovl N = \phi^{q} N$.  

A conformally transforming
lapse is a conformal object associated with a conformal class $\mathbf{g}$, and it
will be useful to introduce the following notation.
\begin{definition}
A \textbf{densitized lapse} is an element of $(\calM \times C^\infty_+(M))/\sim$ where
\begin{equation}
(\tilde g_{ab},\tilde N) \sim (g_{ab}, N)
\end{equation}
if there is a smooth positive function $\phi$ on $M$ with $\tilde g_{ab} = \phi^{q-2} g_{ab}$
and $\tilde N = \phi^q N$.  We use the following notation:
\begin{itemize}
\item $[g_{ab}, N]$ is the densitized lapse determined by $(g_{ab},N)$,
\item $\calN_{\mathbf{g}}$ is the set of all densitized lapses $[g_{ab},N]$ with $g_{ab}\in \mathbf{g}$,
\item $\calN = \bigcup_{\mathbf{g}\in \calC} \calN_{\mathbf{g}}$.
\end{itemize}
\end{definition}

A tuple $(\mathbf{g},\mathbf{u}, \tau, \mathbf{N})$ 
where $\mathbf{g}\in\calC$, $\mathbf u \in T_{\mathbf g} \calC$ and $\mathbf N \in \calN_\mathbf g$
is called \textbf{CTS data}, and
$(g_{ab}, u_{ab}, \tau, N)$ is a \textbf{representative} if 
$[g_{ab}]=\mathbf g$, $[g_{ab}, u_{ab}] = \mathbf u$ and $[g_{ab}, N] = \mathbf N$.
With this notation, we summarize the previous discussion as follows.

\begin{proposition}[The CTS Method]\label{prop:CTS}
Let $(\mathbf g, \mathbf u, \tau, \mathbf N)$ be CTS data.

Suppose $(g_{ab}, u_{ab}, \tau, N)$ is a 
representative of the CTS data. If $\phi$ and $X^a$ solve the CTS equations \eqref{eq:CTS}
with respect to $(g_{ab}, u_{ab}, \tau, N)$, then
\begin{equation}\label{eq:ctsl-recipe}
\ifjournal\eqalign{
\ovl g_{ab} &= \phi^q g_{ab}\\
\ovl K_{ab} &= \phi^{-2}\frac{1}{2N}\left[u_{ab} - (\ck X)_{ab} \right] + \frac{\tau}{n}\ovl{g}_{ab}	
}\else
\begin{aligned}
\ovl g_{ab} &= \phi^q g_{ab}\\
\ovl K_{ab} &= \phi^{-2}\frac{1}{2N}\left[u_{ab} - (\ck X)_{ab} \right] + \frac{\tau}{n}\ovl{g}_{ab}
\end{aligned}\fi
\end{equation}
solve the constraint equations.  Moreover, setting $\ovl N = \phi^q N$
and
\begin{equation}
\dot{\ovl{g}}_{ab} = 2\ovl{N}\, \ovl K_{ab} + \Lie_X \ovl{g}_{ab},
\end{equation}
we have
\begin{equation}\label{eq:CTS-Lconds}
\ifjournal\eqalign{
\strut[\ovl{g}] &= \mathbf{g},\\
\strut[\ovl{g}, \dot{\ovl g}] &= \mathbf{u},\\
\ovl g^{ab} \ovl K_{ab} &= \tau,\\
\strut[\ovl g, \ovl N]&=\mathbf{N}.	
}\else
\begin{aligned}
\strut[\ovl{g}] &= \mathbf{g},\\
\strut[\ovl{g}, \dot{\ovl g}] &= \mathbf{u},\\
\ovl g^{ab} \ovl K_{ab} &= \tau,\\
\strut[\ovl g, \ovl N]&=\mathbf{N}.
\end{aligned}\fi
\end{equation}

Conversely, suppose $(\ovl g_{ab}, \ovl K_{ab})$ are solutions of the constraint equations
\eqref{eq:constraints} and that $\ovl N$ 
and $X^a$ are are a lapse and a shift such that conditions \eqref{eq:CTS-Lconds} hold.  
Let $(g_{ab}, u_{ab}, \tau, N)$ be any representative CTS data
for $(\mathbf{g}, \mathbf{u}, \tau, \mathbf{N})$ and let $\phi$ be the
unique conformal factor such that $\ovl g_{ab}=\phi^{q-2} g_{ab}$.  
Then $(\phi, X^a)$ solve the CTS equations \eqref{eq:CTS}
with respect to $(g_{ab}, u_{ab}, \tau, N)$ and equations \eqref{eq:ctsl-recipe} hold.
\end{proposition}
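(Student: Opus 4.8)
The strategy is to assemble the computations already carried out in the discussion preceding the statement into the two required implications, checking in each case that the chain of equivalences used there is genuinely bidirectional. No new ingredients are needed beyond the conformal transformation laws established in Section \ref{sec:conformal} (for $\ck$, for conformal tangent vectors, and for the densitized lapse) together with the scalar curvature law \eqref{eq:scalchange}.

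For the forward direction, suppose $(g_{ab},u_{ab},\tau,N)$ is a representative of the CTS data and $(\phi,X^a)$ solves the CTS equations \eqref{eq:CTS}. I would set $\ovl g_{ab}=\phi^{q-2}g_{ab}$, $\ovl N=\phi^q N$, and $\ovl u_{ab}=\phi^{q-2}u_{ab}$, and define $\ovl A_{ab}$ and $\ovl K_{ab}$ by \eqref{eq:ctsl-recipe}. Using $\ck_{\ovl g}X=\phi^{q-2}\ck_g X$, $dV_{\ovl g}=\phi^q\,dV_g$, and \eqref{eq:scalchange}, one checks that \eqref{eq:CTS} is precisely \eqref{eq:conftrans} written in terms of $(\phi,X,g,u,\tau,N)$; reading \eqref{eq:conftrans} back through \eqref{eq:CTSA} to \eqref{eq:tfconstraints}, and \eqref{eq:tfconstraints} back through the decomposition $\ovl K_{ab}=\ovl A_{ab}+(\tau/n)\ovl g_{ab}$ to \eqref{eq:constraints}, then shows that $(\ovl g_{ab},\ovl K_{ab})$ solves the constraints. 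The four conditions \eqref{eq:CTS-Lconds} are verified directly: $[\ovl g]=\mathbf g$ since $\ovl g_{ab}$ is conformal to $g_{ab}$; $\ovl g^{ab}\ovl K_{ab}=\tau$ since $\ovl A_{ab}$ is trace-free with respect to $\ovl g_{ab}$; $[\ovl g,\ovl N]=\mathbf N$ by the densitized-lapse relation $\ovl N=\phi^q N$; and $[\ovl g,\dot{\ovl g}]=\mathbf u$ because the trace-free part of $\dot{\ovl g}_{ab}=2\ovl N\,\ovl K_{ab}+\Lie_X\ovl g_{ab}$ is $2\ovl N\,\ovl A_{ab}+(\ck_{\ovl g}X)_{ab}=\ovl u_{ab}$, which is the representative of $\mathbf u$ with respect to $\ovl g_{ab}$.

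For the converse, I would start from $(\ovl g_{ab},\ovl K_{ab},\ovl N,X^a)$ satisfying \eqref{eq:CTS-Lconds} together with an arbitrary representative $(g_{ab},u_{ab},\tau,N)$ of the data. Since $[\ovl g]=\mathbf g=[g]$ there is a unique $\phi\in C^\infty_+(M)$ with $\ovl g_{ab}=\phi^{q-2}g_{ab}$; since $[\ovl g,\ovl N]=\mathbf N=[g,N]$, the definition of a densitized lapse forces $\ovl N=\phi^q N$; and $\ovl g^{ab}\ovl K_{ab}=\tau$ identifies the trace part of $\ovl K_{ab}$. Writing $\ovl K_{ab}=\ovl A_{ab}+(\tau/n)\ovl g_{ab}$ and using $[\ovl g,\dot{\ovl g}]=\mathbf u$ exactly as in the derivation of \eqref{eq:projcoord} (recalling that the pushforward $[\ovl g,\dot{\ovl g}]$ retains only the trace-free part of $\dot{\ovl g}_{ab}$, so the trace part of \eqref{eq:dgdt} imposes no constraint), I would solve for $\ovl A_{ab}=\frac{1}{2\ovl N}[\ovl u_{ab}-(\ck_{\ovl g}X)_{ab}]$ and transform to $g_{ab}$ to recover \eqref{eq:ctsl-recipe}. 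Finally, expressing \eqref{eq:constraints} in the form \eqref{eq:tfconstraints} and substituting this $\ovl A_{ab}$, again via \eqref{eq:scalchange}, shows that $(\phi,X^a)$ solves \eqref{eq:CTS}.

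I do not anticipate a genuine obstacle: the argument is bookkeeping with conformal weights. The point requiring the most care is ensuring that every link in the chain from \eqref{eq:constraints} through \eqref{eq:tfconstraints} and \eqref{eq:conftrans} to \eqref{eq:CTS} is an \emph{equivalence} and not merely an implication — in particular that passing between $\ovl K_{ab}$ and the pair $(\ovl A_{ab},\tau)$ loses no information, and that solving \eqref{eq:projcoord} for $\ovl A_{ab}$ is legitimate, which uses only $\ovl N>0$. Verifying $\ovl N>0$ and the existence and uniqueness of $\phi$ are the only hypothesis-checking steps, and both are immediate.
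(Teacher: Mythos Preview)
Your proposal is correct and matches the paper's approach exactly: the paper presents Proposition~\ref{prop:CTS} as a summary of the derivation immediately preceding it (the passage from \eqref{eq:dgdt} through \eqref{eq:conftrans} to \eqref{eq:CTS}), and your write-up simply makes explicit that each step in that chain is an equivalence and verifies conditions \eqref{eq:CTS-Lconds} directly. Your care in noting that $\ovl N>0$ is what permits solving \eqref{eq:projcoord} for $\ovl A_{ab}$, and that the trace/trace-free decomposition loses no information, is exactly the bookkeeping the paper leaves implicit.
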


The CTS method is conformally
covariant in the sense that if we change to a second background metric, and conformally
transform the remaining representative conformal data to represent the same conformal objects, 
the resulting set of solutions of the constraint equations are the same.  
\begin{proposition}\label{prop:CTSinv}
Let $(g_{ab}, u_{ab}, \tau, N)$ and $(\tilde g_{ab}, \tilde u_{ab}, \tau, \tilde N)$
be representative CTS data both corresponding to the same CTS data 
$(\mathbf{g}, \mathbf{u}, \tau, \mathbf{N})$, and let $\psi$ be the unique conformal
factor such that $\tilde g_{ab} = \psi^{q-2} g_{ab}$.  
Then $(\phi,X^a)$ solves the CTS equations \eqref{eq:CTS} with respect to $(g_{ab}, u_{ab}, \tau, N)$
if and only if $(\psi^{-1}\phi, X^a)$ solves the CTS equations with respect to
$(\tilde{g}_{ab}, \tilde{u}_{ab}, \tau, \tilde N)$, and the corresponding solution $(\ovl g_{ab}, \ovl K_{ab})$
of the Einstein constraint equations in both cases is the same.
\end{proposition}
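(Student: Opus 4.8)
The plan is to mimic the proof of Proposition~\ref{prop:CMCinv}, treating the two implications of Proposition~\ref{prop:CTS} as a black box. First I would unwind what it means for the two representatives to correspond to the same CTS data $(\mathbf g,\mathbf u,\tau,\mathbf N)$: there is a single conformal factor $\psi$ with $\tilde g_{ab}=\psi^{q-2}g_{ab}$, $\tilde u_{ab}=\psi^{q-2}u_{ab}$, and $\tilde N=\psi^{q}N$. The first is the hypothesis $\tilde g_{ab}=\psi^{q-2}g_{ab}$; the second and third then follow from the defining transformation laws of a conformal tangent vector and of a densitized lapse, since once $\psi$ is fixed by the metrics, $u_{ab}$ and $N$ have no freedom left.

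For the forward implication, suppose $(\phi,X^a)$ solves the CTS equations~\eqref{eq:CTS} with respect to $(g_{ab},u_{ab},\tau,N)$. By the first part of Proposition~\ref{prop:CTS}, the pair $(\ovl g_{ab},\ovl K_{ab})$ produced by recipe~\eqref{eq:ctsl-recipe} solves the constraint equations, and together with $\ovl N=\phi^{q}N$ and the shift $X^a$ it satisfies conditions~\eqref{eq:CTS-Lconds}; in particular $[\ovl g_{ab},\ovl N]=\mathbf N$. The key observation is that~\eqref{eq:CTS-Lconds} is phrased entirely in terms of the conformal objects $\mathbf g,\mathbf u,\tau,\mathbf N$, which are common to both representatives, so those conditions hold verbatim when the second representative is used. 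Moreover $\ovl g_{ab}=\phi^{q-2}g_{ab}=(\psi^{-1}\phi)^{q-2}\tilde g_{ab}$, so $\psi^{-1}\phi$ is precisely the conformal factor relating $\ovl g_{ab}$ to the background $\tilde g_{ab}$. Applying the converse part of Proposition~\ref{prop:CTS} to $(\ovl g_{ab},\ovl K_{ab},\ovl N,X^a)$ with the representative $(\tilde g_{ab},\tilde u_{ab},\tau,\tilde N)$ then shows that $(\psi^{-1}\phi,X^a)$ solves the CTS equations with respect to $(\tilde g_{ab},\tilde u_{ab},\tau,\tilde N)$ and that recipe~\eqref{eq:ctsl-recipe}, applied with this second representative, reproduces the very same $(\ovl g_{ab},\ovl K_{ab})$. (If one prefers, the coincidence of solutions can also be checked by a one-line computation using $\ck_{\tilde g}X=\psi^{q-2}\ck_{g}X$ and the transformation laws above, exactly as at the end of the proof of Proposition~\ref{prop:CMCinv}.)

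The reverse implication is obtained by interchanging the roles of the two representatives, i.e.\ replacing $\psi$ by $\psi^{-1}$ and $\phi$ by $\psi^{-1}\phi$, so the biconditional and the identification of solutions follow. I do not expect a serious obstacle: the substance is already in Proposition~\ref{prop:CTS} and, ultimately, in the conformal-object formulation of the CTS method, and this statement is essentially its restatement. The only points requiring any care are bookkeeping ones — that the shift $X^a$ is genuinely unchanged (it enters the converse direction of Proposition~\ref{prop:CTS} as a given datum, not as something reconstructed), and that the physical lapse $\ovl N=\phi^{q}N=(\psi^{-1}\phi)^{q}\tilde N$ and the mean curvature $\tau$ agree for both representatives, both of which are immediate once one notes that $\ovl g_{ab}$ and $\ovl N$ represent the fixed densitized lapse $\mathbf N$.
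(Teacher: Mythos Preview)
Your proposal is correct and is exactly the approach the paper has in mind: the paper's proof of this proposition is simply the one-line remark that it is analogous to that of Proposition~\ref{prop:CMCinv}, and you have carried out precisely that analogy, invoking the forward and converse parts of Proposition~\ref{prop:CTS} in place of Proposition~\ref{prop:CMC}.
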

\begin{proof}
The proof is analogous to that of Proposition \ref{prop:CMCinv}.
\end{proof}

Each choice of densitized lapse yields an independent parameterization of the set of solutions of the
constraint equations in the sense that once the densitized lapse $\mathbf N$
is fixed, each solution of the constraints is associated with a tuple of CTS data
$(\mathbf g, \mathbf u, \tau, \mathbf N)$, and this data is unique up to adding an element of
$\Im\ck_\mathbf g$ to $\mathbf u$.

\begin{proposition}\label{prop:CTSparam}
Let $\mathbf{g}$ be a conformal class and let $\mathbf{N}\in \calN_{\mathbf{g}}$.  
Suppose $(\ovl{g}_{ab}, \ovl{K}_{ab})$ is a solution of the constraints
with $\ovl{g}_{ab}\in\mathbf{g}$.  Then $(\ovl{g}_{ab}, \ovl{K}_{ab})$ is
generated by CTS data $(\mathbf g, \mathbf u, \tau, \mathbf N)$ if and only if
$\tau = \ovl g^{ab} \ovl K_{ab}$ and $\mathbf u \in [\ovl g_{ab}, 2\ovl N\, \ovl K_{ab}] + \Im \ck_{\mathbf g}$,
where $\ovl N$ is the representative of $\mathbf N$ with respect to $\ovl g_{ab}$.
\end{proposition}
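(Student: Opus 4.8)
The plan is to reduce everything to Proposition~\ref{prop:CTS}. By that proposition, $(\ovl g_{ab},\ovl K_{ab})$ is generated by CTS data $(\mathbf g,\mathbf u,\tau,\mathbf N)$ precisely when there exist a lapse $\ovl N$ and a shift $X^a$ for which conditions \eqref{eq:CTS-Lconds} hold. So I would begin by observing that, with the metric $\ovl g_{ab}\in\mathbf g$ already fixed, two of those four conditions are rigid: $[\ovl g_{ab},\ovl N]=\mathbf N$ determines $\ovl N$ uniquely as the representative of $\mathbf N$ with respect to $\ovl g_{ab}$, and $\ovl g^{ab}\ovl K_{ab}=\tau$ determines $\tau$. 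The content of the proposition is therefore that the remaining two conditions --- $[\ovl g_{ab}]=\mathbf g$ (automatic) and $[\ovl g_{ab},\dot{\ovl g}_{ab}]=\mathbf u$ with $\dot{\ovl g}_{ab}=2\ovl N\,\ovl K_{ab}+\Lie_X\ovl g_{ab}$ --- can be met for some $X^a$ exactly when $\mathbf u\in[\ovl g_{ab},2\ovl N\,\ovl K_{ab}]+\Im\ck_{\mathbf g}$.

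For the forward implication I would expand $[\ovl g_{ab},\dot{\ovl g}_{ab}]$. Since the pushforward $S_{ab}\mapsto[\ovl g_{ab},S_{ab}]$ (take the trace-free part with respect to $\ovl g_{ab}$, then pass to the equivalence class) is linear, and since $[\ovl g_{ab},\Lie_X\ovl g_{ab}]=[\ovl g_{ab},(\ck_{\ovl g}X)_{ab}]=\ck_{\mathbf g}X$ by the definition of the conformal Killing operator, one gets
\[
\mathbf u=[\ovl g_{ab},\dot{\ovl g}_{ab}]=[\ovl g_{ab},2\ovl N\,\ovl K_{ab}]+\ck_{\mathbf g}X .
\]
Hence if $(\ovl g_{ab},\ovl K_{ab})$ is generated by $(\mathbf g,\mathbf u,\tau,\mathbf N)$, then $\tau=\ovl g^{ab}\ovl K_{ab}$ and $\mathbf u$ differs from $[\ovl g_{ab},2\ovl N\,\ovl K_{ab}]$ by an element of $\Im\ck_{\mathbf g}$.

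Conversely, suppose $\tau=\ovl g^{ab}\ovl K_{ab}$ and $\mathbf u\in[\ovl g_{ab},2\ovl N\,\ovl K_{ab}]+\Im\ck_{\mathbf g}$, where $\ovl N$ is the representative of $\mathbf N$ with respect to $\ovl g_{ab}$. Then I would pick $X^a$ with $\ck_{\mathbf g}X=\mathbf u-[\ovl g_{ab},2\ovl N\,\ovl K_{ab}]$; running the displayed identity backwards gives $[\ovl g_{ab},\dot{\ovl g}_{ab}]=\mathbf u$ for $\dot{\ovl g}_{ab}=2\ovl N\,\ovl K_{ab}+\Lie_X\ovl g_{ab}$, so all of \eqref{eq:CTS-Lconds} holds (with $\ovl g_{ab}\in\mathbf g$ giving the first condition), and Proposition~\ref{prop:CTS} produces the desired conclusion.

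The whole argument is bookkeeping around the definition of $T_{\mathbf g}\calC$, the pushforward notation of Definition~\ref{def:tangent}, and the conformal Killing operator; the only point needing a little care is that $\ovl N$ and $\tau$ are \emph{not} free parameters once $(\ovl g_{ab},\ovl K_{ab})$ and $\mathbf N$ are given, so that the sole degree of freedom is the shift, whose effect on $\mathbf u$ is exactly the addition of an element of $\Im\ck_{\mathbf g}$. There is no analytic difficulty here --- the PDE theory is entirely absorbed into Proposition~\ref{prop:CTS}.
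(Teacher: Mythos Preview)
Your proof is correct and follows essentially the same approach as the paper: both reduce to Proposition~\ref{prop:CTS}, observe that with $\ovl g_{ab}$ fixed the conditions $[\ovl g_{ab},\ovl N]=\mathbf N$ and $\ovl g^{ab}\ovl K_{ab}=\tau$ force $\ovl N$ and $\tau$, and then expand $[\ovl g_{ab},\dot{\ovl g}_{ab}]$ linearly using $[\ovl g_{ab},\Lie_X\ovl g_{ab}]=\ck_{\mathbf g}X$ to see that the remaining freedom in $\mathbf u$ is exactly $\Im\ck_{\mathbf g}$. The paper packages the two directions as a single biconditional chain rather than separating forward and converse, but the content is identical.
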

\begin{proof}
Let $(\ovl g_{ab}, \ovl K_{ab})$ be a solution of the constraints with $\ovl g_{ab}\in\mathbf{g}$, and let 
$\ovl N$ be the representative of $\mathbf{N}$ with respect to $\ovl g_{ab}$.
From Proposition \ref{prop:CTS} we see that 
$(\mathbf g, \mathbf u, \tau, \mathbf N)$ generates $(\ovl g_{ab}, \ovl K_{ab})$ if and only if
$\tau = \ovl g^{ab} \ovl K_{ab}$ and 
there is a vector field $X^a$ such that
\begin{equation}
\dot{\ovl g}_{ab} = 2\ovl N \,\ovl K_{ab} + \calL_{X} \ovl g_{ab}
\end{equation}
satisfies
\begin{equation}
[\ovl g_{ab}, \dot{\ovl g}_{ab}] = \mathbf{u}.
\end{equation}
Given a vector field $X^a$,
\begin{equation}
\ifjournal\eqalign{
[\ovl g_{ab}, 2\ovl N\, \ovl K_{ab} + \calL_{X}\ovl g_{ab}] 
&=[\ovl g_{ab}, 2\ovl N\, \ovl K_{ab}] + [\ovl g_{ab}, \calL_{X} \ovl g_{ab}]\\
&=[\ovl g_{ab}, 2\ovl N\, \ovl K_{ab}] + [\ovl g_{ab}, \ck_{\ovl g} {X}]\\
&=[\ovl g_{ab}, 2\ovl N\, \ovl K_{ab}] + \ck_{\mathbf g} X.	
}\else
\begin{aligned}
{}
[\ovl g_{ab}, 2\ovl N\, \ovl K_{ab} + \calL_{X}\ovl g_{ab}] 
&=[\ovl g_{ab}, 2\ovl N\, \ovl K_{ab}] + [\ovl g_{ab}, \calL_{X} \ovl g_{ab}]\\
&=[\ovl g_{ab}, 2\ovl N\, \ovl K_{ab}] + [\ovl g_{ab}, \ck_{\ovl g} {X}]\\
&=[\ovl g_{ab}, 2\ovl N\, \ovl K_{ab}] + \ck_{\mathbf g} X.
\end{aligned}\fi
\end{equation}
Thus $(\mathbf g, \mathbf u, \tau, \mathbf N)$ generates $(\ovl g_{ab}, \ovl K_{ab})$ 
if and only if $\tau = \ovl g^{ab} \ovl K_{ab}$ and
\begin{equation}
\mathbf u \in [\ovl g_{ab}, 2\ovl N\, \ovl K_{ab}] + \Im \ck_{\mathbf g}.
\end{equation}
\end{proof}

Proposition \ref{prop:CTSparam} shows that the true parameters for the
CTS method are a conformal class $\mathbf{g}$, a mean curvature $\tau$, 
a densitized lapse $\mathbf N$, and an element
of
$
T_{\mathbf{g}} \calC / \Im \ck_{\mathbf{g}},
$
i.e., an element of $T_{\mathbf g}(\calC/\calD_0)$ 
from Definition \ref{def:cgv}.  After selecting a densitized
lapse $\mathbf N$, a tuple
\begin{equation}
(\mathbf g, \mathbf u + \Im \ck_{\mathbf g}, \tau, \mathbf N)
\end{equation}
of \textbf{geometric CTS data} is uniquely determined by 
a solution $(\ovl g_{ab}, \ovl K_{ab})$ of the constraints, and
the CTS method attempts to invert this map.

\section{The Hamiltonian Formulation of the CTS Method}\label{sec:CTSH}
The CTS method was presented by York as a Lagrangian alternative to
the standard (Hamiltonian) conformal method.  Subsequently
Pfeiffer and York demonstrated 
a Hamiltonian approach \cite{Pfeiffer:2003ka} to the CTS method that will allow
us to link the CTS method and to the 1974 conformal method.
We will call the method described here the CTS-H method
(and will call the original conformal thin-sandwich approach
the CTS-L method if emphasis on its Lagrangian nature is desired).

Although not presented this way in \cite{Pfeiffer:2003ka}, the
key to the CTS-H method is the introduction of a lapse-dependent
way of translating between conformal tangent vectors and smooth
conformal cotangent vectors defined as follows.

\begin{definition}\label{def:iotaN}
Let $\mathbf{g}$ be a conformal class and let $\mathbf{N}$ be a 
densitized lapse.  Given a conformal velocity $\mathbf{u}\in T_{\mathbf{g}}\calC$,
we wish to identify it with an element of $T^*_{\mathbf{g}}\calC$.  To do this,
let $g_{ab}$ be an arbitrary representative of $\mathbf{g}$ and let
$u_{ab}$ and $N$ be the representatives of $\mathbf{u}$ and $\mathbf{N}$ 
with respect to $g_{ab}$. We then define
\begin{equation}\label{eq:iotaN}
k_\mathbf{N}(\mathbf u) = [ g_{ab}, (1/2N) u_{ab}]^*.
\end{equation}
\end{definition}

To see that $k_\mathbf{N}$ is well-defined, suppose we
use a different representative metric $\tilde g_{ab} = \phi^{q-2} g_{ab}$.
Then $\tilde u_{ab} = \phi^{q-2} u_{ab}$ and $\tilde N = \phi^q N$ so
\begin{equation}
\frac{1}{2\tilde N} \tilde u_{ab} = \frac{1}{2\phi^q N} \phi^{q-2} u_{ab} = \phi^{-2}\frac{1}{2 N}  u_{ab}.
\end{equation}
Hence
\begin{equation}
[ \tilde g_{ab}, (1/2\tilde N) \tilde u_{ab}]^*
= [ \phi^q g_{ab}, \phi^{-2}(1/2 N)  u_{ab}]^* 
= [ g_{ab}, (1/2 N)  u_{ab}]^*
\end{equation}
as needed.

The map $k_\mathbf{N}$ plays the same role
for the CTS-H method as $k_{\omega}$ 
does for the 1974
conformal method, and 
in fact there is a way to identify densitized lapses
and volume forms such that the corresponding maps 
$k_{\mathbf{N}}$ and $k_{\omega}$
are identical.

\begin{proposition}\label{prop:Nisomega}
Suppose $\mathbf{N} \in \calN_{\mathbf{g}}$ for some
conformal class $\mathbf{g}$.  Let $g_{ab}$ be an arbitrary
representative of $\mathbf{g}$, let $N$ be the
representative of $\mathbf{N}$ with respect to $g_{ab}$,
and let
\begin{equation}\label{eq:omegadef}
\omega = \frac{1}{2N} dV_g.
\end{equation}
Then
\begin{equation}\label{eq:Nisomega}
k_{\mathbf{N}} = k_{\omega}.
\end{equation}
\end{proposition}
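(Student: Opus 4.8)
The plan is to show that the two maps agree by computing both sides in terms of a single representative metric, namely the given $g_{ab}$. Since $k_{\mathbf N}$ and $k_\omega$ are both maps $T_{\mathbf g}\calC \ra T^*_{\mathbf g}\calC$, it suffices to check that $k_{\mathbf N}(\mathbf u) = k_\omega(\mathbf u)$ for an arbitrary $\mathbf u \in T_{\mathbf g}\calC$. First I would fix the representative $g_{ab}$ of $\mathbf g$ used in the statement, let $N$ be the representative of $\mathbf N$ with respect to $g_{ab}$, and let $u_{ab}$ be the representative of $\mathbf u$ with respect to the same $g_{ab}$.

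Next I would unwind the two definitions. By Definition \ref{def:iotaN}, computed with the representative $g_{ab}$, we have $k_{\mathbf N}(\mathbf u) = [g_{ab}, (1/2N)\,u_{ab}]^*$. On the other side, Proposition \ref{prop:TvsTstar}, equation \eqref{eq:ttstarlapse}, gives $k_\omega(\mathbf u) = [g_{ab}, (\omega/dV_g)\,u_{ab}]^*$, again with respect to the same $g_{ab}$. Now substitute the definition \eqref{eq:omegadef} of $\omega$: since $\omega = \tfrac{1}{2N}\,dV_g$, the density ratio is $\omega/dV_g = 1/(2N)$, so the two expressions coincide term by term. Hence $k_{\mathbf N}(\mathbf u) = k_\omega(\mathbf u)$ for every $\mathbf u$, which is \eqref{eq:Nisomega}.

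Two small loose ends should be addressed for completeness. One is that $\omega$ as defined by \eqref{eq:omegadef} is a genuine volume form (nonvanishing and positively oriented), which is immediate because $N\in C^\infty_+(M)$ so $1/(2N)$ is a positive smooth function and $dV_g$ is a positively oriented volume form; this also means Proposition \ref{prop:TvsTstar} genuinely applies to it. The other is that $\omega$ does not depend on the choice of representative $g_{ab}$ used in \eqref{eq:omegadef}: if $\tilde g_{ab} = \phi^{q-2} g_{ab}$ then $\tilde N = \phi^q N$ and $dV_{\tilde g} = \phi^q\, dV_g$, so $\tfrac{1}{2\tilde N}\,dV_{\tilde g} = \tfrac{1}{2\phi^q N}\,\phi^q\, dV_g = \tfrac{1}{2N}\,dV_g = \omega$; this is exactly the same cancellation already used to show $k_{\mathbf N}$ is well-defined, so it can be invoked rather than recomputed.

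There is no real obstacle here — the content of the proposition is that York's densitized-lapse transformation law $\tilde N = \phi^q N$ and the volume-form transformation law $dV_{\tilde g} = \phi^q\,dV_g$ are the \emph{same} law, so a densitized lapse and a volume form carry identical information once the conformal class is fixed. The only thing to be careful about is consistently using a single background metric when writing down both $k_{\mathbf N}(\mathbf u)$ and $k_\omega(\mathbf u)$ via their respective coordinate formulas, since \eqref{eq:iotaN} and \eqref{eq:ttstarlapse} are each stated with respect to an arbitrary representative; choosing the same $g_{ab}$ in both makes the identification $\omega/dV_g = 1/(2N)$ transparent.
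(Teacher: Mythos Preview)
Your proof is correct and follows essentially the same approach as the paper: unwind the definitions of $k_{\mathbf N}$ and $k_\omega$ with respect to a common representative metric and observe that both yield the same cotangent vector. The only cosmetic difference is that the paper first establishes the representative-independence of $\omega$ and then passes to the special metric with $dV_g=\omega$ (equivalently $N=1/2$) so that both sides reduce to $[g_{ab},u_{ab}]^*$, whereas you stay with the arbitrary $g_{ab}$ and compare $[g_{ab},(1/2N)u_{ab}]^*$ directly; your version is arguably more direct.
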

\begin{proof}
We first observe that $\omega$ defined by equation \eqref{eq:omegadef} does not depend
on the choice of conformal representative.  Indeed, if $\tilde g_{ab} = \phi^{q-2} g_{ab}$
for some conformal factor $\phi$ then $\tilde N = \phi^q N$ and $dV_{\tilde g} = \phi^q dV_g$
so
\begin{equation}
\frac{1}{2N} dV_g = \frac{1}{2\tilde N} dV_{\tilde g}.
\end{equation}
So to establish equation \eqref{eq:Nisomega} it suffices to work with a convenient
background metric.
Let $g_{ab}$ be the representative metric such that $dV_g = \omega$ (or equivalently such
that $N=1/2$).  

Suppose $\mathbf{u}=[g_{ab},u_{ab}]$ is a conformal tangent vector.
Since $N=1/2$, equation \eqref{eq:iotaN} then implies
\begin{equation}
k_{\mathbf{N}}(\mathbf{u}) = [g_{ab}, u_{ab}]^*.
\end{equation}
On the other hand, since $dV_g = \omega$, equation \eqref{eq:ttstarlapse} implies
\begin{equation}
k_{\omega}(\mathbf{u}) = [g_{ab}, (dV_g/\omega) u_{ab}]^* = [g_{ab},u_{ab}]^*
\end{equation}
as well. Hence $k_{\mathbf{N}}(\mathbf{u})=k_{\omega}(\mathbf{u})$.
\end{proof}

From Propositions \ref{prop:TvsTstar} and \ref{prop:Nisomega} it follows that
each $k_{\mathbf N}$ is a bijection onto the space of smooth cotangent vectors
and admits an inverse $k^{-1}_{\mathbf N}$.  It then follows from equation \eqref{eq:iotaN} 
that for any smooth conformal cotangent vector $\mathbf A=[g_{ab}, A_{ab}]^*$,
\begin{equation}\label{eq:iotaNinv}
k^{-1}_{\mathbf N} (\mathbf A) = [g_{ab}, 2N A_{ab}].
\end{equation}

From Proposition \ref{prop:Nisomega} we can translate Proposition \ref{prop:yorksplitconf}
in terms of densitized lapses.

\begin{proposition}\label{prop:yorksplitlapse}
Let $\mathbf{g}\in\calC$ and let $\mathbf{A}\in T^*_\mathbf{g} \calC$.  Given a choice of
densitized lapse $\mathbf{N}$
there is a conformal geometric momentum $\bfsigma$ and a vector field $W^a$ such that
\begin{equation}\label{eq:yorksplitlapse}
\mathbf{A} = \bfsigma + k_{\mathbf{N}}( \ck_\mathbf{g} W),
\end{equation}
The decomposition is unique up the the addition of a conformal Killing field to $W^a$.

Moreover, if $g_{ab}$ is an arbitrary representative of $\mathbf{g}$, and
if $A_{ab}$, $\sigma_{ab}$ and $N$ are the representatives of 
$\mathbf{A}$, $\bfsigma$ and $\mathbf{N}$ with respect to $g_{ab}$, then
\begin{equation}\label{eq:Alapse}
A_{ab} = \sigma_{ab} + \frac{1}{2N}(\ck_{g} W)_{ab}.
\end{equation}
\end{proposition}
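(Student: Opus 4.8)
The plan is to reduce Proposition~\ref{prop:yorksplitlapse} to Proposition~\ref{prop:yorksplitconf} via the identification of densitized lapses with volume forms established in Proposition~\ref{prop:Nisomega}. Given a densitized lapse $\mathbf{N}\in\calN_{\mathbf{g}}$, pick any representative $g_{ab}\in\mathbf{g}$, let $N$ be the representative of $\mathbf{N}$ with respect to $g_{ab}$, and set $\omega = (1/2N)\,dV_g$. As already noted in the proof of Proposition~\ref{prop:Nisomega}, this $\omega$ is independent of the choice of representative, and Proposition~\ref{prop:Nisomega} gives $k_{\mathbf{N}} = k_{\omega}$. Therefore equation~\eqref{eq:yorksplitlapse} is literally the same statement as equation~\eqref{eq:yorksplitconf} with this choice of $\omega$, and the existence of $\bfsigma\in T^*_{\mathbf g}(\calC/\calD_0)$ and $W^a$, together with the uniqueness of the decomposition up to a conformal Killing field, follows immediately from Proposition~\ref{prop:yorksplitconf}.

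It remains to verify the tensorial formula~\eqref{eq:Alapse}, and here is the one place where a small amount of care is needed: Proposition~\ref{prop:yorksplitconf} phrases its tensorial conclusion~\eqref{eq:Asplitconf} only for the distinguished representative $g_{ab}$ with $dV_g = \omega$ (equivalently $N = 1/2$), whereas the present proposition asks for the representatives with respect to an \emph{arbitrary} $g_{ab}\in\mathbf{g}$. So the plan is: first apply Proposition~\ref{prop:yorksplitconf} with the distinguished representative $g^0_{ab}$ satisfying $dV_{g^0} = \omega$, obtaining $A^0_{ab} = \sigma^0_{ab} + (\ck_{g^0} W)_{ab}$ where $A^0$, $\sigma^0$ are the representatives of $\mathbf{A}$, $\bfsigma$ at $g^0_{ab}$ and $N^0 = 1/2$; then conformally transform this identity to the given $g_{ab} = \phi^{q-2} g^0_{ab}$ using the transformation laws $A_{ab} = \phi^{-2} A^0_{ab}$ (cotangent law), $\sigma_{ab} = \phi^{-2}\sigma^0_{ab}$ (cotangent law), $\ck_g W = \phi^{q-2}\ck_{g^0} W$ (the identity established in Section~\ref{sec:conformal}), and $N = \phi^q N^0 = \phi^q/2$. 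Multiplying $A^0_{ab} = \sigma^0_{ab} + (\ck_{g^0} W)_{ab}$ through by $\phi^{-2}$ gives $A_{ab} = \sigma_{ab} + \phi^{-q}(\ck_g W)_{ab} = \sigma_{ab} + \frac{1}{2N}(\ck_g W)_{ab}$, which is exactly~\eqref{eq:Alapse}.

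The main (mild) obstacle is simply bookkeeping the powers of $\phi$ correctly in this last transformation step; conceptually there is nothing hard, since all the conformal transformation laws invoked are exactly those recorded earlier in the excerpt. Alternatively, one can avoid even this by observing directly from Definition~\ref{def:iotaN} that $k_{\mathbf N}(\ck_{\mathbf g} W) = [g_{ab}, (1/2N)(\ck_g W)_{ab}]^*$ for \emph{any} representative $g_{ab}$, so that equation~\eqref{eq:yorksplitlapse} unpacks at $g_{ab}$ as $[g_{ab}, A_{ab}]^* = [g_{ab}, \sigma_{ab}]^* + [g_{ab}, (1/2N)(\ck_g W)_{ab}]^*$, which is equivalent to the tensorial identity~\eqref{eq:Alapse} since the representative of a cotangent vector with respect to a fixed metric is unique. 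I would present the proof in this second, more direct form, citing Proposition~\ref{prop:Nisomega} for $k_{\mathbf N} = k_\omega$ and Proposition~\ref{prop:yorksplitconf} for the existence and uniqueness of the splitting.
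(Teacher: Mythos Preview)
Your proposal is correct and, in the ``second, more direct form'' you say you would actually present, it is essentially identical to the paper's proof: the paper also cites Propositions~\ref{prop:yorksplitconf} and~\ref{prop:Nisomega} for~\eqref{eq:yorksplitlapse}, and then for~\eqref{eq:Alapse} it simply unpacks $k_{\mathbf N}(\ck_{\mathbf g} W)=[g_{ab},(1/2N)(\ck_g W)_{ab}]^*$ via Definition~\ref{def:iotaN} at an arbitrary representative $g_{ab}$. Your first route (conformally transforming from the distinguished representative with $dV_{g^0}=\omega$) is also valid but, as you note, unnecessarily indirect.
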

\begin{proof}
Equation \eqref{eq:yorksplitlapse} is immediate from Propositions \ref{prop:yorksplitconf}
and \ref{prop:Nisomega} and it remains to establish equation \eqref{eq:Alapse}.

Starting from equation \eqref{eq:yorksplitlapse}, let $g_{ab}$ be a representative
of $\mathbf{g}$, and let $A_{ab}$, $\sigma_{ab}$ and $N$ be the representatives 
of $\mathbf{A}$, $\bfsigma$, and $\mathbf{N}$ with respect to $g_{ab}$.
By definition
\begin{equation}
\ck_\mathbf{g} W = [g_{ab}, (\ck_{g} W)_{ab}]
\end{equation}
and hence equation \eqref{eq:iotaN} implies
\begin{equation}
k_{\mathbf{N}}( \ck_\mathbf{g} W) = [g_{ab}, (1/2N) (\ck_g W)_{ab}]^*.
\end{equation}
So equation \eqref{eq:yorksplitlapse} reads
\begin{equation}  
[g_{ab}, A_{ab}]^* = [g_{ab},\sigma_{ab}]^* + [g_{ab}, (2N)^{-1} (\ck_g W)_{ab}]^*
\end{equation}
which establishes equation \eqref{eq:Alapse}.
\end{proof}

Although Propositions \ref{prop:yorksplitconf} and \ref{prop:yorksplitlapse}
express the same fact, equation \eqref{eq:Alapse} from Proposition \ref{prop:yorksplitlapse}
is more flexible than its counterpart equation \eqref{eq:Asplitconf}
from Proposition \ref{prop:yorksplitconf}.
Equation \eqref{eq:Alapse} is written with respect to an arbitrary
background metric whereas equation \eqref{eq:Asplitconf} is written
with respect to a single background metric (the one where $dV_g=\omega$).

From Definition \ref{def:proj} we have volume-form dependent projections
$P_{\omega}$ from $T^*_\mathbf{g} \calC$ to the subspace of conformal geometric momenta.
We similarly define densitized-lapse-dependent projections $P_{\mathbf{N}}$ by
\begin{equation}
P_{\mathbf{N}}(\mathbf{A})=\bfsigma
\end{equation}
where $\bfsigma$ is the unique conformal geometric momentum from equation \eqref{eq:yorksplitlapse}.  
Following the construction of Definition \ref{def:j} we also have densitized-lapse-dependent
identifications $j_{\mathbf N}:T_{\mathbf g}(\calC/\calD_0)\ra T^*_{\mathbf g}(\calC/\calD_0)$
defined by
\begin{equation}\label{eq:jN}
j_{\mathbf N}(\mathbf u + \Im \ck_{\mathbf g})  = P_{\mathbf N}(k_{\mathbf N}(\mathbf u)),
\end{equation}
and the analogue of the commutative-diagram \eqref{eq:ij} holds as well.  Indeed, all of
these objects are obtained simply by replacing $\mathbf N$ with the volume form $\omega$
defined in equation \eqref{eq:omegadef}.

Data for the CTS-H data method 
consists of a conformal class $\mathbf g$, a conformal geometric momentum $\bfsigma$,
a mean curvature $\tau$, and a densitized lapse $\mathbf N$ and we seek a solution $(\ovl g_{ab},\ovl K_{ab})$
of the constraints such that
\ifjournal
\numparts
\begin{eqnarray}
\customlabel{eq:CTSHparams}{\arabic{section}.\arabic{eqnval}}
[\ovl g_{ab}]&=\mathbf{g}\label{eq:CTSHcf}, \\
P_{\mathbf{N}}( [\ovl g_{ab}, \ovl K_{ab}]^*) &= \bfsigma,\label{eq:CTSHsigma} \\
\ovl g^{ab} \ovl K_{ab} &=\tau.
\end{eqnarray}
\endnumparts
\else
\begin{subequations}\label{eq:CTSHparams}
\begin{alignat}{2}
[\ovl g_{ab}]&=\mathbf{g}\label{eq:CTSHcf}, \\
P_{\mathbf{N}}( [\ovl g_{ab}, \ovl K_{ab}]^*) &= \bfsigma,\label{eq:CTSHsigma} \\
\ovl g^{ab} \ovl K_{ab} &=\tau.
\end{alignat}
\end{subequations}\fi
To formulate this problem in terms of a PDE, 
let $g_{ab}$ be an arbitrary representative of $\mathbf{g}$. Suppose
$(\ovl g_{ab},\ovl K_{ab})$ is a metric and second fundamental form
with $[\ovl g_{ab}]=[g_{ab}]$, so $\ovl g_{ab} = \phi^{q-2}g_{ab}$
for some conformal factor $\phi$. Let $\ovl A_{ab}$ be the trace-free
part of $\ovl K_{ab}$, and let
$A_{ab}$ and $\sigma_{ab}$ be the representatives of
$[\ovl g_{ab}, \ovl A_{ab}]^*$ and $\bfsigma$ with respect to
$g_{ab}$, so $\ovl A_{ab} = \phi^{-2} A_{ab}$.  From
the definition of $P_{\mathbf N}$ and equation \eqref{eq:yorksplitlapse}
we see that equation \eqref{eq:CTSHsigma} is equivalent to
\begin{equation}
A_{ab} = \sigma_{ab} + \frac{1}{2N} (\ck_{g} W)_{ab}
\end{equation}
for some vector field $W^a$.  Thus equations \eqref{eq:CTSHparams}
can be written in terms of the background metric $g_{ab}$ as
\ifjournal
\numparts
\begin{eqnarray}
\customlabel{eq:CTSHparams-coords}{\arabic{section}.\arabic{eqnval}}
\ovl g_{ab} &= \phi^{q-2} g_{ab}\\
\ovl K_{ab} &= \phi^{-2}\left( \sigma_{ab} + \frac{1}{2N}(\ck_{g}W)_{ab}\right) + \frac{\tau}{n}\ovl g_{ab}
\end{eqnarray}
\endnumparts
\else
\begin{subequations}\label{eq:CTSHparams-coords}
\begin{alignat}{2}
\ovl g_{ab} &= \phi^{q-2} g_{ab}\\
\ovl K_{ab} &= \phi^{-2}\left( \sigma_{ab} + \frac{1}{2N}(\ck_{g}W)_{ab}\right) + \frac{\tau}{n}\ovl g_{ab}
\end{alignat}
\end{subequations}\fi
for some conformal factor $\phi$ and vector field $W^a$.

Substituting equations \eqref{eq:CTSHparams-coords} into the constraint equations
we see that $(\ovl g_{ab}, \ovl K_{ab})$
solve the constraint equations if and only if $(\phi, W^a)$ satisfy
the CTS-H equations
\ifjournal
\numparts
\begin{eqnarray}
\customlabel{eq:ctsh}{\arabic{section}.\arabic{eqnval}}
\hskip-15ex
-2\kappa q \Lap_g \phi + R_g \phi -\left|\sigma + \frac{1}{2N}\ck_g W\right|^2_g\phi^{-q-1} + \kappa \tau^2\phi^{q-1} = 0
&\hskip 1 ex\text{\small[CTS-H Hamiltonian constraint]}
\\
\div_g \frac{1}{2N} \ck W = \kappa\phi^q d\tau.
&\hskip 1ex\text{\small[CTS-H momentum constraint]}
\end{eqnarray}
\endnumparts
\else
\begin{subequations}\label{eq:ctsh}
\begin{alignat}{2}
-2\kappa q \Lap_g \phi + R_g \phi -\left|\sigma + \frac{1}{2N}\ck_g W\right|^2_g\phi^{-q-1} + \kappa \tau^2\phi^{q-1} &= 0
&\qquad&\text{\small[CTS-H Hamiltonian constraint]}
\\
\div_g \frac{1}{2N} \ck W &= \kappa\phi^q d\tau.
&\qquad&\text{\small[CTS-H momentum constraint]}
\end{alignat}
\end{subequations}\fi
These are equivalent to the equations that appear in \cite{Pfeiffer:2003ka},
with differences appearing because we treat $\sigma_{ab}$ as the representative
of a conformal geometric momentum $\bfsigma$ that is freely specified whereas
\cite{Pfeiffer:2003ka} treats $\sigma_{ab}$ as something to be extracted 
as a TT component of a freely-specified source tensor $C_{ab}$. 

We summarize the previous discussion with the following proposition 
(noting that CTS-H data and representative data are defined analogously to their CTS-L counterparts).


\begin{proposition}[The CTS-H Method]\label{prop:CTSH}
Let $(\mathbf g, \bfsigma, \tau, \mathbf N)$ be CTS-H data, and let
$(g_{ab}, \sigma_{ab}, \tau, N)$ be an arbitrary representative
of this data.

If $\phi$ and $W^a$ solve the CTS-H equations \eqref{eq:ctsh}
then $(\ovl g_{ab}, \ovl K_{ab})$ defined by equations \eqref{eq:CTSHparams-coords}
satisfy the constraint equations \eqref{eq:constraints} and satisfy
\ifjournal
\numparts
\begin{eqnarray}
\customlabel{eq:CTSHparams2}{\arabic{section}.\arabic{eqnval}}
\mathbf{g}&=[\ovl g_{ab}]\label{eq:CTSHcf2}, \\
\bfsigma &= P_{\mathbf{N}}( [\ovl g_{ab}, \ovl K_{ab}]^*),\quad\text{and}\label{eq:CTSHsigma2}\\
\tau &= \ovl g^{ab} \ovl K_{ab}.
\end{eqnarray}
\endnumparts
\else
\begin{subequations}\label{eq:CTSHparams2}
\begin{alignat}{2}
\mathbf{g}&=[\ovl g_{ab}]\label{eq:CTSHcf2}, \\
\bfsigma &= P_{\mathbf{N}}( [\ovl g_{ab}, \ovl K_{ab}]^*),\quad\text{and}\label{eq:CTSHsigma2}\\
\tau &= \ovl g^{ab} \ovl K_{ab}.
\end{alignat}
\end{subequations}\fi

Conversely, suppose $(\ovl g_{ab}, \ovl K_{ab})$ solve the constraint equations
and satisfy conditions \eqref{eq:CTSHparams2}. Then there exist
a conformal factor $\phi$ and a vector field $W^a$, unique up to
addition of a conformal Killing field to $W^a$, such that
the decomposition \eqref{eq:CTSHparams-coords} holds and the CTS-H equations
\eqref{eq:ctsh} are satisfied.
\end{proposition}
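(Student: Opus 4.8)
The plan is to assemble the equivalences that were already worked out in the discussion preceding the statement, together with Proposition \ref{prop:yorksplitlapse}; this is a bookkeeping argument rather than a new computation, and it runs in close parallel to the proofs of Propositions \ref{prop:CTS} and \ref{prop:CMCinv}.

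For the forward direction I would start from $\phi$ and $W^a$ solving \eqref{eq:ctsh}, define $(\ovl g_{ab},\ovl K_{ab})$ by \eqref{eq:CTSHparams-coords}, and verify the four conclusions in turn. That $(\ovl g_{ab},\ovl K_{ab})$ satisfies the constraints \eqref{eq:constraints} is exactly the ``if'' half of the biconditional obtained by substituting \eqref{eq:CTSHparams-coords} into \eqref{eq:constraints} and using the scalar-curvature law \eqref{eq:scalchange}; the condition $[\ovl g_{ab}]=\mathbf g$ is immediate from $\ovl g_{ab}=\phi^{q-2}g_{ab}$, and $\ovl g^{ab}\ovl K_{ab}=\tau$ is immediate from the trace/trace-free split displayed in \eqref{eq:CTSHparams-coords}. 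For \eqref{eq:CTSHsigma2} I would let $\ovl A_{ab}$ be the trace-free part of $\ovl K_{ab}$ and observe that its representative with respect to $g_{ab}$ is $A_{ab}=\sigma_{ab}+\frac{1}{2N}(\ck_g W)_{ab}$ (here $\sigma_{ab}$ is TT by Lemma \ref{lem:whatsaTT}); then equation \eqref{eq:Alapse} of Proposition \ref{prop:yorksplitlapse} gives $[\ovl g_{ab},\ovl K_{ab}]^*=\bfsigma+k_{\mathbf N}(\ck_{\mathbf g}W)$, and the uniqueness clause of that proposition identifies the geometric-momentum part as $\bfsigma$, i.e. $P_{\mathbf N}([\ovl g_{ab},\ovl K_{ab}]^*)=\bfsigma$.

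For the converse I would take $(\ovl g_{ab},\ovl K_{ab})$ solving \eqref{eq:constraints} and satisfying \eqref{eq:CTSHparams2}. Since $[\ovl g_{ab}]=\mathbf g=[g_{ab}]$, there is a unique conformal factor $\phi$ with $\ovl g_{ab}=\phi^{q-2}g_{ab}$. Using $\ovl g^{ab}\ovl K_{ab}=\tau$ write $\ovl K_{ab}=\ovl A_{ab}+\frac{\tau}{n}\ovl g_{ab}$; then \eqref{eq:CTSHsigma2}, unwound through the definition of $P_{\mathbf N}$ (via \eqref{eq:yorksplitlapse}) and equation \eqref{eq:Alapse}, produces a vector field $W^a$, unique up to addition of a conformal Killing field, with the representative of $[\ovl g_{ab},\ovl A_{ab}]^*$ equal to $\sigma_{ab}+\frac{1}{2N}(\ck_g W)_{ab}$. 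This is precisely the decomposition \eqref{eq:CTSHparams-coords}, and feeding it back into \eqref{eq:constraints} — reversing the substitution that produced \eqref{eq:ctsh} — shows $(\phi,W^a)$ solve \eqref{eq:ctsh}.

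There is no real obstacle here; the two points that need a moment's care are that the substitution of \eqref{eq:CTSHparams-coords} into the constraints is genuinely biconditional (each intermediate step — the trace/trace-free split and the conformal transformation of $R$ — being an equivalence), so that the converse is legitimate, and that the ``unique up to a conformal Killing field'' ambiguity in $W^a$ is inherited entirely from Proposition \ref{prop:yorksplitlapse} while $\phi$ is pinned down with no ambiguity. Alternatively one could simply note that, after replacing $k_\omega$ and York splitting by $k_{\mathbf N}$ and Proposition \ref{prop:yorksplitlapse}, the argument is formally the same as for the CTS-L method in Proposition \ref{prop:CTS}.
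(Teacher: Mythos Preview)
Your proposal is correct and matches the paper's approach: the paper presents this proposition explicitly as a summary of the preceding discussion (``We summarize the previous discussion with the following proposition''), and your write-up simply unpacks that discussion in both directions, invoking Proposition~\ref{prop:yorksplitlapse} for the equivalence between \eqref{eq:CTSHsigma2} and the decomposition \eqref{eq:CTSHparams-coords}, exactly as the text does. The only minor quibble is your closing remark pointing to Proposition~\ref{prop:CTS} as the formal analogue; the closer structural parallel is the 1974 method proposition, since both it and the CTS-H version are phrased via a projection $P$ onto geometric momenta rather than via the velocity condition $[\ovl g,\dot{\ovl g}]=\mathbf u$.
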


The CTS-H method is conformally covariant; the proof
is analogous to that of Proposition \ref{prop:CMCinv} and is omitted.
\begin{proposition}
Let $(g_{ab}, \sigma_{ab}, \tau, N)$ be representative CTS-H data, let $\psi$ be a smooth positive
function, and let $\tilde g_{ab}=\psi^{q-2}g_{ab}$, $\tilde \sigma_{ab} = \phi^{-2} \sigma_{ab}$,
and $\tilde N = \psi^q N$.  Then $(\phi,W)$ solves the CTS-H equations \eqref{eq:ctsh} with respect to 
the data $(g_{ab}, \sigma_{ab}, \tau, N)$ if and only if $(\psi^{-q}\phi,W)$ solve
the CTS-H equations with respect to $(\tilde g_{ab},\tilde  \sigma_{ab}, \tau,\tilde  N)$
and both yield the same solution $(\ovl g_{ab}, \ovl K_{ab})$ of the constraint equations.
\end{proposition}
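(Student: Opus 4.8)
The plan is to reproduce the argument of Proposition~\ref{prop:CMCinv} (equivalently Proposition~\ref{prop:CTSinv}), exploiting the fact that both the defining conditions \eqref{eq:CTSHparams2} of the CTS-H method and its output $(\ovl g_{ab},\ovl K_{ab})$ depend only on the conformal data $(\mathbf g,\bfsigma,\tau,\mathbf N)$ and not on the chosen representative. The first step is to note that $(g_{ab},\sigma_{ab},\tau,N)$ and $(\tilde g_{ab},\tilde\sigma_{ab},\tau,\tilde N)$ are representatives of the \emph{same} CTS-H data: the relations $\tilde g_{ab}=\psi^{q-2}g_{ab}$ and $\tilde\sigma_{ab}=\psi^{-2}\sigma_{ab}$ are exactly the equivalence relation defining $\bfsigma=[g_{ab},\sigma_{ab}]^*$ in Definition~\ref{def:cotangent}, and $\tilde N=\psi^q N$ is exactly the equivalence relation defining the densitized lapse $\mathbf N=[g_{ab},N]$.

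Next I would run the two halves of Proposition~\ref{prop:CTSH}. Assume $(\phi,W)$ solves the CTS-H equations \eqref{eq:ctsh} with respect to $(g_{ab},\sigma_{ab},\tau,N)$. By its forward implication, $(\ovl g_{ab},\ovl K_{ab})$ defined by \eqref{eq:CTSHparams-coords} solves the constraints and satisfies conditions \eqref{eq:CTSHparams2}. Because those conditions involve only $(\mathbf g,\bfsigma,\tau,\mathbf N)$, the converse implication of Proposition~\ref{prop:CTSH}, applied now to the representative $(\tilde g_{ab},\tilde\sigma_{ab},\tau,\tilde N)$, yields a conformal factor $\tilde\phi$ and a vector field $\tilde W$ (the latter unique up to a conformal Killing field) such that the decomposition \eqref{eq:CTSHparams-coords} holds for the tilde data and the tilde CTS-H equations are satisfied.

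The remaining step is to identify $\tilde\phi$ and $\tilde W$ explicitly. From $\ovl g_{ab}=\phi^{q-2}g_{ab}=\tilde\phi^{q-2}\tilde g_{ab}=(\tilde\phi\psi)^{q-2}g_{ab}$ and uniqueness of the conformal factor one gets $\tilde\phi=\psi^{-1}\phi$, just as in Propositions~\ref{prop:CMCinv} and \ref{prop:CTSinv}. For the vector field, the point is that the \emph{same} $W$ is admissible: by the ``moreover'' clause of Proposition~\ref{prop:yorksplitlapse}, which is stated for an arbitrary representative metric, the identity $A_{ab}=\sigma_{ab}+\tfrac{1}{2N}(\ck_g W)_{ab}$ (the splitting relative to $g$ of the common conformal cotangent vector $[\ovl g_{ab},\ovl A_{ab}]^*$, with $\ovl A_{ab}$ the trace-free part of $\ovl K_{ab}$) is equivalent to $\tilde A_{ab}=\tilde\sigma_{ab}+\tfrac{1}{2\tilde N}(\ck_{\tilde g}W)_{ab}$ relative to $\tilde g$. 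Hence $(\psi^{-1}\phi,W)$ is a legitimate choice of $(\tilde\phi,\tilde W)$, it therefore solves the tilde CTS-H equations, and plugging it into the tilde recipe \eqref{eq:CTSHparams-coords} and using $\ck_{\tilde g}W=\psi^{q-2}\ck_g W$ and $\tfrac{1}{2\tilde N}=\psi^{-q}\tfrac{1}{2N}$ recovers exactly $(\ovl g_{ab},\ovl K_{ab})$. The reverse direction of the ``if and only if'' is the same argument with the roles of $g$ and $\tilde g$, and of $\psi$ and $\psi^{-1}$, interchanged.

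The one place needing care is the non-uniqueness of the shift-like field $W$: the converse half of Proposition~\ref{prop:CTSH} only produces $\tilde W$ up to a conformal Killing field, so one must invoke equation~\eqref{eq:Alapse} of Proposition~\ref{prop:yorksplitlapse} to confirm that the \emph{untransformed} $W$ is itself admissible rather than merely differing from an admissible field by a conformal Killing field. Apart from that, the proof is pure bookkeeping with the transformation laws $\ck_{\tilde g}W=\psi^{q-2}\ck_g W$, $dV_{\tilde g}=\psi^q\,dV_g$, the scalar curvature law \eqref{eq:scalchange}, and $\tfrac{1}{2\tilde N}=\psi^{-q}\tfrac{1}{2N}$, exactly as in Proposition~\ref{prop:CMCinv}; no new estimate or idea is required, which is why the paper regards the statement as routine.
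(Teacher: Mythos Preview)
Your approach is exactly what the paper intends: it omits the proof entirely, stating only that it is analogous to that of Proposition~\ref{prop:CMCinv}, and your argument faithfully carries out that analogy via the two directions of Proposition~\ref{prop:CTSH} together with the background-independence of equation~\eqref{eq:Alapse}. Note also that you have (correctly) silently repaired two typos in the stated proposition: the transformation should read $\tilde\sigma_{ab}=\psi^{-2}\sigma_{ab}$ rather than $\phi^{-2}\sigma_{ab}$, and the transformed conformal factor is $\psi^{-1}\phi$ rather than $\psi^{-q}\phi$, exactly as your computation $\ovl g_{ab}=\phi^{q-2}g_{ab}=(\tilde\phi\psi)^{q-2}g_{ab}$ shows.
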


\section{Equivalence of the Methods}\label{sec:same}

We now show that the 1974, CTS-L, and CTS-H parameterizations are all the same
by demonstrating how to translate between the parameters for these
methods such that the corresponding solutions of the constraints are the same.

Starting with the CTS-L and CTS-H methods,
the parameters $\mathbf{g}$, $\tau$ and $\mathbf{N}$ retain
their roles and are fixed when moving between the two methods and we
need a way to map back and forth between the velocity/momentum
parameters.  The momentum parameter from the CTS-H method is
an element of $T^*_\mathbf g (\calC/\calD_0)$, and we saw at the
end of Section \ref{sec:CTS} that the true velocity parameter
for the CTS method is a conformal geometric velocity 
$\mathbf u + \Im \ck_{\mathbf g}\in T_\mathbf g (\calC/\calD_0)$.
So a natural candidate for the identification is the map $j_{\mathbf N}$
defined in equation \eqref{eq:jN}, and this is the correct choice.

\begin{proposition}\label{prop:CTSL-CTSH}
The solutions of the constraint equations generated by 
geometric CTS-L data $(\mathbf{g},\mathbf{u}+\Im \ck_{\mathbf g},\tau,\mathbf{N})$ 
and the solutions generated by 
CTS-H data $(\mathbf{g},\bfsigma,\tau,\mathbf{N})$
coincide if and only if
\begin{equation}\label{eq:utosig}
\bfsigma = j_{\mathbf N}(\mathbf u + \Im \ck_{\mathbf g}).
\end{equation}

In terms of a representative metric $g_{ab}\in \mathbf g$,
representative CTS-L data $(g_{ab}, u_{ab}, \tau, N)$ and
representative CTS-H data $(g_{ab}, \sigma_{ab}, \tau, N)$
generate the same solutions if and only if there is a vector field $X^a$ such
that
\begin{equation}
u_{ab} = 2N \sigma_{ab} + (\ck_{g} X)_{ab}.
\end{equation}
\end{proposition}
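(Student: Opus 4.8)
The plan is to give both families of solutions a single intrinsic description and then read off the translation of parameters. First I would apply Proposition~\ref{prop:CTSparam}: a solution $(\ovl g_{ab},\ovl K_{ab})$ of the constraints with $\ovl g_{ab}\in\mathbf g$ is generated by the geometric CTS-L data $(\mathbf g,\mathbf u+\Im\ck_{\mathbf g},\tau,\mathbf N)$ exactly when $\tau=\ovl g^{ab}\ovl K_{ab}$ and $[\ovl g_{ab},2\ovl N\,\ovl K_{ab}]+\Im\ck_{\mathbf g}=\mathbf u+\Im\ck_{\mathbf g}$, where $\ovl N$ is the representative of $\mathbf N$ with respect to $\ovl g_{ab}$. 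Dually, the forward and converse parts of Proposition~\ref{prop:CTSH} together say that such a solution is generated by CTS-H data $(\mathbf g,\bfsigma,\tau,\mathbf N)$ exactly when $\tau=\ovl g^{ab}\ovl K_{ab}$ and $P_{\mathbf N}([\ovl g_{ab},\ovl K_{ab}]^*)=\bfsigma$.

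The bridge between these two conditions is the identity $k_{\mathbf N}([\ovl g_{ab},2\ovl N\,\ovl K_{ab}])=[\ovl g_{ab},\ovl K_{ab}]^*$: by the pushforward convention of Definition~\ref{def:tangent} one has $[\ovl g_{ab},2\ovl N\,\ovl K_{ab}]=[\ovl g_{ab},2\ovl N\,\ovl A_{ab}]$ for the trace-free part $\ovl A_{ab}$ of $\ovl K_{ab}$, and then Definition~\ref{def:iotaN} gives $k_{\mathbf N}([\ovl g_{ab},2\ovl N\,\ovl A_{ab}])=[\ovl g_{ab},\tfrac{1}{2\ovl N}(2\ovl N\,\ovl A_{ab})]^*=[\ovl g_{ab},\ovl K_{ab}]^*$. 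Applying $P_{\mathbf N}$ and using the definition \eqref{eq:jN} of $j_{\mathbf N}$ turns this into
\[
P_{\mathbf N}([\ovl g_{ab},\ovl K_{ab}]^*)=j_{\mathbf N}\!\left([\ovl g_{ab},2\ovl N\,\ovl K_{ab}]+\Im\ck_{\mathbf g}\right).
\]
Hence, for any candidate solution, the CTS-L condition is $[\ovl g_{ab},2\ovl N\,\ovl K_{ab}]+\Im\ck_{\mathbf g}=\mathbf u+\Im\ck_{\mathbf g}$ and the CTS-H condition is $j_{\mathbf N}([\ovl g_{ab},2\ovl N\,\ovl K_{ab}]+\Im\ck_{\mathbf g})=\bfsigma$. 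Since $j_{\mathbf N}$ is a bijection $T_{\mathbf g}(\calC/\calD_0)\to T^*_{\mathbf g}(\calC/\calD_0)$ --- this follows from the uniqueness clause of Proposition~\ref{prop:yorksplitlapse} exactly as for $j_\omega$ --- these two conditions hold for the same $(\ovl g_{ab},\ovl K_{ab})$ precisely when $\bfsigma=j_{\mathbf N}(\mathbf u+\Im\ck_{\mathbf g})$, and then the two solution sets literally coincide. Conversely, if $\bfsigma\ne j_{\mathbf N}(\mathbf u+\Im\ck_{\mathbf g})$, the CTS-H condition attached to $\bfsigma$ contradicts the one attached to $j_{\mathbf N}(\mathbf u+\Im\ck_{\mathbf g})$ (two distinct momenta cannot both equal $P_{\mathbf N}([\ovl g_{ab},\ovl K_{ab}]^*)$), so the CTS-H set for $\bfsigma$ is disjoint from the CTS-L set; this separates the two sets except in the degenerate case where both are empty.

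For the representative reformulation I would unwind $\bfsigma=j_{\mathbf N}(\mathbf u+\Im\ck_{\mathbf g})=P_{\mathbf N}(k_{\mathbf N}(\mathbf u))$ with respect to the given background metric $g_{ab}$, where $\mathbf u=[g_{ab},u_{ab}]$, $\mathbf N=[g_{ab},N]$, and $\bfsigma=[g_{ab},\sigma_{ab}]^*$ with $\sigma_{ab}$ TT by Lemma~\ref{lem:whatsaTT}. By Definition~\ref{def:iotaN}, $k_{\mathbf N}(\mathbf u)=[g_{ab},(1/2N)u_{ab}]^*$, and by the representative form \eqref{eq:Alapse} of the York splitting in Proposition~\ref{prop:yorksplitlapse}, $P_{\mathbf N}(k_{\mathbf N}(\mathbf u))=[g_{ab},\sigma_{ab}]^*$ holds for that TT tensor $\sigma_{ab}$ exactly when $(1/2N)u_{ab}=\sigma_{ab}+(1/2N)(\ck_g X)_{ab}$ for some vector field $X^a$, i.e. when $u_{ab}=2N\sigma_{ab}+(\ck_g X)_{ab}$; the uniqueness clause of that proposition supplies the reverse implication. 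Since representative CTS-L and CTS-H data generate the same solutions as the geometric data they represent, this combines with the first part to give the second ``if and only if''.

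The one genuinely delicate point I anticipate is the degenerate case in which neither method produces a solution, where ``the solution sets coincide'' is vacuously true although the momentum parameters need not agree; I would handle this either by restricting the converse direction to data with a nonempty solution set, or by reading the proposition as the assertion that $j_{\mathbf N}$ is the correct translation of the velocity parameter into a momentum parameter. Everything else is bookkeeping with the maps $k_{\mathbf N}$, $P_{\mathbf N}$, and $j_{\mathbf N}$, together with the routine verification (already deferred to the reader in Definition~\ref{def:j}) that $j_{\mathbf N}$ is well-defined and invertible.
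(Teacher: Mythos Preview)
Your approach is essentially the same as the paper's: both hinge on the identity $k_{\mathbf N}([\ovl g_{ab},2\ovl N\,\ovl K_{ab}])=[\ovl g_{ab},\ovl K_{ab}]^*$, then apply $P_{\mathbf N}$ and the definition of $j_{\mathbf N}$ to translate the CTS-L condition into the CTS-H condition, and finally unwind $j_{\mathbf N}$ (equivalently $j_{\mathbf N}^{-1}$) in a fixed background to obtain $u_{ab}=2N\sigma_{ab}+(\ck_g X)_{ab}$. You are in fact more careful than the paper about the ``only if'' direction and the degenerate empty-solution-set case; the paper's proof establishes only that the translation \eqref{eq:utosig} makes the solution sets coincide and does not address what happens when they are both empty.
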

\begin{proof}
Suppose $(\ovl g_{ab}, \ovl K_{ab})$ is a solution of the constraints generated by
CTS-L data $(\mathbf{g},\mathbf{u}+\Im \ck_{\mathbf g},\tau,\mathbf{N})$.
Then there exists a vector field $X^a$ such that conditions \eqref{eq:CTS-Lconds} 
hold.  In particular,
\begin{equation}
[\ovl g_{ab},\dot{\ovl g}_{ab}] = \mathbf u
\end{equation}
where $\dot{\ovl g}_{ab} = 2 \ovl N\, \ovl K_{ab} + \calL_{X} \ovl g_{ab}$.
Hence
\begin{equation}
[\ovl g_{ab},2 \ovl N\, K_{ab}] \in \mathbf u + \Im \ck_{\mathbf g}
\end{equation}
and therefore from equation \eqref{eq:jdef} we have
\begin{equation}
j_{\mathbf N}(\mathbf u + \Im \ck_{\mathbf g}) = P_{\mathbf N}(k_{\mathbf N}([\ovl g_{ab},2 \ovl N\ovl K_{ab}])).
\end{equation}
Equation \eqref{eq:iotaN} implies
\begin{equation}
k_{\mathbf N}([\ovl g_{ab},2 \ovl N, \ovl K_{ab}]) =[\ovl g_{ab}, \ovl K_{ab}]^*
\end{equation}
and thus
\begin{equation}
P_{\mathbf N}([\ovl g_{ab}, \ovl K_{ab}]^*) = j_{\mathbf N}(\mathbf u + \Im \ck_{\mathbf g}).
\end{equation}
Defining $\bfsigma=P_{\mathbf N}([\ovl g_{ab}, \ovl K_{ab}]^*)$, the solution $(\ovl g_{ab},\ovl K_{ab})$
then satisfies conditions \eqref{eq:CTSHparams2} and therefore is generated by 
CTS-H data $(\mathbf g, \bfsigma, \tau, \mathbf N)$.  

The previous discussion is reversible
and therefore if a solution of the constraints is generated by CTS-H data $(\mathbf g, \bfsigma, \tau, \mathbf N)$, 
then it is generated by geometric CTS-L data 
$(\mathbf g, \mathbf u + \Im \ck_{\mathbf g}, \tau, \mathbf N)$
where
\begin{equation}\label{eq:sigtou}
\mathbf u + \Im \ck_\mathbf g = j_{\mathbf N}^{-1}(\bfsigma).
\end{equation}

To reformulate equation \eqref{eq:sigtou} in terms of a background metric $g_{ab}\in \mathbf g$,
let $u_{ab}$, $\sigma_{ab}$ and $N$ be the representatives
of $\mathbf u$, $\bfsigma$ and $\mathbf N$ with respect to $g_{ab}$.  Since
\begin{equation}
j_{\mathbf N}^{-1}(\bfsigma) = k_{\mathbf N}^{-1}(\bfsigma) + \Im \ck_\mathbf g,
\end{equation}
and since
\begin{equation}
k_{\mathbf N}^{-1}(\bfsigma) = [g_{ab}, 2N\sigma_{ab}],
\end{equation}
equation \eqref{eq:sigtou} is equivalent to
\begin{equation}
u_{ab} = 2N \sigma_{ab} + (\ck_{g} X)_{ab}
\end{equation}
for some vector field $X^a$, where $u_{ab}$ is the representative
of $\mathbf u$ with respect to $u_{ab}$.
\end{proof}

The equivalence of the 1974 method and the CTS-H method
is a consequence of the equivalences of the projections $P_\omega$ for the 1974 method and
the projections $P_{\mathbf N}$ of the CTS-H method, where we translate between
volume forms and densitized lapses via equation \eqref{eq:omegadef}.

\begin{proposition}\label{prop:1974isCTS}
Let $\mathbf g\in\calC$, $\bfsigma\in T^*_{\mathbf g}(\calC/\calD_0)$ and $\tau\in C^\infty(M)$.
Suppose $\mathbf N\in\calN_{\mathbf g}$ is a densitized lapse and $\omega$ is a volume form
that satisfy equation \eqref{eq:omegadef}.
Then the set of solutions of the constraints generated by CTS-H data 
$(\mathbf g, \bfsigma, \tau, \mathbf N)$ is the same as the set of solutions generated by
1974 data $(\mathbf g, \bfsigma, \tau, \omega)$.
\end{proposition}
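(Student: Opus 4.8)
The plan is to show that the two parameterizations impose literally the same conditions on a solution of the constraints. By the 1974 Conformal Method proposition, a solution $(\ovl g_{ab},\ovl K_{ab})$ is generated by the 1974 data $(\mathbf g,\bfsigma,\tau,\omega)$ precisely when it satisfies conditions \eqref{eq:1974params2}, namely $[\ovl g_{ab}]=\mathbf g$, $P_\omega([\ovl g_{ab},\ovl K_{ab}]^*)=\bfsigma$, and $\ovl g^{ab}\ovl K_{ab}=\tau$. Likewise, Proposition \ref{prop:CTSH} shows that $(\ovl g_{ab},\ovl K_{ab})$ is generated by the CTS-H data $(\mathbf g,\bfsigma,\tau,\mathbf N)$ precisely when it satisfies conditions \eqref{eq:CTSHparams2}, namely $[\ovl g_{ab}]=\mathbf g$, $P_{\mathbf N}([\ovl g_{ab},\ovl K_{ab}]^*)=\bfsigma$, and $\ovl g^{ab}\ovl K_{ab}=\tau$. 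The first and third conditions in the two lists are identical, so the proposition reduces to the assertion that $P_\omega=P_{\mathbf N}$ as maps $T^*_{\mathbf g}\calC\to T^*_{\mathbf g}(\calC/\calD_0)$ whenever $\omega$ and $\mathbf N$ are related by \eqref{eq:omegadef}.

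To establish $P_\omega=P_{\mathbf N}$, I would invoke Proposition \ref{prop:Nisomega}, which gives $k_{\mathbf N}=k_\omega$ under the hypothesis \eqref{eq:omegadef}. The projection $P_\omega$ is defined in Definition \ref{def:proj} by sending $\mathbf A\in T^*_{\mathbf g}\calC$ to the conformal geometric momentum $\bfsigma$ in the York-type splitting $\mathbf A=\bfsigma+k_\omega(\ck_{\mathbf g}X)$ of Proposition \ref{prop:yorksplitconf}, while $P_{\mathbf N}$ sends $\mathbf A$ to the $\bfsigma$ in the splitting $\mathbf A=\bfsigma+k_{\mathbf N}(\ck_{\mathbf g}W)$ of Proposition \ref{prop:yorksplitlapse}. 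Since $k_\omega$ and $k_{\mathbf N}$ are the same linear map, these are the same equation, and the uniqueness clauses of Propositions \ref{prop:yorksplitconf} and \ref{prop:yorksplitlapse} force the two momenta to coincide; hence $P_\omega(\mathbf A)=P_{\mathbf N}(\mathbf A)$ for every $\mathbf A$. Combining this with the previous paragraph, conditions \eqref{eq:1974params2} and \eqref{eq:CTSHparams2} are the identical system, so the two generated solution sets coincide.

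I do not expect a real obstacle here: the substantive content has already been front-loaded into Proposition \ref{prop:Nisomega} and into the earlier remark that $P_{\mathbf N}$, the identifications $j_{\mathbf N}$, and the accompanying diagram are obtained from their $\omega$-counterparts merely by substituting the volume form \eqref{eq:omegadef}. The only point requiring a line of care is to state explicitly that $P_\omega$ and $P_{\mathbf N}$ agree \emph{as maps}, rather than just noting $k_\omega=k_{\mathbf N}$, which is immediate from how each projection is built out of its respective $k$. As a sanity check one can also verify the equivalence at the level of PDEs: choosing in the CTS-H method the representative metric $g_{ab}\in\mathbf g$ with $N=1/2$ (equivalently $dV_g=\omega$), the CTS-H equations \eqref{eq:ctsh} reduce verbatim to the 1974 conformally parameterized constraint equations \eqref{eq:LCBY}, with the associated recipes for $(\ovl g_{ab},\ovl K_{ab})$ matching term by term.
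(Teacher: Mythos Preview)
Your proposal is correct and follows essentially the same approach as the paper: both reduce the equivalence to the identity $P_\omega=P_{\mathbf N}$, which follows from $k_\omega=k_{\mathbf N}$ (Proposition \ref{prop:Nisomega}) together with the definitions of the projections. If anything, you are slightly more explicit than the paper in spelling out why $k_\omega=k_{\mathbf N}$ forces $P_\omega=P_{\mathbf N}$; the paper simply asserts this equality of projections, having noted just before the proposition that the densitized-lapse objects are obtained from their volume-form counterparts by the substitution \eqref{eq:omegadef}.
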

\begin{proof}
Suppose $(\ovl g_{ab}, \ovl K_{ab})$ is a solution of the constraints generated by CTS-H data
$(\mathbf g, \bfsigma, \tau, \mathbf N)$.  Then $(\ovl g_{ab}, \ovl K_{ab})$ satisfy
conditions \eqref{eq:CTSHparams2}, and in particular
\begin{equation}
P_{\mathbf N}([\ovl g_{ab}, \ovl K_{ab}]^*) = \bfsigma.
\end{equation}
But $P_{\mathbf N}=P_{\omega}$ where $\omega$ is defined in equation \eqref{eq:omegadef},
so the solution $(\ovl g_{ab}, \ovl K_{ab})$ satisfies conditions \eqref{eq:1974params2}
as well and is generated by 1974 data $(\mathbf g, \bfsigma, \tau, \omega)$.
The converse is proved similarly.
\end{proof}

Proposition \ref{prop:1974isCTS} admits the following reformulation 
in terms of 1974 representative data, where the volume form $\omega$
is determined implicitly by the background metric.

\begin{proposition}\label{prop:1974rep-CTSH}
Let $(g_{ab}, \sigma_{ab}, \tau)$ be representative 1974 data
and let
\begin{equation}
\ifjournal\eqalign{
\mathbf{g} &= [g_{ab}]\\
\bfsigma   &= [g_{ab},\sigma_{ab}]\\
\mathbf{N} &= [g_{ab}, 1/2].	
}\else
\begin{aligned}
\mathbf{g} &= [g_{ab}]\\
\bfsigma   &= [g_{ab},\sigma_{ab}]\\
\mathbf{N} &= [g_{ab}, 1/2].
\end{aligned}\fi
\end{equation}
Then the set of solutions of the constraint equations generated by
the 1974 method for
$(g_{ab},\sigma_{ab},\tau)$
is the same as the set of solutions generated by the CTS-H method
for data $(\mathbf{g}, \bfsigma,\tau,\mathbf{N})$.  

Conversely, suppose $(\mathbf{g},\bfsigma,\tau,\mathbf{N})$ is a tuple of CTS-H data.
Let $g_{ab}$ be the unique element of $\mathbf{g}$ such that
\begin{equation}
[g_{ab},1/2] = \mathbf{N}
\end{equation}
and let $\sigma_{ab}$ be the unique TT tensor such that
\begin{equation}
[g_{ab},\sigma_{ab}] = \bfsigma.
\end{equation}
The set of solutions of the constraint equations generated by 
by the CTS-H method for $(\mathbf{g},\bfsigma,\tau,\mathbf{N})$ 
is the same as the set generated by 1974 conformal data $(g_{ab},\sigma_{ab},\tau)$.
\end{proposition}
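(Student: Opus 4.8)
The plan is to read this off Proposition~\ref{prop:1974isCTS}, using the correspondence between representative 1974 data and abstract 1974 conformal data established in Section~\ref{sec:1974}, together with the single observation that the normalization $N\equiv 1/2$ is exactly what makes equation~\eqref{eq:omegadef} read $\omega = dV_g$. So the whole proof is bookkeeping on top of two results already proved.

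For the forward direction I would argue as follows. By the construction in Section~\ref{sec:1974}, applying the 1974 method to representative data $(g_{ab},\sigma_{ab},\tau)$ means solving the 1974 conformally parameterized constraint equations with respect to $g_{ab},\sigma_{ab},\tau$, and its solutions are precisely the solutions of the constraints satisfying conditions~\eqref{eq:1974params2} for the 1974 conformal data $(\mathbf g,\bfsigma,\tau,dV_g)$ with $\mathbf g=[g_{ab}]$ and $\bfsigma=[g_{ab},\sigma_{ab}]^*$; here $\bfsigma$ is a conformal geometric momentum because $\sigma_{ab}$ is TT (Lemma~\ref{lem:whatsaTT}). Now take $\mathbf N=[g_{ab},1/2]$, so the representative of $\mathbf N$ with respect to $g_{ab}$ is $N=1/2$. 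Then the volume form attached to $\mathbf N$ by~\eqref{eq:omegadef} is $\frac{1}{2\cdot(1/2)}\,dV_g = dV_g$, so $\mathbf N$ and $\omega=dV_g$ satisfy~\eqref{eq:omegadef}. Proposition~\ref{prop:1974isCTS} then says that CTS-H data $(\mathbf g,\bfsigma,\tau,\mathbf N)$ and 1974 conformal data $(\mathbf g,\bfsigma,\tau,dV_g)$ generate the same solutions, which is the claim.

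For the converse I would first check that the two normalizations in the statement genuinely pin down unique objects. Given $\mathbf N\in\calN_{\mathbf g}$, if $g^0_{ab}\in\mathbf g$ is any representative with $[g^0_{ab},N_0]=\mathbf N$, then $g_{ab}=\phi^{q-2}g^0_{ab}$ satisfies $[g_{ab},1/2]=\mathbf N$ exactly when $\phi^q N_0=1/2$, which determines $\phi$, hence $g_{ab}$, uniquely; and given $\bfsigma\in T^*_{\mathbf g}(\calC/\calD_0)$ there is a unique TT tensor $\sigma_{ab}$ with $[g_{ab},\sigma_{ab}]^*=\bfsigma$, namely the representative of $\bfsigma$ with respect to $g_{ab}$, which is TT by Lemma~\ref{lem:whatsaTT}. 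With these choices $(g_{ab},\sigma_{ab},\tau)$ is representative 1974 data whose associated conformal data is $(\mathbf g,\bfsigma,\tau,dV_g)$, and since $N=1/2$ for this $g_{ab}$, equation~\eqref{eq:omegadef} again gives that $\mathbf N$ and $\omega=dV_g$ satisfy~\eqref{eq:omegadef}. A second application of Proposition~\ref{prop:1974isCTS} then finishes the proof.

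There is no genuine obstacle here; all the substance lives in Proposition~\ref{prop:1974isCTS} (and, behind it, Proposition~\ref{prop:Nisomega}, which guarantees that the volume form attached to $\mathbf N$ via~\eqref{eq:omegadef} is representative-independent). The only points that deserve a moment's care are the well-posedness of the ``unique metric with $N=1/2$'' construction in the converse, and treating $\bfsigma$ consistently as an element of $T^*_{\mathbf g}(\calC/\calD_0)$ rather than merely of $T^*_{\mathbf g}\calC$, so that Proposition~\ref{prop:1974isCTS} is actually applicable.
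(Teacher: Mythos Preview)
Your proposal is correct and follows essentially the same route as the paper's proof: translate representative 1974 data into abstract 1974 conformal data $(\mathbf g,\bfsigma,\tau,dV_g)$, observe that $N=1/2$ makes \eqref{eq:omegadef} read $\omega=dV_g$, and invoke Proposition~\ref{prop:1974isCTS}. Your version is slightly more explicit than the paper's (which dispatches the converse with ``proved similarly'') in verifying the uniqueness of the metric with $N=1/2$ and of the TT representative of $\bfsigma$, but the underlying argument is the same.
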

\begin{proof}
Representative 1974 data $(g_{ab}, \sigma_{ab}, \tau)$ determine 1974 data
$(\mathbf g, \bfsigma, \tau, \omega)$ with 
$\mathbf g = [g_{ab}]$, $\bfsigma=[g_{ab}, \sigma_{ab}]^*$, and $\omega = dV_{g}$.
Proposition \ref{prop:1974isCTS} implies that the set of solutions generated by 1974 data 
$(\mathbf g, \bfsigma, \tau, \omega)$ is the same as the set of solutions
generated by CTS-H data $(\mathbf g, \bfsigma, \tau, \mathbf N)$ where the
representative of $\mathbf N$ with respect to $g_{ab}$ satisfies
equation \eqref{eq:omegadef}.  Since $\omega = dV_g$, equation \eqref{eq:omegadef}
implies $N=1/2$ and hence
$\mathbf{N} = [g_{ab}, 1/2]$.
This establishes the forward direction, and the converse is proved similarly.
\end{proof}

While the `coordinate-free' approach to expressing the conformal method parameters is 
helpful, applications frequently require working with representative data. Summarizing from
Propositions \ref{prop:CTSL-CTSH} and \ref{prop:1974rep-CTSH} we translate between
representative data as follows.

\begin{itemize}
\item\textbf{[1974 to CTS-H]} Start with 1974 data $(g_{ab}, \sigma_{ab}, \tau)$ and
adjoin a lapse $N=1/2$.  Use CTS-H data $(g_{ab}, \sigma_{ab}, \tau, 1/2)$,
or any conformally related CTS-H data.

\item\textbf{[CTS-H to 1974]}  Start with CTS-H data $(g_{ab}, \sigma_{ab}, \tau, N)$ and
let $\psi$ be the conformal factor satisfying $\psi^q N = (1/2)$.  Let
$\hat g_{ab} = \psi^{q-2}g_{ab}$ and $\hat \sigma_{ab} = \psi^{-2}\sigma_{ab}$, and use
1974 data $(\hat g_{ab}, \hat \sigma_{ab}, \tau)$.

\item\textbf{[CTS-H to CTS-L]} Start with CTS-H data $(g_{ab}, \sigma_{ab}, \tau, N)$ 
and select an arbitrary vector field $X^a$.  Let 
\begin{equation}
u_{ab} = 2N\sigma_{ab} + (\ck_{g} X)_{ab},
\end{equation}
and use CTS-L data $(g_{ab}, u_{ab}, \tau, N)$ or any
conformally related CTS-L data.

\item\textbf{[CTS-L to CTS-H]} Start with CTS-L data $(g_{ab}, u_{ab}, \tau, N)$ 
and let $\sigma_{ab}$ be the unique transverse traceless tensor with
\begin{equation}
u_{ab} = 2N\sigma_{ab} + (\ck_{g} Y)_{ab}
\end{equation}
for some vector field $Y^a$, as given by Proposition \ref{prop:yorksplitlapse}.
Use CTS-H data $(g_{ab}, \sigma_{ab}, \tau, N)$ or any conformally
related CTS-H data.
\end{itemize}

\section{Applications}\label{sec:apps}

In this section we strengthen two previous results concerning the 1974 conformal method on
compact manifolds by using the correspondence between the 1974 method and the CTS-H method.

\subsection{Near-CMC Existence/Uniqueness}
The main theorem from \cite{Isenberg:1996fi}, when restricted to smooth tensors, can be
phrased as follows.
\begin{theorem}\label{thm:im}
Let $M^3$ be a compact 3-manifold.  Suppose $g_{ab}$ is a smooth metric on $M$ that has constant scalar curvature
equal to -1 and that admits only the trivial conformal Killing field, and 
suppose $\sigma_{ab}$ is an arbitrary transverse-traceless
tensor with respect to $g_{ab}$. Then there is an open set $T_{g,\sigma}$ 
of nowhere-vanishing mean curvatures such that
every nonzero constant mean curvature belongs to $T_{g,\sigma}$,
and such that for every $\tau \in T_{g,\sigma}$ the 1974 conformally-parameterized 
constraint equations \eqref{eq:LCBY} for
the representative 1974 data $(g_{ab}, \sigma_{ab}, \tau)$ have a unique
solution.
\end{theorem}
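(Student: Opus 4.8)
The plan is to prove Theorem~\ref{thm:im} by the semi-decoupled ``near-CMC'' iteration of Isenberg and Moncrief \cite{Isenberg:1996fi}, built on top of the method of sub- and supersolutions for the Lichnerowicz--York equation. Throughout I fix the background metric $g_{ab}$ (so $R_g\equiv-1$ and there is no nontrivial conformal Killing field), fix the TT tensor $\sigma_{ab}$, and regard the mean curvature $\tau$ as the variable. The first step is to decouple the system \eqref{eq:LCBY}. Because $g_{ab}$ has no nontrivial conformal Killing field, $X\mapsto\div_g\ck_g X$ is a linear isomorphism between the appropriate H\"older spaces, so the momentum constraint \eqref{eq:LCBYmomentum} has, for each $\phi\in C^\infty_+(M)$, a unique solution $W=\mathcal{W}(\phi)$, and elliptic estimates give a bound $\|\ck_g\mathcal{W}(\phi)\|_{C^0}\le C\|\phi\|_{C^0}^q\,\|d\tau\|_{C^0}$ together with a matching Lipschitz bound in $\phi$. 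The fact that $\|d\tau\|_{C^0}$ multiplies every such estimate is precisely what will force a near-CMC hypothesis.

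Substituting $W=\mathcal{W}(\phi)$ into \eqref{eq:LCBYHamiltonian} produces a single semilinear equation for $\phi$ whose nonlinearity retains the ``good-sign'' term $\kappa\tau^2\phi^{q-1}$ (with $q-1>1$) and whose linear part $-2\kappa q\Lap_g\phi-\phi$ (using $R_g<0$) obeys the maximum principle. Since $\tau$ is nowhere zero, a large constant $c_+$ is a supersolution and a small constant $c_-$ a subsolution, provided the contribution $|\ck_g\mathcal{W}(\phi)|_g^2\sim c_+^{2q}\|d\tau\|_{C^0}^2$ does not overwhelm $\kappa\tau^2 c_+^{q-1}$; comparing powers of $c_+$, this holds precisely when $\|d\tau\|_{C^0}^2<\delta\min_M\tau^2$ for a constant $\delta=\delta(g)>0$ coming from the elliptic estimate above. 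One then defines $T_{g,\sigma}$ to be the (open) set of nowhere-vanishing $\tau$ satisfying this inequality; it contains every nonzero constant mean curvature because in that case $d\tau=0$.

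For $\tau\in T_{g,\sigma}$ I run the iteration $\phi_0\equiv c_+$, $W_k=\mathcal{W}(\phi_k)$, $\phi_{k+1}=$ the solution of the Lichnerowicz equation with the coefficient $|\sigma+\ck_g W_k|_g^2$ frozen, obtained from the $c_\pm$ barrier pair. The barriers keep the iterates in $[c_-,c_+]$, and the smallness of $\|d\tau\|_{C^0}$ makes $\phi\mapsto\phi_{k+1}$ a contraction on $C^0$, so Banach's theorem yields a fixed point, which elliptic bootstrapping promotes to a smooth pair $(\phi,W)$ solving \eqref{eq:LCBY}. For uniqueness one first shows that \emph{every} solution of \eqref{eq:LCBY} satisfies the a priori bound $c_-\le\phi\le c_+$ --- by applying the maximum principle to the reduced Lichnerowicz equation together with the $\phi$-dependent bound on $\ck_g\mathcal{W}(\phi)$ above --- so that all solutions fall into the region where the contraction operates, whence uniqueness there is immediate.

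I expect the main obstacle to be arranging the two smallness conditions consistently: the barrier constants $c_\pm$ depend on how small $\|d\tau\|_{C^0}^2/\min_M\tau^2$ is, while the a priori bounds and the contraction estimate must then hold with those same $c_\pm$. Reconciling these interlocking dependencies is the technical heart of the argument; the remaining ingredients --- the isomorphism property of $\div_g\ck_g$, the sub-/supersolution existence theorem, and elliptic regularity --- are routine on a closed manifold.
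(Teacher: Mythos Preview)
The paper does not actually prove Theorem~\ref{thm:im}: it is stated there as a restatement of the main result of \cite{Isenberg:1996fi} and is used as input for the subsequent generalizations (Theorems~\ref{thm:nearCMC} and~\ref{thm:nearCMCpos}). Your proposal correctly reconstructs the Isenberg--Moncrief argument from that reference --- the vector-Laplacian estimate for $\mathcal{W}(\phi)$, constant sub-/supersolutions exploiting $R_g\equiv-1$, and the semi-decoupled iteration --- so in that sense it agrees with the ``paper's proof,'' which is simply the citation.

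Two small points of friction with how the paper records the result. First, the paper states that $T_{g,\sigma}$ is characterized by \emph{both} $\max|d\tau|_g/\min|\tau|$ and $|d\tau|_g$ being small, whereas you impose only the ratio condition; the additional absolute smallness of $|d\tau|_g$ enters in \cite{Isenberg:1996fi} when pinning down the barrier constants and the uniqueness argument, so you should expect to need it as well. Second, the notation $T_{g,\sigma}$ signals that the smallness threshold depends on $\sigma_{ab}$ (through $|\sigma|_g^2$ in the Lichnerowicz equation), not only on $g$; your $\delta=\delta(g)$ should be $\delta=\delta(g,\sigma)$. Your closing paragraph already anticipates that reconciling the interlocking smallness conditions is the crux, and that is exactly right.
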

The set $T_{g,\sigma}$ in Theorem \ref{thm:im} is defined by
\begin{equation}
\frac{\max{|d \tau|_g}}{\min |\tau|} \qquad\text{and}\qquad |d\tau|_g
\end{equation}
being sufficiently small, so Theorem \ref{thm:im} is a near-CMC existence and uniqueness result.
It is remarked in \cite{Isenberg:1996fi} that the proof of Theorem \ref{thm:im} could be carried out
under the more general hypothesis $R_g<0$ everywhere, but that the authors were unable to
extend it to the most natural generalization that $g_{ab}$ is Yamabe negative.  We show
here that such an extension is possible. 

In coordinate-free language, Theorem \ref{thm:im} can be phrased as follows.
\begin{theorem}
Let $M^3$ be a compact 3-manifold.  Suppose $\mathbf{g}$ is a Yamabe-negative conformal class
on $M$ admitting only the trivial conformal Killing field, and suppose $\bfsigma\in T^*_\mathbf g\calC$.
Let $\omega$ be the volume form of the unique representative $g_{ab}\in\mathbf{g}$
that satisfies $R_{g}=-1$.  Then there is an open set $T_{\mathbf g,\bfsigma}$ 
of nowhere-vanishing mean curvatures such that
every nonzero constant mean curvature belongs to $T_{g,\sigma}$,
and such that for every $\tau \in T_{g,\sigma}$ the
1974 conformal data $(\mathbf g, \bfsigma, \tau, \omega)$
determines a unique solution of the constraint equations.
\end{theorem}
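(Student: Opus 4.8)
The plan is to recognize that this statement is just Theorem~\ref{thm:im} read through the coordinate-free dictionary of Sections~\ref{sec:conformal}--\ref{sec:1974} and applied to the canonical representative that a Yamabe-negative conformal class always supplies; so the proof is almost entirely a matter of unwinding definitions, with a single external input.

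First I would pin down the representative metric named in the statement. On a compact manifold a Yamabe-negative conformal class $\mathbf g$ contains a metric $g_{ab}$ with $R_g \equiv -1$: existence is the negative case of the Yamabe problem (classical), and uniqueness follows from the transformation law \eqref{eq:scalchange} together with the maximum principle, since if $\phi^{q-2}g_{ab}$ also has scalar curvature $-1$ then $2\kappa q\,\Lap_g\phi = \phi^{q-1}-\phi$, and evaluating this at the extrema of $\phi$ forces $\phi\equiv 1$ because $q-2>0$. Put $\omega = dV_g$; this is the volume form appearing in the statement. Let $\sigma_{ab}$ be the representative of $\bfsigma$ with respect to $g_{ab}$, which is transverse-traceless by Lemma~\ref{lem:whatsaTT}. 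Since the conformal Killing equation is conformally covariant, $\mathbf g$ admitting only the trivial conformal Killing field is equivalent to $g_{ab}$ admitting only the trivial one, so $(g_{ab},\sigma_{ab},\tau)$ is exactly the sort of representative 1974 data to which Theorem~\ref{thm:im} applies.

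I would then apply Theorem~\ref{thm:im} to $(g_{ab},\sigma_{ab})$, obtaining an open set of nowhere-vanishing mean curvatures, containing every nonzero constant, on which the 1974 conformally parameterized constraint equations \eqref{eq:LCBY} for $(g_{ab},\sigma_{ab},\tau)$ admit a unique solution $(\phi,W)$ --- with $W$ genuinely unique, not merely unique up to a conformal Killing field, because there are none. Taking $T_{\mathbf g,\bfsigma}$ to be this set, it remains to transfer the conclusion to the coordinate-free language: because $g_{ab}$ is precisely the representative of $\mathbf g$ with $dV_g=\omega$ and $\sigma_{ab}$ is the representative of $\bfsigma$ there, the Proposition describing the 1974 conformal method in Section~\ref{sec:1974} puts the solutions of \eqref{eq:LCBY} for $(g_{ab},\sigma_{ab},\tau)$ in bijection with the solutions $(\ovl g_{ab},\ovl K_{ab})$ of the constraints satisfying conditions \eqref{eq:1974params2}, i.e. with those determined by the 1974 conformal data $(\mathbf g,\bfsigma,\tau,\omega)$; uniqueness therefore passes across. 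I do not expect any real obstacle here: the only non-formal ingredient is the negative-Yamabe existence statement used to locate $g_{ab}$, and the one point needing a moment's care is that the bijection in that Proposition carries no multiplicity defect, which is exactly what the no-conformal-Killing-field hypothesis guarantees. The genuine strengthening over \cite{Isenberg:1996fi} becomes visible only in the next step, where conformal covariance of the 1974 method lets one replace $g_{ab}$ by an arbitrary, possibly sign-changing, representative of $\mathbf g$.
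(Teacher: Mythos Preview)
Your proposal is correct and matches the paper's approach: the paper presents this theorem without proof, introducing it with the sentence ``In coordinate-free language, Theorem~\ref{thm:im} can be phrased as follows,'' so it is treated purely as a translation of Theorem~\ref{thm:im} through the dictionary of Sections~\ref{sec:conformal}--\ref{sec:1974}, exactly as you describe. Your added care about existence and uniqueness of the $R_g\equiv -1$ representative and about the absence of conformal Killing fields removing any multiplicity in the bijection is more than the paper spells out, but entirely in the intended spirit.
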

In this language, the central restriction of the theorem is the choice of a single volume form $\omega$.
We wish to eliminate this restriction,
and we use the fact that the choice of volume form for the 1974 method corresponds to the choice of densitized lapse for the CTS-H method.
So we will consider the CTS-H equations
\begin{equation}\label{eq:ctsh3}
\ifjournal\eqalign{
-2\kappa q \Lap_g \phi + R_g \phi -\left|\sigma + \frac{1}{2N}\ck_g W\right|^2_g\phi^{-q-1} + \kappa \tau^2 &= 0\\
\\
\div_g \frac{1}{2N} \ck W &= \phi^q d\tau	
}\else
\begin{aligned}
-2\kappa q \Lap_g \phi + R_g \phi -\left|\sigma + \frac{1}{2N}\ck_g W\right|^2_g\phi^{-q-1} + \kappa \tau^2 &= 0\\
\\
\div_g \frac{1}{2N} \ck W &= \phi^q d\tau
\end{aligned}\fi
\end{equation}
where $g_{ab}$ is the unique representative with $R_g=-1$ and $N$ is an arbitrary lapse. 

The equations
considered by Theorem \ref{thm:im} are exactly equations \eqref{eq:ctsh3} with $N=1/2$,
so we need to consider the impact of an arbitrary choice of $N$ in equations 
\eqref{eq:ctsh3} on the rather technical proof of Theorem \ref{thm:im}.
In effect, this amounts to replacing $\ck_g$ with $1/(2N)\ck_g$ wherever it appears in the proof, and
there are facts concerning the vector Laplacian
$\Lap_{\ck_g} = \div_g \ck_g$ that need to be revisited for the operator
\begin{equation}
\Lap_{\ck_g,N}=\div_g (2N)^{-1}\ck_g.
\end{equation}
\begin{proposition}\label{prop:divck}
Let $g_{ab}$ and $N$ be a smooth metric and positive smooth function on $M$.
The operator $\Lap_{\ck_g,N}$ is linear, elliptic, and
self-adjoint with respect to $g_{ab}$. If $g_{ab}$ has no conformal Killing fields
then $\Lap_{\ck_g,N}$ has trivial kernel.  Regardless of whether $g_{ab}$ has
conformal Killing fields, there is a constant $c_{g,N}$ such that
if $W^a$ and $\eta_a$ satisfy
\begin{equation}
\Lap_{\ck_g,N} W = \eta
\end{equation}
then
\begin{equation}\label{eq:divckNest}
\left| \frac{1}{2N} \ck_g W\right|_g \le c_{g,N} |\eta|_g.
\end{equation}
\end{proposition}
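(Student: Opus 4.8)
The plan is to reduce every assertion to a standard fact about the conformal vector Laplacian, exploiting that multiplication by the positive smooth function $(2N)^{-1}$ alters none of the relevant qualitative features. The one identity I would record first is the integration-by-parts formula behind \eqref{eq:div}: for any trace-free symmetric $(0,2)$-tensor $S_{ab}$ and vector field $W^a$,
\[
\int_M \ip<S,\ck_g W>_g \, dV_g = -2\int_M \ip<\div_g S, W>_g \, dV_g .
\]
Applying it with $S_{ab} = (2N)^{-1}(\ck_g V)_{ab}$ gives
\[
\int_M \ip<\Lap_{\ck_g,N}V, W>_g \, dV_g = -\frac14 \int_M \frac1N \ip<\ck_g V, \ck_g W>_g \, dV_g ,
\]
which is symmetric in $V$ and $W$, proving (formal) self-adjointness with respect to $g_{ab}$; taking $V=W$ also shows $\Lap_{\ck_g,N}$ is non-positive and that $\int_M \ip<\Lap_{\ck_g,N}W, W>_g\,dV_g = 0$ forces $\ck_g W = 0$. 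Linearity is immediate. For ellipticity I would compute the principal symbol at a nonzero covector $\xi$: it equals $(2N)^{-1}$ times the symbol of $\Lap_{\ck_g}=\div_g\ck_g$, namely $X_b \mapsto |\xi|_g^2 X_b + \left(1-\frac2n\right)(\xi^a X_a)\,\xi_b$; testing $X_b = \mu\,\xi_b$ yields $\mu\,|\xi|_g^2\left(2-\frac2n\right)\xi_b$, nonzero for $\mu\neq 0$ since $n\ge 3$, and splitting $X$ into components parallel and orthogonal to $\xi$ shows the symbol is injective, hence invertible.

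For the kernel, pairing $\Lap_{\ck_g,N}W = 0$ with $W$ and invoking the displayed identity forces $\ck_g W = 0$, i.e. $W$ is a conformal Killing field of $g$; thus $\ker \Lap_{\ck_g,N}$ is exactly the space of conformal Killing fields, \emph{independently of $N$}, and is trivial precisely when $g$ admits no nontrivial conformal Killing field. Being elliptic and formally self-adjoint on a compact manifold, $\Lap_{\ck_g,N}$ is Fredholm of index zero with cokernel $L^2$-isomorphic to its kernel, so $\Lap_{\ck_g,N}W = \eta$ is solvable precisely when $\eta$ is $L^2$-orthogonal to every conformal Killing field.

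For the estimate \eqref{eq:divckNest}, read in the supremum norm $\sup_M |(2N)^{-1}\ck_g W|_g \le c_{g,N}\sup_M |\eta|_g$, the key preliminary remark is that $\ck_g W$ is determined by $\eta$ alone: if $\Lap_{\ck_g,N}W_1 = \Lap_{\ck_g,N}W_2 = \eta$ then $W_1-W_2$ is a conformal Killing field, hence $\ck_g W_1 = \ck_g W_2$. So we may take $W$ to be $L^2$-orthogonal to the conformal Killing fields. The restriction of $\Lap_{\ck_g,N}$ to that closed complement is injective, and (being Fredholm) has closed range, so the open mapping theorem gives $\|W\|_{W^{2,p}} \le C_{g,N,p}\,\|\eta\|_{L^p}$ for any $p\in(1,\infty)$. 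Choosing $p>n$ so that $W^{2,p}(M)\hookrightarrow C^1(M)$, and using $\|\eta\|_{L^p}\le C\sup_M|\eta|_g$ together with $\sup_M(2N)^{-1}<\infty$, we obtain the claimed bound on $(2N)^{-1}\ck_g W$. The symbol computation and the integration by parts are routine; the one step that genuinely needs care is this last one, where the bound on $\ck_g W$ must be made uniform even though $\Lap_{\ck_g,N}$ need not be invertible — resolved by the remark that $\ck_g W$ depends only on $\eta$ together with working on the orthogonal complement of the conformal Killing fields.
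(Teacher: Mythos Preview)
Your proof is correct, but you take a different route for the estimate \eqref{eq:divckNest} than the paper does. The paper's argument is a conformal reduction: choosing $\psi$ with $\psi^q=(2N)^{-1}$ and setting $\tilde g_{ab}=\psi^{q-2}g_{ab}$, one computes $\Lap_{\ck_{\tilde g}}=\psi^{-q}\Lap_{\ck_g,N}$, so the equation becomes $\Lap_{\ck_{\tilde g}}W=\psi^q\eta$; the paper then invokes the known $C^0$ estimate for the \emph{standard} vector Laplacian (Lemma~1 of \cite{Isenberg:2004jd}) and converts norms back to $g$. This is in keeping with the paper's theme that results for $\Lap_{\ck_g,N}$ can be read off as corollaries of results for $\Lap_{\ck_g}$ without redoing the analysis. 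Your approach is instead a direct, self-contained elliptic argument: you observe that $\ck_g W$ depends only on $\eta$, restrict to the $L^2$-complement of the conformal Killing fields to make the operator injective, and then use $L^p$ elliptic regularity together with the Sobolev embedding $W^{2,p}\hookrightarrow C^1$ for $p>n$. Both arguments are short; yours avoids the external citation at the cost of invoking slightly more elliptic machinery, while the paper's makes the point that the generalized operator is conformally equivalent to the standard one.
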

\begin{proof}
The fact that $\Lap_{\ck_g,N}$ is linear, elliptic, and self-adjoint is obvious, and
an integration by parts argument shows that its kernel consists of conformal Killing fields,
so it remains to establish inequality \eqref{eq:divckNest}.

Let $\psi$ be the unique positive function with $\psi^q = (2N)^{-1}$
and let $\tilde g_{ab} = \psi^{q-2} g_{ab}$.  Then $\ck_{\tilde g} = \psi^{q-2} \ck_g$ and
$\div_{\tilde g} = \psi^{2-2q} \div_g \psi^2$.  Hence
\begin{equation}
\Lap_{\ck_{\tilde g}} = \psi^{-q} \div_g \psi^q \ck_g = \psi^{-q} \div_g (2N)^{-1}\ck_g 
= \psi^{-q} \Lap_{\ck_g,N}.
\end{equation}
Now suppose $W^a$ and $\eta_a$ satisfy
$\Lap_{\ck_g,N} W = \eta$, so 
\begin{equation}
\Lap_{\ck_{\tilde g}} W = \psi^{q}\eta.
\end{equation}
From \cite{Isenberg:2004jd} Lemma 1 concerning the standard vector Laplacian
we know that there is a constant $\tilde c$, independent of $W^a$ and $\eta_a$, 
such that
\begin{equation}
\max |\ck_{\tilde g}W|_{\tilde g} \le \tilde c \max (\psi^q) \max|\eta|_{\tilde g}.
\end{equation}
Since the norms for $g$ and $\tilde g$ are comparable via constants depending
on $\min\psi$ and $\max\psi$, and 
since $\ck_{\tilde g} = \psi^{q-2}\ck_g$, inequality \eqref{eq:divckNest} now follows,
where the constant depends on $\hat c$, $\min \psi$ and $\max\psi$ (i.e., on $g$ and $N$).
\end{proof}

With Proposition \ref{prop:divck} in hand, the reader is now invited to walk through
the proof of Theorem \ref{thm:im}, as presented in \cite{Isenberg:1996fi},
to establish existence and uniqueness of equations \eqref{eq:ctsh3}.
The only interesting changes occur in establishing analogues of inequalities (38) and (58)
of that paper under the hypotheses that $\max(|d\tau|_g/|\tau|)$ and $|d\tau|_g$ are
sufficiently small. This is exactly where inequality \eqref{eq:divckNest} of Proposition
\ref{prop:divck} is used.  In coordinate-free language, one arrives at the following result.

\begin{theorem}\label{thm:nearCMC}
Let $M^3$ be a compact 3-manifold.  Let $\mathbf{g}$ be a Yamabe-negative conformal class
on $M$ admitting no conformal Killing fields, and let $\bfsigma\in T^*_\mathbf g\calC$ be arbitrary.
For any choice $\mathbf N$ of densitized lapse there is an open set 
$T_{\mathbf g,\bfsigma,\mathbf N}$ of 
of nowhere-vanishing mean curvatures such that
every nonzero constant mean curvature belongs to $T_{g,\sigma,\mathbf N}$, and
such that for every $\tau \in T_{\mathbf g,\bfsigma,\mathbf N}$ the
CTS-H data $(\mathbf g, \bfsigma, \tau, \mathbf{N})$
determines a unique solution of the constraint equations.
\end{theorem}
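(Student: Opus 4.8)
The plan is to reduce the statement to an existence-and-uniqueness result for a PDE system of the form \eqref{eq:ctsh3} and then to reproduce the near-CMC argument of \cite{Isenberg:1996fi} that establishes Theorem~\ref{thm:im}, the only genuinely new input being the estimate \eqref{eq:divckNest} of Proposition~\ref{prop:divck}. First I would fix a convenient representative of the conformal data. Since $\mathbf g$ is Yamabe negative it contains a unique metric $g_{ab}$ with $R_g\equiv -1$; let $\sigma_{ab}$ be the transverse-traceless representative of $\bfsigma$ with respect to $g_{ab}$ and let $N$ be the representative of $\mathbf N$. By the conformal covariance of the CTS-H method together with Proposition~\ref{prop:CTSH}, the solutions of the constraint equations associated with the CTS-H data $(\mathbf g,\bfsigma,\tau,\mathbf N)$ are in bijection with the pairs $(\phi,W^a)$ solving the CTS-H equations \eqref{eq:ctsh} for this representative; the bijection is unambiguous because $\mathbf g$ has no conformal Killing fields, so there is no residual freedom to add such a field to $W^a$. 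With the chosen representative the CTS-H equations are exactly \eqref{eq:ctsh3}, so it suffices to show that \eqref{eq:ctsh3} has a unique solution for every $\tau$ in a suitable open set.

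Next I would note that \eqref{eq:ctsh3} with $N\equiv 1/2$ is precisely the system treated in \cite{Isenberg:1996fi}, and that passing to a general positive $N$ amounts uniformly to replacing $\ck_g$ by $(2N)^{-1}\ck_g$ wherever it occurs, hence the vector Laplacian $\Lap_{\ck_g}=\div_g\ck_g$ by $\Lap_{\ck_g,N}=\div_g (2N)^{-1}\ck_g$. Every structural property of $\Lap_{\ck_g}$ that the proof in \cite{Isenberg:1996fi} relies on survives this replacement: $\Lap_{\ck_g,N}$ is linear, elliptic, self-adjoint with respect to $g_{ab}$, and (under our no-conformal-Killing-fields hypothesis) has trivial kernel, and, crucially, it obeys the a priori bound $|(2N)^{-1}\ck_g W|_g\le c_{g,N}|\eta|_g$ for solutions of $\Lap_{\ck_g,N}W=\eta$. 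These are exactly the contents of Proposition~\ref{prop:divck}, whose proof I would give by conformally rescaling to $\tilde g_{ab}=\psi^{q-2}g_{ab}$ with $\psi^q=(2N)^{-1}$, so that $\Lap_{\ck_{\tilde g}}=\psi^{-q}\Lap_{\ck_g,N}$, and then invoking the known estimate for the standard vector Laplacian from \cite{Isenberg:2004jd} and comparing the $g$- and $\tilde g$-norms. The constant $c_{g,N}$ depends on $g$ and $N$, but since both are fixed this is harmless. Applied to the CTS-H momentum constraint $\Lap_{\ck_g,N}W=\kappa\phi^q\,d\tau$, the bound reads $|(2N)^{-1}\ck_g W|_g\le \kappa\,c_{g,N}\,(\max\phi)^q\,|d\tau|_g$.

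With this in hand I would run the iteration scheme of \cite{Isenberg:1996fi} essentially verbatim, using $(2N)^{-1}\ck_g$ in place of $\ck_g$: one solves the momentum constraint for $W$ given $\phi$, feeds $\sigma+(2N)^{-1}\ck_g W$ into the Lichnerowicz-type equation for $\phi$, and builds a monotone sequence of sub- and super-solutions that converges because the near-CMC smallness of $|d\tau|_g$ and of $\max(|d\tau|_g/|\tau|)$ controls the coupling term via the bound just displayed. Uniqueness follows from the same maximum-principle comparison, again using \eqref{eq:divckNest} to estimate the difference of the vector-field parts of two putative solutions. The set $T_{\mathbf g,\bfsigma,\mathbf N}$ is then defined by requiring $|d\tau|_g$ and $\max(|d\tau|_g/|\tau|)$ to lie below thresholds depending on $g$, $\sigma$ and $N$; it is open in $\tau$, consists of nowhere-vanishing mean curvatures, and contains every nonzero constant, since a constant $\tau$ has $d\tau\equiv 0$.

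The only part that requires real care, rather than routine translation, is the quantitative bookkeeping in the last step: one must re-examine the specific inequalities of \cite{Isenberg:1996fi} — in that paper's numbering, the analogues of inequalities (38) and (58) — and confirm that they still close under the near-CMC hypotheses once $\ck_g$ has been replaced by $(2N)^{-1}\ck_g$. This is precisely where \eqref{eq:divckNest} enters, and, $N$ being fixed, the argument goes through with the various constants merely acquiring a dependence on $\min N$ and $\max N$. Finally, via Proposition~\ref{prop:1974rep-CTSH} this CTS-H statement translates into the corresponding strengthening for the 1974 conformal method with an arbitrary volume form, which was the original motivation.
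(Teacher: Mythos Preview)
Your proposal is correct and follows essentially the same approach as the paper: fix the representative $g_{ab}$ with $R_g\equiv -1$, reduce via Proposition~\ref{prop:CTSH} to the CTS-H system \eqref{eq:ctsh3}, and then rerun the argument of \cite{Isenberg:1996fi} with $\ck_g$ replaced by $(2N)^{-1}\ck_g$, using Proposition~\ref{prop:divck} precisely at the analogues of inequalities (38) and (58). The paper's own proof is in fact just this sketch, so your write-up is, if anything, more detailed than what appears there.
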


Using Proposition \ref{prop:1974rep-CTSH}, Theorem \ref{thm:nearCMC} then implies that
Theorem \ref{thm:im} holds with the restriction $R_g=-1$ replaced by the condition that $g_{ab}$
is Yamabe-negative.

Reference \cite{Allen:2008ef} contains results that are analogues of
Theorem \ref{thm:im} under the hypotheses
that $R_{g}\equiv 0$ or $R_{g}\equiv 8$.  We assert that using Proposition \ref{prop:divck}
one can repeat the exercise just undertaken for this paper as well to prove the following.

\begin{theorem}\label{thm:nearCMCpos}
Let $M^3$ be a compact 3-manifold.  Let $\mathbf{g}$ be a Yamabe-nonnegative conformal class
on $M$ admitting no conformal Killing fields, and let $\bfsigma\in T^*_\mathbf g\calC$ be
arbitrary (but not zero).
For any choice $\mathbf N$ of densitized lapse there is an open set 
$T_{\mathbf g,\bfsigma,\mathbf N}$ of nowhere-vanishing mean curvatures
such that
every nonzero constant mean curvature belongs to $T_{g,\sigma,\mathbf N}$,
and
such that for every $\tau \in T_{\mathbf g,\bfsigma,\mathbf N}$ the
CTS-H data $(\mathbf g, \bfsigma, \tau, \mathbf{N})$
determines a unique solution of the constraint equations.
\end{theorem}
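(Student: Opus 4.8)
The plan is to follow the template already used to prove Theorem \ref{thm:nearCMC}, now with \cite{Allen:2008ef} playing the role that \cite{Isenberg:1996fi} played there. By the conformal covariance of the CTS-H method it suffices to establish existence and uniqueness using a single convenient representative of the data $(\mathbf g, \bfsigma, \tau, \mathbf N)$. Since $\mathbf g$ is Yamabe-nonnegative, I would pick the representative metric $g_{ab}$ with constant scalar curvature $R_g\equiv 0$ (when $\mathbf g$ is Yamabe-null) or $R_g\equiv 8$ (when $\mathbf g$ is Yamabe-positive) --- these being exactly the two normalizations treated in \cite{Allen:2008ef} --- and let $\sigma_{ab}$ and $N$ be the representatives of $\bfsigma$ and $\mathbf N$ with respect to $g_{ab}$. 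With this choice the CTS-H equations \eqref{eq:ctsh} take the same form as the equations analyzed in \cite{Allen:2008ef}, with the sole difference that every occurrence of the conformal Killing operator $\ck_g$ in the momentum constraint is replaced by the weighted operator $(2N)^{-1}\ck_g$, so that the relevant vector Laplacian is $\Lap_{\ck_g,N}$ rather than $\Lap_{\ck_g}$.

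The next step is to re-run the argument of \cite{Allen:2008ef} with this substitution. The momentum constraint is solved for $W^a$ in terms of $\phi$ using the invertibility of $\Lap_{\ck_g,N}$ (guaranteed by Proposition \ref{prop:divck}, since $\mathbf g$, hence $g_{ab}$, admits no conformal Killing fields), and the resulting $W[\phi]$ is substituted into the Lichnerowicz-type Hamiltonian constraint, which is then solved by the sub/supersolution or iteration scheme of \cite{Allen:2008ef}. The only places where the change from $\ck_g$ to $(2N)^{-1}\ck_g$ has any effect are the a priori bounds on $|(2N)^{-1}\ck_g W|_g$ and their dependence on $|d\tau|_g$ and on $\phi$; these are supplied, with constants depending only on $g_{ab}$ and $N$, by inequality \eqref{eq:divckNest} of Proposition \ref{prop:divck}. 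The hypotheses that $\mathbf g$ is Yamabe-nonnegative and $\bfsigma\neq 0$ (equivalently $\sigma_{ab}\not\equiv 0$) enter precisely where they do in \cite{Allen:2008ef}: they are what permit the construction of a global subsolution of the Hamiltonian constraint in the Yamabe-positive case, where the linear term $R_g\phi$ carries the unfavorable sign. The near-CMC smallness of $\max(|d\tau|_g/|\tau|)$ and of $|d\tau|_g$ closes the coupling between the two equations and yields uniqueness, exactly as in \cite{Allen:2008ef}; the open set $T_{\mathbf g,\bfsigma,\mathbf N}$ is the set on which these smallness conditions hold with the relevant constants, which now also depend on $N$ through $c_{g,N}$. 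Finally one restates the conclusion in coordinate-free terms, observing that the solution and its uniqueness are independent of the chosen representative.

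The main obstacle is bookkeeping rather than conceptual: one must check that the two quantitative estimates at the heart of \cite{Allen:2008ef} (the analogues of inequalities (38) and (58) of \cite{Isenberg:1996fi}) survive the replacement of $\ck_g$ by $(2N)^{-1}\ck_g$ with the loss only of constants depending on $g_{ab}$ and $N$, and that the subsolution construction in the Yamabe-positive case still closes with $\sigma_{ab}$ merely nonvanishing as a tensor rather than bounded below pointwise. Proposition \ref{prop:divck} is designed to absorb the first point, and because it furnishes a uniform constant $c_{g,N}$ the second is handled in the same way as in \cite{Allen:2008ef}. Beyond that the proof is a line-by-line transcription of \cite{Allen:2008ef}, and I would indicate the substitutions rather than reproduce the argument in full.
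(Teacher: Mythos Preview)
Your proposal is correct and follows essentially the same approach as the paper: the paper itself offers no detailed proof of this theorem, merely asserting that one can repeat the exercise done for Theorem \ref{thm:nearCMC} with \cite{Allen:2008ef} in place of \cite{Isenberg:1996fi}, using Proposition \ref{prop:divck} to handle the weighted vector Laplacian. You have simply fleshed out what the paper leaves to the reader.
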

Hence Theorem \ref{thm:im} also holds without any restriction whatsoever on the metric $g_{ab}$.

\subsection{Near-CMC Nonexistence}
Theorem \ref{thm:CMC} states that aside from some special cases,
there does not exists a solution for CMC data
$(\mathbf{g},\bfsigma,\tau)$ if $\mathbf g$ is Yamabe non-negative and $\bfsigma = 0$.
Reference \cite{Isenberg:2004jd} established the following two near-CMC analogues of this fact.

\begin{theorem}\label{thm:IOMa}
Let $M^3$ be a compact 3-manifold.  Suppose we have 1974 representative data $(g_{ab},\sigma_{ab},\tau)$ 
with $R_g\ge 0$ and $\sigma_{ab}\equiv 0$.  If $\tau = T+\rho$ for some nonzero constant $T$ and if
\begin{equation}
\frac{|d\rho|_g}{|T|}
\end{equation}
is sufficiently small, then the 1974 conformally parameterized constraint equations \eqref{eq:LCBY} do not admit a solution.
\end{theorem}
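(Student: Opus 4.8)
The plan is to argue by contradiction: suppose $(\phi,W^a)$ solves the 1974 conformally parameterized constraint equations \eqref{eq:LCBY} for the representative data $(g_{ab},0,\tau)$, and extract from this a lower bound on $|d\rho|_g/|T|$ that contradicts the smallness hypothesis. Normalize by taking $T$ to be the average value of $\tau$, so that $\rho$ has zero average; since $M$ is connected, $\rho$ then vanishes somewhere, which gives $\max_M|\rho|\le\mathrm{diam}_g(M)\,\max_M|d\rho|_g$. In particular, once the ratio in the hypothesis is small, $\tau$ is nowhere zero and $\min_M|\tau|\ge|T|-\mathrm{diam}_g(M)\,\max_M|d\rho|_g$.

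The first step is to exploit the momentum constraint \eqref{eq:LCBYmomentum}. Since $\sigma\equiv 0$ and $T$ is constant, it reads $\div_g\ck_g W=\kappa\,\phi^q\,d\rho$, so the standard vector-Laplacian estimate (\cite{Isenberg:2004jd} Lemma 1, or equivalently the $N\equiv 1/2$ case of Proposition~\ref{prop:divck}) supplies a constant $c_g$ depending only on $g_{ab}$ with
\[
\max_M|\ck_g W|_g\;\le\;c_g\,\max_M\!\bigl(\phi^q|d\rho|_g\bigr)\;\le\;c_g\,\bigl(\max_M\phi\bigr)^{q}\,\max_M|d\rho|_g .
\]
No hypothesis on conformal Killing fields is needed here, since these lie in the kernel of $\ck_g$, so $\ck_g W$, and hence the estimate, is unaffected by them.

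The second step is a maximum-principle argument on the Hamiltonian constraint \eqref{eq:LCBYHamiltonian}, which with $\sigma\equiv 0$ is $-2\kappa q\,\Lap_g\phi+R_g\phi-|\ck_g W|_g^{2}\phi^{-q-1}+\kappa\tau^{2}\phi^{q-1}=0$. At a point $x_1$ where $\phi$ attains its positive maximum one has $\Lap_g\phi(x_1)\le 0$, so $-2\kappa q\,\Lap_g\phi(x_1)\ge 0$; discarding this term together with the nonnegative term $R_g\phi$ (this is where $R_g\ge 0$ is used) leaves $\kappa\,\tau(x_1)^{2}\phi(x_1)^{q-1}\le|\ck_g W|_g^{2}(x_1)\,\phi(x_1)^{-q-1}$, that is,
\[
|\ck_g W|_g(x_1)\;\ge\;\sqrt{\kappa}\;|\tau(x_1)|\,\bigl(\max_M\phi\bigr)^{q}.
\]
Combining with the first-step bound and cancelling the common positive factor $(\max_M\phi)^{q}$ gives $\sqrt{\kappa}\,|\tau(x_1)|\le c_g\,\max_M|d\rho|_g$, hence $\min_M|\tau|\le(c_g/\sqrt{\kappa})\,\max_M|d\rho|_g$. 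Together with the lower bound for $\min_M|\tau|$ from the first paragraph, this forces
\[
\frac{\max_M|d\rho|_g}{|T|}\;\ge\;\frac{1}{\,\mathrm{diam}_g(M)+c_g/\sqrt{\kappa}\,},
\]
which contradicts the hypothesis once ``sufficiently small'' is taken to mean below this threshold. That completes the argument.

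The main obstacle---really the only nontrivial point---is keeping both estimates completely free of error terms, and this is exactly what the hypotheses buy: $\sigma\equiv 0$ makes the term $|\sigma+\ck_g W|_g^{2}$ in the Hamiltonian constraint equal to $|\ck_g W|_g^{2}$ (no cross term to absorb), and $R_g\ge 0$ lets $R_g\phi$ be discarded at the maximum of $\phi$. If either is relaxed a remainder survives, which is precisely why the companion near-CMC existence results of Section~\ref{sec:apps} carry $\sigma$-dependent mean-curvature sets; the normalization of $T$ as the average of $\tau$ and the use of $\ck_g W$ in place of $W$ are the only remaining (routine) bookkeeping details.
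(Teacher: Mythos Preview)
The paper does not supply its own proof of this theorem: it is quoted as Theorem~2 of \cite{Isenberg:2004jd} and used as input for the subsequent discussion, where the paper's contribution is to \emph{strengthen} the hypothesis from $R_g\ge 0$ to Yamabe non-negative by routing through Theorem~\ref{thm:IOMb} and the CTS-H/1974 equivalence. So there is no in-paper argument to compare against.

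Your argument is the standard direct proof from \cite{Isenberg:2004jd}: combine the $C^0$ vector-Laplacian estimate on the momentum constraint with a maximum-principle evaluation of the Lichnerowicz equation at $\max\phi$, using $R_g\ge 0$ to discard the scalar-curvature term, and cancel the common factor $(\max\phi)^q$. This is correct and is exactly the mechanism behind the cited result. One small quibble: the opening renormalization ``take $T$ to be the average of $\tau$'' is not something the theorem statement grants you---$T$ is given---and as literally stated the hypothesis $|d\rho|_g/|T|$ small could be achieved for any $\tau$ by taking $|T|$ large. This is a looseness in the theorem statement (inherited from the informal phrasing of the near-CMC condition) rather than a flaw in your reasoning; your core inequality $\min_M|\tau|\le (c_g/\sqrt{\kappa})\max_M|d\tau|_g$ is the substantive output, and the normalization is just packaging. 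In the original reference the near-CMC hypothesis is effectively a bound on $\max|d\tau|/\min|\tau|$, which your argument addresses directly without any renormalization.
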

\begin{theorem}\label{thm:IOMb}
Let $M^3$ be a compact 3-manifold.  Suppose we have CTS-L representative data $(g_{ab},u_{ab},\tau,N)$
where $g$ is Yamabe non-negative and with $u_{ab}\equiv 0$.  
If $\tau = T+\rho$ for some nonzero constant $T$ and if
\begin{equation}\label{eq:iom}
\frac{|d\rho|_g}{|T|}
\end{equation}
is sufficiently small, then the CTS-L equations \eqref{eq:CTS} do not admit a solution.
\end{theorem}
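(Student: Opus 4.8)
The plan is to argue by contradiction, in the spirit of \cite{Isenberg:2004jd}. First I would normalize the background: since $\mathbf{g}=[g_{ab}]$ is Yamabe non-negative, the resolution of the Yamabe problem provides a conformally related representative with constant, non-negative scalar curvature (identically $0$ if $\mathbf g$ is Yamabe null, a positive constant if Yamabe positive), and by the conformal covariance of the CTS-L method (Proposition \ref{prop:CTSinv}) I may replace $(g_{ab},u_{ab},\tau,N)$ by the corresponding data over this representative. The hypothesis $u_{ab}\equiv 0$ is preserved, $\tau$ is unchanged, and since the old and new metrics have uniformly comparable norms the smallness of $|d\tau|_g/|T|$ survives (with a possibly different threshold). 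Writing $\tau=T+\rho$ — and, if necessary, replacing $T$ by the mean of $\tau$ and bounding the remainder by a Poincar\'e inequality — the near-CMC hypothesis keeps $\tau^2$ uniformly bounded away from $0$, say $\tau^2\ge c_0|T|^2>0$.

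Now suppose for contradiction that $(\phi,X)$ solves \eqref{eq:CTS} with $u_{ab}\equiv 0$, and set $V_{ab}=\tfrac{1}{2N}(\ck_g X)_{ab}$. The momentum constraint becomes $\Lap_{\ck_g,N}X=-\kappa\phi^q\,d\tau$, so Proposition \ref{prop:divck} — which controls the $1/(2N)$-weighted conformal Killing Laplacian uniformly, whether or not conformal Killing fields are present — yields $\max_M|V|_g\le c_{g,N}\kappa\,\phi_{\max}^q\,\max_M|d\rho|_g$; this is the mechanism by which near-CMC data forces the momentum contribution to be small compared with $\phi$. Next, the Hamiltonian constraint reads $L_g\phi=|V|_g^2\phi^{-q-1}-\kappa\tau^2\phi^{q-1}$ with $L_g=-2\kappa q\Lap_g+R_g$ the conformal Laplacian. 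Evaluating this at an interior minimum of $\phi$, where $\Lap_g\phi\ge 0$, and using $R_g\ge 0$ gives $|V|_g^2\ge\kappa\tau^2\phi_{\min}^{2q}$ there, hence together with the momentum bound a bound $(\phi_{\min}/\phi_{\max})^{2q}\le C(g,N)\,(\max_M|d\rho|_g/|T|)^2$; and integrating the Hamiltonian constraint over $M$, which annihilates the Laplacian, and using $R_g\ge 0$ gives $\kappa\int_M\tau^2\phi^{q-1}\,dV_g\le\int_M|V|_g^2\phi^{-q-1}\,dV_g$.

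The contradiction must come from playing these against each other: the integrated identity forces the source $\int_M|V|_g^2\phi^{-q-1}\,dV_g$ to dominate $\kappa\int_M\tau^2\phi^{q-1}\,dV_g$, while the momentum bound estimates it by $c_{g,N}^2\kappa^2(\max_M|d\rho|_g)^2\,\phi_{\max}^{2q}\int_M\phi^{-q-1}\,dV_g$, and for $\max_M|d\rho|_g/|T|$ small these should be incompatible. In the CMC limit $d\rho\equiv 0$, so $V\equiv 0$, and this is just the classical fact that $L_g\phi=-\kappa\tau^2\phi^{q-1}$ has no positive solution when $\mathbf g$ is Yamabe non-negative and $\tau\not\equiv 0$: then $\Lap_g\phi\ge\tfrac{1}{2\kappa q}\big(R_g\phi+\kappa\tau^2\phi^{q-1}\big)$, whose right-hand side is non-negative and not identically zero, contradicting $\int_M\Lap_g\phi\,dV_g=0$. \textbf{The main obstacle is making this limiting obstruction quantitative.} The naive estimates $\int_M\phi^{q-1}\,dV_g\ge\phi_{\min}^{q-1}\operatorname{Vol}(g)$ and $\int_M\phi^{-q-1}\,dV_g\le\phi_{\min}^{-q-1}\operatorname{Vol}(g)$ lose precisely the borderline factor and do not close the gap; instead one needs genuine control of the distribution of $\phi$ — for instance an $L^p$ or interior gradient argument exploiting $\Lap_g\phi\ge-C\phi^{-q-1}$ from the Hamiltonian constraint — and it is exactly here that the quantitative near-CMC smallness enters, as carried out in \cite{Isenberg:2004jd}. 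I would organize the argument so that every appeal to a vector Laplacian passes through Proposition \ref{prop:divck}, mirroring the proof of Theorem \ref{thm:nearCMC}.
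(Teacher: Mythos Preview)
First, note that the paper does not actually prove Theorem~\ref{thm:IOMb}: it is quoted from \cite{Isenberg:2004jd} and then used as a black box to strengthen Theorem~\ref{thm:IOMa}. So there is no ``paper's own proof'' to compare against; what you have written is an attempted reconstruction of the argument from \cite{Isenberg:2004jd}.

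That said, your reconstruction contains a concrete error that is the source of the ``main obstacle'' you describe. You evaluate the Hamiltonian constraint at an interior \emph{minimum} of $\phi$ and claim $|V|_g^2 \ge \kappa\tau^2\phi_{\min}^{2q}$ there. But at a minimum $\Lap_g\phi \ge 0$, so $-2\kappa q\Lap_g\phi \le 0$, and the inequality you obtain from $L_g\phi = |V|_g^2\phi^{-q-1} - \kappa\tau^2\phi^{q-1}$ goes the \emph{other} way: it bounds $|V|_g^2$ from \emph{above}, which is useless. The correct move is to evaluate at an interior \emph{maximum} of $\phi$. There $\Lap_g\phi \le 0$, hence $-2\kappa q\Lap_g\phi \ge 0$, and together with $R_g \ge 0$ this gives
\[
|V|_g^2\,\phi_{\max}^{-q-1} - \kappa\tau^2\phi_{\max}^{q-1} \;=\; -2\kappa q\Lap_g\phi + R_g\phi \;\ge\; 0,
\]
so $|V|_g^2 \ge \kappa\tau^2\phi_{\max}^{2q}$ at that point. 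Combining with your momentum estimate $\max_M|V|_g \le c_{g,N}\kappa\,\phi_{\max}^q\max_M|d\rho|_g$ now yields
\[
c_{g,N}^2\kappa^2\,\phi_{\max}^{2q}\,(\max_M|d\rho|_g)^2 \;\ge\; \kappa\,(\min_M\tau^2)\,\phi_{\max}^{2q},
\]
and the factors of $\phi_{\max}^{2q}$ cancel exactly. Since $\min_M\tau^2$ is comparable to $T^2$ when $|d\rho|_g/|T|$ is small, this is the desired contradiction. The integrated identity, the $\phi_{\min}/\phi_{\max}$ ratio, and the ``distribution of $\phi$'' analysis you flag as the obstacle are all unnecessary; they arose only because you were working at the wrong critical point.
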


Note that Theorem \ref{thm:IOMa} only applies to metrics with everywhere non-negative scalar curvature,
whereas Theorem \ref{thm:IOMb} only assumes the metric is Yamabe non-negative.  We now show that
Theorem \ref{thm:IOMa} can be strengthened to include the case that $g_{ab}$ is Yamabe non-negative.

Suppose $g_{ab}$ is Yamabe non-negative and that $\sigma_{ab}\equiv 0$.
Following the recipes at the end of Section \ref{sec:same}, if a solution of the
constraints exists for 1974 data $(g_{ab}, \sigma_{ab}, \tau)$ then it exists for
CTS-H data $(g_{ab}, \sigma_{ab}, \tau,N)$ where $N=1/2$.  And if it exists for this CTS-H data,
then it also exists for CTS-L data $(g_{ab}, 2N\sigma_{ab}, \tau, N)$.  That is, there
is a solution for CTS-L data $(g_{ab}, u_{ab}, \tau, 1/2)$ where $u_{ab}\equiv 0$.  Now Theorem
\ref{thm:IOMb} implies that if $\tau = T+\rho$ for some non-zero constant $T$, and if $|d\rho|_g/T$ is
sufficiently small, there is no solution for the CTS-L data $(g_{ab}, u_{ab}, \tau, 1/2)$ and
therefore no solution for the 1974 data $(g_{ab}, \sigma_{ab}, \tau)$.

For completeness we state the coordinate-free variation of this result and leave the proof as an exercise.
\begin{theorem}\label{thm:nogo}
Suppose $\mathbf g$ is a Yamabe-nonnegative conformal class.
\begin{enumerate}
\item  If $\omega$ is a volume form, there is an open set $U_{\mathbf g,\omega}$ of mean curvatures
that contains the non-zero constants such that 1974 data $(\mathbf g, \bfsigma, \tau, \omega)$
does not generate a solution of the constraints if $\bfsigma=0$ and $\tau\in U_{\mathbf g,\omega}$.
\item  If\, $\mathbf N$ is a densitized lapse, there is an open set $V_{\mathbf g,\mathbf N}$ of mean curvatures
that contains the non-zero constants such that CTS-H data $(\mathbf g, \bfsigma , \tau, \mathbf N)$
does not generate a solution of the constraints if $\bfsigma=0$ and $\tau\in V_{\mathbf g,\mathbf N}$.
\item  For the same set $V_{\mathbf g,\mathbf N}$ as in item 2, CTS-L data $(\mathbf g, \mathbf u , \tau, \mathbf N)$
does not generate a solution of the constraints if $\mathbf u\in\Im \ck_{\mathbf g}$ and $\tau\in V_{\mathbf g,\mathbf N}$.
\end{enumerate}
Moreover, the sets $U_{\mathbf g,\omega}$ and $V_{\mathbf g,\mathbf N}$ are the same if $\omega$ and $\mathbf N$ are related via
Proposition \ref{prop:Nisomega}.
\end{theorem}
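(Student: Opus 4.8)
The plan is to deduce all three items, together with the \emph{moreover} clause, from Theorem~\ref{thm:IOMb} by transporting that nonexistence statement through the equivalences established in Section~\ref{sec:same}. The only genuine analytic input is Theorem~\ref{thm:IOMb}; everything else is bookkeeping with the projections and identifications from Sections~\ref{sec:conformal}--\ref{sec:same}.

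First I would prove item~3, the CTS-L statement. Fix a Yamabe-nonnegative conformal class $\mathbf g$ and a densitized lapse $\mathbf N$, pick an arbitrary representative $g_{ab}\in\mathbf g$, and let $N$ be the representative of $\mathbf N$ with respect to $g_{ab}$; since $g_{ab}$ represents a Yamabe-nonnegative class it is itself a Yamabe-nonnegative metric. The representative CTS-L data $(g_{ab},0,\tau,N)$ then satisfies the hypotheses of Theorem~\ref{thm:IOMb} (here $u_{ab}\equiv 0$), so that theorem supplies an open set $V_{\mathbf g,\mathbf N}$ of mean curvatures — namely those $\tau$ of the form $T+\rho$ with $T$ a nonzero constant and $|d\rho|_g/|T|$ sufficiently small — which contains every nonzero constant (take $\rho=0$) and for which the CTS-L equations~\eqref{eq:CTS} for $(g_{ab},0,\tau,N)$ have no solution. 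By Proposition~\ref{prop:CTS}, any solution of the constraints generated by CTS-L data $(\mathbf g,0,\tau,\mathbf N)$ would, relative to $g_{ab}$, yield a solution of~\eqref{eq:CTS} for $(g_{ab},0,\tau,N)$; hence $(\mathbf g,0,\tau,\mathbf N)$ generates no solution for $\tau\in V_{\mathbf g,\mathbf N}$. Finally, Proposition~\ref{prop:CTSparam} shows that the set of solutions generated by $(\mathbf g,\mathbf u,\tau,\mathbf N)$ depends on $\mathbf u$ only through its class $\mathbf u+\Im\ck_{\mathbf g}$, so for $\mathbf u\in\Im\ck_{\mathbf g}$ the solution set coincides with that for $\mathbf u=0$, and item~3 follows.

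Item~2 then follows from item~3 together with Proposition~\ref{prop:CTSL-CTSH}. Since $j_{\mathbf N}$ is a bijection, $j_{\mathbf N}(\mathbf u+\Im\ck_{\mathbf g})=0$ precisely when $\mathbf u\in\Im\ck_{\mathbf g}$; thus CTS-H data $(\mathbf g,0,\tau,\mathbf N)$ and geometric CTS-L data $(\mathbf g,\Im\ck_{\mathbf g},\tau,\mathbf N)$ generate the same set of solutions of the constraints. That set is empty for $\tau\in V_{\mathbf g,\mathbf N}$, which is item~2 with the very same set $V_{\mathbf g,\mathbf N}$. For item~1, given a volume form $\omega$ let $\mathbf N$ be the densitized lapse associated with $\omega$ by Proposition~\ref{prop:Nisomega}, so that $\omega=\tfrac{1}{2N}dV_g$, and set $U_{\mathbf g,\omega}:=V_{\mathbf g,\mathbf N}$. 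Proposition~\ref{prop:1974isCTS}, applied with $\bfsigma=0\in T^*_{\mathbf g}(\calC/\calD_0)$, says that 1974 data $(\mathbf g,0,\tau,\omega)$ and CTS-H data $(\mathbf g,0,\tau,\mathbf N)$ generate identical solution sets; since the latter is empty for $\tau\in V_{\mathbf g,\mathbf N}=U_{\mathbf g,\omega}$, so is the former, giving item~1, and the \emph{moreover} clause holds by the very definition $U_{\mathbf g,\omega}=V_{\mathbf g,\mathbf N}$.

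The step requiring the most care is item~3: one must be certain that passing from the conformal data $(\mathbf g,\mathbf u,\tau,\mathbf N)$ to the representative data $(g_{ab},0,\tau,N)$ neither creates nor destroys solutions of the constraints. This is exactly the content of Propositions~\ref{prop:CTS} and~\ref{prop:CTSparam}: the former guarantees that every solution generated by the conformal data arises from a solution of the representative CTS-L equations, and the latter that the generated solution set is insensitive to modifying $\mathbf u$ by an element of $\Im\ck_{\mathbf g}$ (in particular, to replacing $\mathbf u\in\Im\ck_{\mathbf g}$ by $0$). Beyond Theorem~\ref{thm:IOMb}, no further analytic work is needed, and the set $V_{\mathbf g,\mathbf N}$ can be taken to be the one produced by that theorem for any single chosen representative of $\mathbf g$.
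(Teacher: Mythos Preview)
Your proposal is correct and follows essentially the same approach as the paper: the paper leaves the proof of Theorem~\ref{thm:nogo} as an exercise, but the paragraph immediately preceding it sketches exactly this reduction to Theorem~\ref{thm:IOMb} via the translation recipes of Section~\ref{sec:same}. Your write-up is in fact a careful elaboration of that sketch, organized in the order $3\Rightarrow 2\Rightarrow 1$ rather than the paper's $1\Rightarrow 2\Rightarrow 3$ chain, but the content is the same.
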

Theorem \ref{thm:nogo} is not as specific as Theorem \ref{thm:IOMb} in defining the near-CMC condition because
we currently have a hazy understanding of what this set is. Expression
\eqref{eq:iom} is defined with respect to a particular representative metric, and the set $V_{\mathbf g,\mathbf N}$
can be thought of as taking a union of sets obtained from applying Theorem \ref{thm:IOMb} for each choice of background metric.
The maximal set $V_{\mathbf g, \mathbf N}$ for which Theorem \ref{thm:nogo} applies should be described 
in terms of $\mathbf g$ and $\mathbf N$ directly, and such a description is not yet understood.

\section{Conclusion}\label{sec:conclusion}
We have demonstrated in this paper that there is really only one conformal method.  The CMC conformal method
is a special case of the 1974 method, and the 1974 method has equivalent formulations in terms of the CTS-L
and CTS-H methods.  The parameters of the conformal method are:
\begin{enumerate}
\item a conformal class $\mathbf g$,
\item either a conformal geometric velocity $\mathbf u + \Im \ck_\mathbf g\in T_{\mathbf g}(\calC/\calD_0)$ 
or a conformal geometric momentum $\bfsigma\in T^*_{\mathbf g}(\calC/\calD_0)$,
\item a mean curvature $\tau$,
\item a choice of one member of a family of identifications of $T_\mathbf g\calC$ with $T_\mathbf g^*\calC$.
\end{enumerate}
The choice in item 4) can be made alternatively by selecting a volume form $\omega$ and using the map $k_\omega$ 
from Proposition \ref{prop:TvsTstar}, or by selecting a densitized lapse $\mathbf N$ and using the map
$k_{\mathbf N}$ from Definition \ref{def:iotaN}.  Proposition \ref{prop:Nisomega} shows how to
convert back and forth between $k_{\mathbf N}$ and $k_\omega$, so these are equivalent ways of expressing
the same choice. 
The CTS-L and CTS-H methods make choice 4) explicitly via $\mathbf N=[g_{ab},N]$, whereas 
the 1974 method makes the choice implicitly via $\omega = dV_g$.

The choice of $k_{\mathbf N}$ determines a related identification $j_{\mathbf N}$, given by 
equation \eqref{eq:jN}, between
conformal geometric velocities and momenta. This identification allows one to map back and forth
in item 2) between velocities and momenta: the CTS-L method uses velocities, whereas the 1974 and CTS-H 
methods use momenta, but by using $j_\mathbf N$ these are equivalent ways of expressing the same parameter.

The unifying theme of this paper is the need to clearly distinguish between tangent and cotangent vectors 
in the conformal method, and that there is a choice in the conformal method of how to identify these objects.  This
leads to the question of what this choice corresponds to (physically or otherwise). In fact, these identifications
arise as Legendre transformations in the $n$+1 formulation of gravity when using a densitized 
lapse.  We will return to this point and related results in forthcoming work.

Because the conformal methods are equivalent, a theorem proved for one method determines analogous theorems proved 
for the other methods.  From a practical point of view, however, the CTS-H method seems most expedient to
work with.  Given a choice of CTS-H data $(\mathbf g, \bfsigma, \tau, \mathbf N)$, the
CTS-H equations \eqref{eq:ctsh} can be expressed with respect to any representative background metric,
whereas given 1974 data $(\mathbf g, \bfsigma, \tau, \omega)$, the 1974 conformally parameterized constraint equations \eqref{eq:LCBY}
are written with respect to the unique background metric with $dV_g=\omega$.  This lack of flexibility
led to unnecessary restrictions in the past for theorems proved for the 1974 method, and we saw in Section
\ref{sec:same} how these restrictions can be overcome.  In principle one could express the 1974 method with
respect to an arbitrary background metric, and doing so must lead to the CTS-H equations 
after converting the volume form into a densitized lapse.  So there is little reason to prefer the 
1974 conformally parameterized constraint equations.
The only mild additional difficulty in working
with the CTS-H equations comes from working with the generalized vector Laplacians 
$\Lap_{g,N} = \div_g (2N)^{-1}\ck_g$ instead of the standard vector Laplacian
$\Lap_{g} = \div_g \ck_g$.  But the operators are very closely related and proofs
in the generalized case can typically be obtained by trivially modifying proofs for the standard case.
Moreover, as seen in Proposition \ref{prop:divck},
one can sometimes obtain results for the generalized operators as a corollary of a known results
for the standard vector Laplacian without revisiting the steps of the original proof.

The case for using the CTS-H equations over the CTS-L equations is not especially strong, but there are some
advantages.  The velocity parameter $u_{ab}$ in the CTS-L method is really a representative of the
whole subspace $u_{ab}+\Im \ck_g$, and this makes uniqueness statements a little more cumbersome for the CTS-L method.  
Moreover,
the CTS-H equations \eqref{eq:ctsh} are a little simpler than the CTS-L equations \eqref{eq:CTS}, and
are more familiar for researchers accustomed to working with the 1974 method: simply prepend a $1/(2N)$ in 
front of every conformal Killing operator and proceed as before.

\section*{Acknowledgment}
This work was supported by NSF grant 0932078 000 while I was a resident at 
the Mathematical Sciences Research Institute in Berkeley, California,
and was additionally supported by NSF grant 1263544.
I would like to thank Jim Isenberg for many helpful conversations and especially
for pointing out the scalar curvature restrictions in references 
\cite{Isenberg:1996fi} and \cite{Allen:2008ef}.

\bibliographystyle{amsalpha-abbrv}
\bibliography{ConfAllSame,ConfAllSame-manual}

\end{document}